\newtheorem{theorem}{Theorem}[section]
\newtheorem{corollary}[theorem]{Corollary}
\newtheorem{lemma}[theorem]{Lemma}
\newcommand{\N}{\mathbb{N}}
\newcommand{\Osymbol}{\textnormal{O}}
\newcommand{\EPTAS}{\textnormal{\sffamily EPTAS}\xspace}
\newcommand{\PTAS}{\textnormal{\sffamily PTAS}\xspace}
\newcommand{\FPTAS}{\textnormal{\sffamily FPTAS}\xspace}
\newcommand{\EPTASClass}{\EPTAS \xspace}
\newcommand{\PTASClass}{\PTAS \xspace}
\newcommand{\FPTASClass}{\FPTAS \xspace}
\newcommand{\PARTITION}{\textnormal{\sffamily PARTITION}\xspace}
\newcommand{\NUMERICALKDIEMNSIONAL}[1]{\textnormal{\sffamily NUMERICAL $#1$-DIMENSIONAL MATCHING WITH TARGET SUMS}\xspace}
\newcommand{\MUTUALEXCLUSIONSCHEDULING}{\textnormal{\sffamily MUTUAL EXCLUSION SCHEDULING}\xspace}
\newcommand{\NPClass}{\textnormal{\sffamily NP}\xspace}
\newcommand{\SNPHClass}{\textnormal{\sffamily Strongly NP-hard}\xspace}
\newcommand{\PClass}{\textnormal{\sffamily P}\xspace}
\DeclarePairedDelimiter\ceil{\lceil}{\rceil}
\DeclarePairedDelimiter\floor{\lfloor}{\rfloor}
\newcommand{\completekpartite}[1]{\textit{complete #1-partite}}
\newcommand{\blockgraph}{\textit{block graph}}
\newcommand{\kblockgraph}[1]{\textit{{#1}-block graph}}
\newcommand{\cliques}{\textit{cliques}}
\newcommand{\kcliques}[1]{\textit{{#1}-clique graph}}
\newcommand{\cmaxcost}{C_{\max}}
\newcommand{\optcmaxcost}{C^{OPT}_{\max}}
\newcommand{\machinejobrestriction}{\mathcal{M}_j}
\newcommand{\J}{J}
\newcommand{\M}{M}
\newcommand{\forest}{\textit{forest}}
\newcommand{\blocksymbol}{B}
\newcommand{\succesorssymbol}{D}
\newcommand{\cutvertices}{\textit{cut}}
\newcommand{\treewidth}{\textrm{tw}}
\newcommand{\UnmergedColorings}{RemainingColorSets}
\newcommand{\colorv}{col\_v}
\newcommand{\colorDescv}{col\_desc\_v}
\newcommand{\colorTargetDescv}{col\_target\_desc\_v}
\newcommand{\cardv}{card\_v}
\newcommand{\cardTargetDescv}{card\_target\_desc\_v}
\newcommand{\ColorsOfU}{col\_U}
\newcommand{\ColorsOther}{col\_desc\_U}
\newcommand{\ColorOfu}{col\_u}
\newcommand{\ColorsOtherPrim}{col\_desc\_u}
\newcommand{\ColorsUSrc}{col\_U'}
\newcommand{\ColorsOtherSrc}{col\_desc\_U'}
\newcommand{\ColorsUuTgt}{col\_Uu}
\newcommand{\ColorsOtherTgt}{col\_desc\_Uu}
\newcommand{\cardu}{card\_u}
\newcommand{\cardDescu}{card\_desc\_u}
\newcommand{\ColoringsObtainable}{MergingColorSets}
\newcommand{\ColorsInBlock}{col\_block}
\newcommand{\ColorsOtherInBlock}{col\_desc\_block}
\newcommand{\bettermaxclass}{\ceil{n/(m-1)}}
\newcommand{\NO}{\texttt{NO}}
\newcommand{\YES}{\texttt{YES}}
\begin{document}

\title{Approximation algorithms for job scheduling with block-type conflict graphs}

\author{Hanna Furma\' nczyk\footnote{Institute of Informatics,\ Faculty of Mathematics, Physics and Informatics,\ University of Gda\'nsk,\ 80-309 Gda\'nsk,\ Poland. \ e-mail: hanna.furmanczyk@ug.edu.pl},
Tytus Pikies\footnote{Department of Algorithms and System Modeling, Gda\'nsk University of Technology, Poland.
\ e-mail: tytpikie@pg.edu.pl},
Inka Soko\l{}owska\footnote{Theoretical Computer Science Department, Jagiellonian University, Poland.
\ email: inka.d.sokolowska@gmail.com; krzysztof.szymon.turowski@gmail.com},
Krzysztof Turowski\footnotemark[3] \footnote{Krzysztof Turowski's research was funded in whole by Polish National Science Center $2020$/$39$/D/ST$6$/$00419$ grant. For the purpose of Open Access, the author has applied a CC-BY public copyright license to any Author Accepted Manuscript (AAM) version arising from this submission.}}

\markboth{H. Furma\'nczyk, T. Pikies, I. Soko\l{}owska, K. Turowski}{Approximation algorithms for job scheduling with block-type conflict graphs}
\date{}

\maketitle

\begin{abstract}
The problem of scheduling jobs on parallel machines (identical, uniform, or unrelated), under incompatibility relation modeled as a block graph, under the makespan optimality criterion, is considered in this paper. 
No two jobs that are in the relation (equivalently in the same block) may be scheduled on the same machine in this model.

The presented model stems from a well-established line of research combining scheduling theory with methods relevant to graph coloring.
Recently, cluster graphs and their extensions like block graphs were given additional attention. 	
We complement hardness results provided by other researchers for block graphs by providing approximation algorithms.
In particular, we provide a $2$-approximation algorithm for $P|G = \blockgraph|\cmaxcost$ and a PTAS for the case when the jobs are unit time in addition. 
In the case of uniform machines, we analyze two cases.
The first one is when the number of blocks is bounded, i.e. $Q|G = \kblockgraph{k}|\cmaxcost$.
For this case, we provide a PTAS, improving upon results presented by D. Page and R. Solis-Oba.
The improvement is two-fold: we allow richer graph structure, and we allow the number of machine speeds to be part of the input.
Due to strong NP-hardness of $Q|G = \kcliques{2}|\cmaxcost$, the result establishes the approximation status of $Q|G = \kblockgraph{k}|\cmaxcost$.
The PTAS might be of independent interest because the problem is tightly related to the \NUMERICALKDIEMNSIONAL{k} problem.
The second case that we analyze is when the number of blocks is arbitrary, but the number of cut-vertices is bounded and jobs are of unit time.
In this case, we present an exact algorithm.
In addition, we present an FPTAS for graphs with bounded treewidth and a bounded number of unrelated machines.

The paper ends with extensive tests of the selected algorithms.
\end{abstract}

\noindent{\textbf{Keywords:}
block graphs, bounded treewidth graphs, scheduling, identical machines, uniform machines, unrelated machines, incompatibility/conflict graph}

The sources of wealth in many current societies are skills and knowledge of the employees. 
To increase welfare, obstacles blocking utilization of the skills and potential should be removed. 
Imagine that there is some corporation lacking in the fields of inclusivity and diversity; hence, there is a huge risk that some talents are wasted.
To remedy the situation, the corporation decided to hire a consulting firm, at which we work.
At our disposal there is a group of inclusivity officers who can be assigned to perform group trainings.
The corporation has a hierarchical structure: a person can have some co-workers, subordinates, and a superior.
We like to assign the employees to the instructors in a way that no instructor will train persons that are in a direct business relationship.
This is due to the fact, that it will be good to avoid mixing the inclusivity problems with business hierarchy -- the hierarchy might interfere with learning. 
Also, providing a more confidential environment is valuable.
How to perform the training to minimize the time of solving the inclusivity problems, if:
\begin{enumerate}
	\item all the officers have equal skills and the persons require a similar amount of training,
	\item the employees require a different amount of training, 
	\item the skills of the officers are diverse, but can be grouped within teams,
	\item some employees have special needs and some officers can more efficiently train them?
\end{enumerate} 


We consider the problem of makespan minimization for job scheduling on parallel machines with a conflict graph.
Formally, an instance of the problem is characterized by $n$ jobs $\J=\{J_1,\ldots,J_n\}$, a set of $m$ machines $\M=\{M_1,\ldots,M_m\}$, and a \textit{conflict graph} $G=(\J, E)$, also known in the literature as the \textit{incompatibility graph}. 
The set of vertices of $G$ is exactly the set of jobs, and two vertices are adjacent if and only if the corresponding jobs cannot be processed on the same machine.
Due to the diverse machine environment, we define the \textit{processing time} of the job $J_j$ on the machine $M_i$, $i \in \{1, \ldots, m\}$.
\begin{itemize}
		\item   \textit{Identical machines}, denoted by $P$.
		        Here, the processing time of a job $J_j \in \J$ on every machine $M_i$ is identical and equal to $p_j \in \N$.
		\item   \textit{Uniform machines}, denoted by $Q$.
		        In this variant, $M_i$ runs with speed $s_i \in \N^+$ and each job $J_j \in \J$ has a processing requirement, denoted by $p_j \in \N^+$.
		        The processing time of $J_j$ on $M_i$ is equal to $p_j/s_i$.
			    For brevity, we assume that $s_1 \geq \cdots \geq s_m$.
		\item   \textit{Unrelated machines}, denoted by $R$. Here, the processing time of a job depends on a machine in an arbitrary way.
				In this variant, there are given $mn$ values $p_{i,j} \in \N^+$, defining the processing time of $J_j$ on $M_i$.
	\end{itemize}
A schedule is a function $\sigma\colon \J \rightarrow \M$ -- we assume that the machines start at once and process the jobs, without interruption, in any order.
For a given $\sigma$, a \textit{processing time} of a machine is the total processing time of jobs assigned to the machine.
The makespan of $\sigma$, denoted by $\cmaxcost(\sigma)$, is the maximum value among all processing times of the machines.
For a given instance of a scheduling problem, the smallest value of the makespan among all schedules is denoted by $\optcmaxcost$. 
For consistency with the scheduling literature, throughout the paper we mostly use $p_j$, $s_i$, and $p_{i, j}$. However in places where it is more convenient we also refer to $p$ and $s$ as functions from the sets of tasks, machines, or their Cartesian products, respectively, into the set of natural numbers.

We use the three-field Lawler notation $\alpha|\beta|\gamma$ \cite{lawler1982recent} with $\alpha\in \{P,Q,R\}$ describing types of machines, $\beta$ representing properties of jobs such as their processing times or their conflict graph, and
$\gamma$ denoting the objective, in our case $\gamma =\cmaxcost$.

Task scheduling problems can often be expressed in the language of graph coloring.
A $k$-\textit{coloring} of a graph $G=(V,E)$ is a function $c\colon V(G) \to \{1, \ldots, k\}$. 
A \textit{proper} coloring is one where 
no two adjacent vertices have been assigned the same color.
An \textit{equitable} coloring is a proper one such that the cardinalities of any two color classes differ by at most one. 
There is a natural relation between a schedule in the considered model and a coloring of a conflict graph.
In particular, if there exists an equitable $m$-coloring $c$, then $c$ determines an optimal schedule for the scheduling problem.

In the paper, we focus on block graphs since this class of graphs best reflects the hierarchical structure in corporations which is of great importance in our application.
A graph is a \textit{block graph} (also called  \textit{clique tree}) if every maximal $2$-connected component is a clique \cite{harary1963characterization}. 
Any maximal clique is called a \textit{block}.
By an abuse of the notation, we use the same term (block) for the set of the vertices inducing a block in a graph.
Under such a definition every vertex belongs to at least one block. 
If a vertex is a member of exactly one block, we call it \textit{simplicial vertex}, otherwise it is a \textit{cut-vertex}.
Additionally, by $\kblockgraph{k}$ we mean a block graph with at most $k$ blocks. 
Note that if $G$ is $\kblockgraph{k}$, then $\cutvertices(G) \le k - 1$, where $\cutvertices(G)$ denotes the number of cut-vertices in $G$. Note that there is no converse relation of this sort: for example, stars have exactly one cut-vertex (their root), but they have exactly $n - 1$ blocks.
Block graphs have a very handy representation called a \textit{block-cut tree} $T_G = (V_B \cup V_{cut}, E_G)$ (cf. \Cref{fig:block-cut-tree}).
Here, there are vertices of two types: $V_B$ is a set of vertices representing all blocks in $G$, and $V_{cut}$ is a set of vertices representing all cut-vertices.
An edge $\{u, v\}$ belongs to $E_G$ if and only if $u \in V_B$, $v \in V_{cut}$ and the cut-vertex represented by $v$ is contained in the block represented by $u$.
For simplicity, if $B \in V_B$ in $T_G$ then we also use $B$ to describe the relevant set of the vertices in $G$.
Similarly, we directly identify vertices in $V_{cut}$ with the respective cut-vertices in $G$.
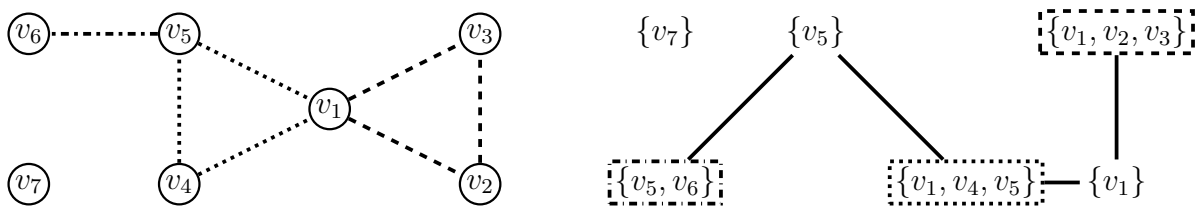
\begin{figure}[htpb]
	\centering
	\begin{subfigure}{0.45\textwidth}
		\begin{tikzpicture}
		[place/.style={circle,draw=black!100,line width=0.3mm,inner sep=1pt,minimum size=4mm},
		myline/.style={line width=0.5mm},
		matching/.style={line width=0.5mm},scale=2.0]
		\node at ( 0,1.5)   (v1) [place] {$v_1$};
		\node at ( 1, 1)   (v2) [place] {$v_2$};		
		\node at ( 1 ,2)   (v3) [place] {$v_3$};	
		
		\node at ( -1, 1)   (v4) [place] {$v_4$};
		\node at ( -1, 2)   (v5) [place] {$v_5$};
		
		\node at ( -2, 2)   (v6) [place] {$v_6$};
		\node at ( -2, 1)   (v7) [place] {$v_7$};
		
		\draw [myline, dashed]  (v1) -- (v3);
		\draw [myline, dashed]  (v1) -- (v2) ;
		\draw [myline, dashed]  (v2) -- (v3);
		
		\draw [myline, dotted]  (v4) -- (v5);
		\draw [myline, dotted]  (v4) -- (v1);
		\draw [myline, dotted]  (v5) -- (v1);
		
		\draw [myline, dash dot]  (v5) -- (v6);	
		\end{tikzpicture}
	\end{subfigure}
        \quad
	\begin{subfigure}{0.45\textwidth}
		\begin{tikzpicture}
		[place/.style={inner sep=2pt,minimum size=5mm,line width=0.5mm},
		myline/.style={line width=0.5mm},
		matching/.style={line width=0.5mm},scale=2.0]
		\node at ( -1, 2)   (v56) [rectangle,draw=black,dash dot,place] {$\{v_5,v_6\}$};
		\node at ( -1, 3)   (v7) [place] {$\{v_7\}$};
		
		\node at ( 0, 3)   (v5) [place] {$\{v_5\}$};
		
		\node at ( 1, 2)   (v145) [rectangle,draw=black,dotted,place] {$\{v_1, v_4, v_5\}$};
		
		\node at ( 2, 2)   (v1) [place] {$\{v_1\}$};
		
		\node at ( 2, 3)   (v123) [rectangle,draw=black,dashed,place] {$\{v_1, v_2, v_3\}$};
		
		\draw [myline]  (v56) -- (v5);
		\draw [myline]  (v5) -- (v145);
		\draw [myline]  (v145) -- (v1);
		\draw [myline]  (v1) -- (v123);
		\end{tikzpicture}
	\end{subfigure}
 \caption{A block graph $G$ (left) and its block-cut forest $T_G$ (right).  The respective blocks are marked with the same line type.}
 \label{fig:block-cut-tree}
\end{figure}
It is often convenient to assume that we work on rooted representations of $T_G$. 
Note that if block graph $G$ is not connected, then this representation is a forest, with one tree per every connected component of $G$.
Finally, observe that using a block-cut tree a block graph can be encoded in space $\Osymbol(n)$. 
A vertex representing a block corresponds to a list of vertices in the block. 
The total number of vertices that are not cut-vertex is $\Osymbol(n)$ and they contain $\Osymbol(n)$ vertices in total.
The total number of cut-vertices is $\Osymbol(n)$ due to the fact that each is associated with at least one distinct block. 

Let us return to the inclusivity problem presented earlier, and observe that it can be well modeled in the presented framework as a scheduling problem on block graphs:
\begin{enumerate}
    \item a training of the person can be modeled as a job,
    \item an instructor can be modeled as a machine,
    \item a hierarchy of employees can be modeled as a block graph:
a team has usually one leader, but each person in a team can be a leader for another team, lower in the hierarchy, hence a team can be represented as a block,
    \item the criterion of solving the problems as soon as possible can be modeled by $\cmaxcost$.
\end{enumerate}
This is only one sample application of scheduling in the presented model.
The model with other classes of graphs has many applications like scheduling jobs on a system with unstable power supply \cite{jansen2021total}, providing medical services during emergency \cite{pikies2022scheduling}, producing software under tight quality requirements \cite{pikies2022schedulingbiparite}.
Also, for an earlier review of possible use cases see \cite{KowalczykLAnExact2017}.

\section{Previous work and our results}

We refer the reader to standard textbooks for other notions of graph theory, scheduling theory, and approximation theory e.g. \cite{brucker2006scheduling,diestel2005graph,AusielloSCGPK99,cygan2015parameterized} for an overview of the respective topics. For more information about approximation schemes classes (\PTAS, \EPTAS, \FPTAS) see also \cite{EpsteinS2004ApproximationSchemes,jansen2019eptas,kones2019unified}.

Graham provided one of the first analyses of a scheduling problem, proving that the list scheduling is a constant approximation ratio algorithm for $P||\cmaxcost$ \cite{graham1966bounds}.
It is well-known that $P2||\cmaxcost$ problem is equivalent to the NP-hard  \PARTITION problem \cite{gareyJ1979computers}. 
However, it was also proved that $Q||\cmaxcost$ (and therefore $P||\cmaxcost$ as well) admits a polynomial time approximation scheme (PTAS) \cite{hochbaum1988} and for $Rm||\cmaxcost$ there exists a fully polynomial time approximation scheme (FPTAS) \cite{horowitz1976}.
Moreover, $Q|p_j = 1|\cmaxcost$ can be solved optimally in $\Osymbol(\min\{n + m \log{m}, n \log{m}\})$ time \cite{dessouky1990scheduling}.
For the most general case, namely $R||\cmaxcost$, a $(2 - \frac{1}{m})$-approximation algorithm was provided in \cite{shchepin2005optimal}. 
On the other hand, it was proved that there is no polynomial algorithm with an approximation ratio better than $\frac{3}{2}$, unless $\PClass = \NPClass$ \cite{lenstra1990approximation}.

The idea of incompatibility relation and conflict graph as understood in this paper was introduced by Bodlaender and Jansen in \cite{bodlaender1993complexity}, and first significantly developed by Bodlaender et al. in \cite{bodlaender1994scheduling}.
These authors developed a $2$-approximation algorithm for $P|G=bipartite|\cmaxcost$ running in $\Osymbol(n \log{m}$) time, and showed that there cannot exist a polynomial-time approximation algorithm for approximating the optimum makespan with worst-case ratio $2-\varepsilon$ for any fixed $\varepsilon>0$ unless $\PClass = \NPClass$, while for the special case of $P|G=tree|\cmaxcost$ there exists an algorithm in said running time finding a solution with length at most $5/3$ times the length of the optimum makespan.
They also showed that PTAS for the problems $P2|G|\cmaxcost$ and $Pm|\treewidth(G)\le k|\cmaxcost$ -- and we note that the latter result is especially relevant for this paper since $Pm|G=\blockgraph|\cmaxcost$ is a subproblem of $Pm|\treewidth(G)\le m|\cmaxcost$.

Clearly, each forest is also a block graph with every block corresponding to an edge. As \cite{pikies2022scheduling} noted, the problem $P|G, p_j = 1|\cmaxcost$ is closely tied to \MUTUALEXCLUSIONSCHEDULING problem, in which we are looking for a schedule so that no two jobs are connected by an edge in $G$ are executed at the same time.
Thus, from a polynomial time optimal algorithm for \MUTUALEXCLUSIONSCHEDULING problem, presented in \cite{bakerCMutualExclusion1996}, they inferred that $P|G = \forest, p_j=1|\cmaxcost$ can be solved optimality in polynomial time.

For the above results and throughout this paper there is assumed a customary notion that the number of jobs and machines are encoded in unary form, and that a schedule is a function from tasks to machines. Additionally, our encoding of choice for a block graph as a part of an input instance is a block-cut tree, defined above (see \Cref{fig:block-cut-tree}).
However, let us point out that there exists also another approach to encoding input instance and solution which results in a different boundary between easy and hard problems. In \cite{mallek2022scheduling} the authors assumed that the number of jobs and machines is encoded in binary (and the information about the conflict graph is provided in some concise form), and thus by reduction from \PARTITION problem they proved that $P2|G = \forest, p_j=1|\cmaxcost$ is NP-hard, but still if we applied a pseudopolynomial algorithm for \PARTITION we would get an algorithm with running time $\Osymbol(n m)$. We also note that the same difference occurred before for $Q|G = \completekpartite{2}, p_j = 1|\cmaxcost$ problem: on the one hand in \cite{pikies2022scheduling} it was proved that it admits a $\Osymbol(m n^3 \log(mn))$ algorithm when number of jobs and machines is encoded in unary form, but on the other hand \cite{mallek2019scheduling} showed that this problem is NP-hard, when $n$ and $m$ are encoded in binary, and the conflict graph is encoded in $\Osymbol(poly(\log{n}))$ bits.

Let us denote by $\cliques$ a graph consisting of disjoint cliques and by $\kcliques{k}$ a $\cliques$ with at most $k$ connected components.
It is natural to view $\blockgraph$ and $\kblockgraph{k}$ as a generalization of $\cliques$ and $\kcliques{k}$, respectively.
An EPTAS for $P|G=\cliques|\cmaxcost$ was provided by \cite{grage2019eptas}, improving upon the PTAS given by Das and Wiese \cite{das2017minimizing}.
In \cite{page2020makespan} it was proved that $Q|G=\cliques,p_j=1,\machinejobrestriction|\cmaxcost$ (where $\machinejobrestriction$ denotes that we additionally restrict the subset of machines available for each job) can be solved in polynomial time, while $Q|\kcliques{2}|\cmaxcost$  is strongly NP-hard; the hardness result can be easily extended to any fixed number of cliques.
The case of $R|G=\cliques|\cmaxcost$ has also been studied in  \cite{das2017minimizing}.
In that paper, it was shown that there is no constant approximation algorithm in the general case and even in some more restricted cases. 

Finally, let us turn our attention to the equitable coloring problem.
It was proved that this problem is strongly NP-hard for block graphs \cite{santos2019parameterized}, and as a direct consequence the problem $P|G=\blockgraph, p_j=1|\cmaxcost$ is strongly NP-hard as well.

Against this background, we focus our attention on the conflict graph being a block graph and we consider the problem of task scheduling with such a conflict graph in different variants. 
As it was mentioned above, already the problem of unit-time task scheduling on identical machines with the conflict block graph is strongly NP-hard, thus we mostly focus on finding good approximation algorithms for hard problems.

The organization of the paper is as follows:
we start in \Cref{sec:identical} with scheduling on identical machines. 
First, we give a $2$-approximation algorithm for the general case with $\Osymbol(n \log{m})$ running time.
Then, we consider the narrow case $P|G = \blockgraph, p_j = 1|C_{\max} \le k$ and develop a polynomial time algorithm for it.
These two algorithms, in turn, are important subprocedures in a PTAS for $P|G = \blockgraph, p_j = 1|\cmaxcost$.
In \Cref{sec:uniform} we turn our attention to the uniform machines. We analyze two cases: when the number of blocks is bounded and when this number is arbitrary, but the number of cut-vertices is bounded and jobs are of unit time. In the former case we provide a PTAS and in the latter one we devise an exact algorithm running in polynomial time. Note that the first mentioned result in a sense is the best possible, due to the strong NP-hardness of the problem \cite{page2020makespan}.
In \Cref{sec:unrelated} we consider unrelated machines and we provide an FPTAS for graphs with bounded treewidth and a bounded number of machines, which also captures a class of block graphs with bounded clique number.
Finally, in \Cref{sec:experiments} we present the results of extensive tests of our algorithms. 

Thus, our approximation algorithms and schemes extend the knowledge in the area of job scheduling with conflict graphs.
To help a reader, we provide a summary of our results in \Cref{tab:results}, together with previously known results for the relevant classes of conflict graphs.

\begin{table*}[!htb]
	\renewcommand{\arraystretch}{1.15}
	\setlength{\tabcolsep}{2pt}
	\center
	\footnotesize
	\begin{tabular}{|l l|c|c|c|c}
		\cline{1-5}
		\multicolumn{2}{|c|}{ \multirow{2}{*}{\bfseries Problem}} &  \multicolumn{2}{c|}{\bfseries Status} & \multirow{2}{*}{\bfseries Reference} \\
		\cline{3-4}
		&                           & \bfseries Approximation Ratio & \bfseries Time Complexity &                                       \\
		\cline{1-5}
  $P|G = \forest, p_j=1$ & $|\cmaxcost$ & optimal & polynomial & Baker, Coffman Jr. \cite{bakerCMutualExclusion1996} \\
  \cline{1-5}
  $P|G=\cliques$ & $|\cmaxcost$ & $1+\varepsilon$ & \EPTASClass & Grage et al. \cite{grage2019eptas}\\
  \cline{1-5}
		\multirow{2}{*}{$P|G = \blockgraph, p_j=1$} & \multirow{2}{*}{$|\cmaxcost$} & \multicolumn{2}{c|}{\SNPHClass} & Gomes et al. \cite{santos2019parameterized} \\ 
		\cline{3-5}
		&                              & $1+\varepsilon$ & \PTASClass & \bfseries \Cref{thm:ptas_identical_unit_block} \\ 
		\cline{1-5}
  $P|G = \blockgraph, p_j = 1$ & $|C_{\max} \le k$ & optimal & $\Osymbol(n^2 k^3 m^{7k + 8})$& \bfseries \Cref{theorem:final-k-coloring-block-graphs}
  \\
    \cline{1-5}
    $P|G = \blockgraph$ & $|\cmaxcost$ & $2$ & $\Osymbol(n \log{m})$ & \bfseries \Cref{thm:2apx_identical_block} \\ 
  \cline{1-5}
  $Pm|\treewidth(G)\le k$ & $|\cmaxcost$ & $1+\varepsilon$ & \FPTASClass & Bodlaender et al. \cite{bodlaender1994scheduling} \\
		\hhline{|==|==|=|}
		$Q|G = \blockgraph, cut(G) \le k, p_j = 1$ & $|\cmaxcost$ & optimal & $\Osymbol(m^{k + 2} n^2 \log (m n) )$ & \bfseries \Cref{thm:uniform_block_unit_cut} & \\ 
		\cline{1-5}
  $Q|G=\kcliques{2}$ & $|\cmaxcost$ & \multicolumn{2}{c|}{\SNPHClass} & Page, Solis-Oba \cite{page2020makespan} \\
  \cline{1-5}
		$Q|G = \kblockgraph{k}$ & $|\cmaxcost$ & $1+\varepsilon$ & \PTASClass & \bfseries \Cref{thm:uniform_kblock_cut} & \\ 
		\cline{1-5}
		\hhline{|==|==|=|}
  $R|G=\cliques$ & $|\cmaxcost$ & \multicolumn{2}{c|}{no constant approximation algorithm} & Das, Wiese \cite{das2017minimizing}\\
  \cline{1-5}
		$Rm|\treewidth(G)\le k$ & $|\cmaxcost$ & $1+\varepsilon$ & \FPTASClass & \bfseries \Cref{thm:arbitrary_tw} & \\
		\cline{1-5}
	\end{tabular}
	\caption
	{
		Summary of results relevant for scheduling with block conflict graphs. The results proved in this paper are highlighted in bold.
	}
	\label{tab:results}
\end{table*}

\section{Identical machines}
\label{sec:identical}

\subsection{General case}\label{general:2approx}
First, we present a $2$-approximation algorithm for $P|G = \blockgraph|\cmaxcost$ problem.
In fact, we prove that the following greedy algorithm works: let us perform a pre-order traversal of the block-cut tree $T_G$.
For each vertex representing a block of $G$ in $T_G$, we sort the machines non-decreasingly according to their current loads, and we assign to them the jobs from the current block, sorted by their non-increasing processing times. Additionally, we skip the machine to which the job represented by the cut-vertex associated with the parent of the current node in $T_G$ is assigned.
To achieve better running time in \Cref{alg:2apx_identical_block}, instead of sorting the machines at each step we keep a min-heap $H$ of machines ordered by their current loads and update it for every processed block.

\begin{algorithm}[htpb]
\begin{algorithmic}
		\REQUIRE A set of jobs $\J$, a set of machines $\M$, a block graph $G$, a function $p\colon \J \to \N$.
		\STATE Find the block-cut forest $T_G = (V_B \cup V_{cut}, E_G)$ of $G$.
		\STATE Initialize a min-heap $H$ with $(M_i, 0)$ for all $M_i \in \M$.\hfill\COMMENT{All machines are initially empty.}
		\STATE $V_B^O \gets$ an ordering of $V_B$ given by the pre-order traversal of all components of $T_G$
		\FORALL{$B \in V_B^O$}
		\STATE Sort jobs $J_j \in B$ in $G$ by their $p_j$ non-increasingly as $L_J$
		\STATE Retrieve $|B|$ machines with the smallest current loads from $H$ as $L_M$
		\IF{$B$ has a parent $u$ in $T_G$ already scheduled to machine $M'$}
		\IF {$M'\in L_M$}
		\STATE Remove $M'$ from $L_M$ and add it with its unmodified load to $H$
		\ELSE
		\STATE Remove the last machine from $L_M$ and add it with its unmodified load to $H$
		\ENDIF
		\STATE Remove $u$ from $L_J$
		\ENDIF
		\FOR{$i = 1, 2, \ldots, |L_M|$}
		\STATE Assign $i$-th job from $L_J$ to $i$-th machine from $L_M$ (updating its load)
		\STATE Add the $i$-th machine from $L_M$ with its current load to $H$
		\ENDFOR
		\ENDFOR
	\end{algorithmic}
	\caption{A greedy algorithm for $P|G=\blockgraph|\cmaxcost$}
	\label{alg:2apx_identical_block}
\end{algorithm}

Let $C_j(M_i)$ be the load of $M_i$ after processing $j$-th block in the sequence $V_B^O$, i.e. in an ordering of $V_B$ given by the pre-order traversal of all components of $T_G$. Let also $C_j = \frac{1}{m} \sum_i C_j(M_i)$.
To arrive at the approximation ratio we need the following lemma.
\begin{lemma}
	For any $i \in \{1, \ldots, m\}$ and for any $j \in \{0, \ldots, k\}$, the strategy preserves the invariant $C_j(M_i) \le C_j + \max\{C_j, p_{max}\}$. 
	\label{lem:strategy}
\end{lemma}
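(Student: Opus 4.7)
The plan is to proceed by induction on the block index $j$.

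For the base case $j = 0$, all machines start idle, so $C_0(M_i) = 0$ and $C_0 = 0$, and the invariant reduces to $0 \le p_{max}$.

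For the inductive step, assume the invariant holds at step $j - 1$, and consider step $j$ processing a block $B$ with the machine set $L_M$ and job set $L_J$ determined by \Cref{alg:2apx_identical_block}. If $M_i \notin L_M$, its load does not change, so by the induction hypothesis $C_j(M_i) = C_{j-1}(M_i) \le C_{j-1} + \max\{C_{j-1}, p_{max}\} \le C_j + \max\{C_j, p_{max}\}$, using that $C_j \ge C_{j-1}$. If $M_i \in L_M$, then $M_i$ receives a job $J^*$ with $p_{J^*} \le p_{max}$, so $C_j(M_i) = C_{j-1}(M_i) + p_{J^*}$. I would split on whether $C_{j-1}(M_i) \le C_{j-1}$: in the easy sub-case, $C_j(M_i) \le C_{j-1} + p_{max} \le C_j + \max\{C_j, p_{max}\}$. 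Otherwise, $M_i$ lies at a position $r \ge 2$ in $L_M$ sorted non-decreasingly by load and, by the inner sort of the algorithm, is assigned the $r$-th largest job in $L_J$. I would then exploit the fact that at least $m - r + 1$ machines have loads $\ge C_{j-1}(M_i) > C_{j-1}$ (the $|L_M| - r$ machines of $L_M$ above position $r$, plus the $m - |L_M|$ machines outside $L_M$, with a small adjustment for the parent-cut-vertex branches of the algorithm) to derive a position-dependent upper bound on $C_{j-1}(M_i)$, paired with the bound $p_{J^*} \le m(C_j - C_{j-1})/r$ that stems from $J^*$ being the $r$-th largest of $|L_J|$ jobs.

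The main obstacle is this last sub-case: the naive estimate ``apply the IH and add $p_{J^*}$'' only yields $C_j(M_i) \le C_{j-1} + \max\{C_{j-1}, p_{max}\} + p_{max}$, which may exceed $C_j + \max\{C_j, p_{max}\}$ because a single block can contribute very little to the average $C_j$ while still charging up to $p_{max}$ to an individual machine. The resolution is the position-dependent refinement sketched above, whose conclusion must be checked across the three regimes $p_{max} \ge C_j$, $C_{j-1} \le p_{max} < C_j$, and $p_{max} < C_{j-1}$ to ensure the resulting bound sits below $C_j + \max\{C_j, p_{max}\}$ in each; the parent-cut-vertex handling then contributes only a minor bookkeeping layer, because $L_M$ still consists of machines whose loads are among the $|B|$ smallest in the heap $H$.
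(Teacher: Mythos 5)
Your overall plan is the same as the paper's: induction on $j$, dismissal of machines that receive no job, a position-dependent bound on $C_{j-1}(M_i)$ obtained by counting machines that are at least as loaded, and a lower bound on the growth of the average load coming from the $r$ jobs of size at least $p_{J^*}$ assigned in the block. The paper, however, actually carries out the combination of these two facts (with $i$ the global rank it derives $C_{j-1}(M_i)\le \frac{m}{m-i+1}C_{j-1}$ and $C_j\ge C_{j-1}+\frac{(i-1)p}{m}$ and then completes a convex-combination estimate against $\max\{C_j,p_{max}\}$), whereas you stop exactly at that decisive step: you assert that the conclusion ``must be checked across the three regimes'' of $p_{max}$ versus $C_{j-1},C_j$ without performing the check. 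This is a genuine gap, because the check cannot succeed from the two facts you allow yourself in that branch. When $r$ is large your load bound degenerates: for $r=m$ it only says $C_{j-1}(M_i)\le m\,C_{j-1}$, and the configuration $m=10$, pre-loads $(0,\dots,0,10)$ (so $C_{j-1}=1$), a block of ten unit jobs, $p_{max}=1$, satisfies everything you use in the hard branch ($C_{j-1}(M_i)=10>C_{j-1}$, $C_{j-1}(M_i)\le \frac{m}{m-r+1}C_{j-1}$, and $p=1\le m(C_j-C_{j-1})/r$), yet $C_j(M_i)=11>4=C_j+\max\{C_j,p_{max}\}$. Such a pre-load is of course unreachable by \Cref{alg:2apx_identical_block}, but the only available fact excluding it is the induction hypothesis on $C_{j-1}(M_i)$, which you explicitly abandon in this branch; so a case split driven by where $p_{max}$ sits cannot close the argument. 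The split has to be driven by the position itself: for small $r$ your two bounds combine as in the paper's displayed chain, while for large $r$ one must re-invoke the induction hypothesis together with $C_j\ge C_{j-1}+\frac{rp}{m}$, and writing out this interplay is where essentially all the content of the lemma lies.

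Two smaller inaccuracies in the setup: the hypothesis $C_{j-1}(M_i)>C_{j-1}$ does not force $r\ge 2$, since when the parent cut-vertex's machine is the least loaded and is skipped, the machine at position $1$ of $L_M$ can already be above the average; and for the same reason the count of machines with load at least $C_{j-1}(M_i)$ can drop to $m-r$ rather than $m-r+1$. Both are easy to repair, but they should be stated rather than waved at, since the quantitative constants enter the final inequality.
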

\begin{proof}
	We proceed by induction on $j$.
	Clearly, for $j = 0$, the claim holds trivially, since $C_0(M_i) = 0$ for every machine $M_i$.
	
	Let us assume that it holds for all $j' = 0, \ldots, j - 1$.
	Without loss of generality let us assume that the machines were numbered before processing $j$-th block according to the non-decreasing sum of loads (with ties broken arbitrarily). In particular, we assume that $C_{j - 1}(M_i)$ was the $i$-th smallest number in the multiset $\{C_{j - 1}(M_l)\}_{l = 1}^m$.
	We only need to prove the induction step for the single $i$ such that $C_j(M_i)$ is the largest among all $i \in \{1, 2, \ldots, m\}$ -- as all the other machines have loads $C_j(M_l) \le C_j(M_i)$.
	
	Clearly if we do not assign any job from $j$-th block to $M_i$, then $C_j(M_i) = C_{j - 1}(M_i)$, $C_j \ge C_{j - 1}$, and the invariant holds since
	\begin{align*}
		C_j(M_i) - C_j & \le C_{j - 1}(M_i) - C_{j - 1} 
		 \le \max\{C_{j - 1}, p_{max}\} \le \max\{C_j, p_{max}\}.
	\end{align*}
	
	Consider now a job from $j$-th block with processing time $p$ at the moment when we assign it to $M_i$. Clearly, $C_j(M_i) = C_{j - 1}(M_i) + p$.
	
	Moreover, by the ordering of machines we have $C_{j - 1}(M_l) \ge C_{j - 1}(M_i)$ for all $l \in \{i, \ldots, m\}$.
	Thus,
	\begin{align*}
		C_{j - 1}(M_i) \le \frac{m}{m - i + 1} C_{j - 1}.
	\end{align*}
	
	On the other hand, clearly $C_j \ge C_{j - 1} + (i - 1) p/m$, since if we assigned another job to $M_i$, it has to be the case that all machines in set $\{M_1, \ldots, M_i\}$ but at most one (removed from $L_M$) got assigned new jobs with processing times at least $p$. Thus,
	\begin{align*}
		C_j(M_i) & = C_{j - 1}(M_i) + p \le \frac{m}{m - i + 1} C_{j - 1} + p 
		 \le \frac{m}{m - i + 1} \left(C_j - \frac{(i - 1) p}{m}\right) + p 
	   \le \frac{m}{m - i + 1} C_j + \left(1 - \frac{i - 1}{m - i + 1}\right) p \\
		& \le C_j + \frac{i - 1}{m - i + 1} C_j + \left(1 - \frac{i - 1}{m - i + 1}\right) p_{max}
		\le C_j + \max\{C_j, p_{max}\}.
	\end{align*}
\end{proof}

\begin{theorem}
	There exists a $2$-approximation algorithm for $P|G=\blockgraph|\cmaxcost$.
	\label{thm:2apx_identical_block}
\end{theorem}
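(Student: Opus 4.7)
The plan is to derive the $2$-approximation bound directly from \Cref{lem:strategy} combined with two standard lower bounds on $\optcmaxcost$.

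First, I would verify that \Cref{alg:2apx_identical_block} produces a feasible schedule: for every block $B \in V_B^O$, the jobs of $B$ are assigned to $|B|$ pairwise distinct machines, and the branch that removes from $L_M$ either the machine $M'$ already hosting the parent cut-vertex $u$ or, failing that, the last machine in $L_M$, together with the removal of $u$ itself from $L_J$, guarantees that $u$ stays on $M'$ while the remaining $|B|-1$ jobs of $B$ are placed on different machines. Since blocks are exactly the maximal cliques of $G$, no two conflicting jobs share a machine and the output is a valid schedule for $P|G=\blockgraph|\cmaxcost$.

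Next, I would apply \Cref{lem:strategy} at $j = k := |V_B|$, which yields $C_k(M_i) \le C_k + \max\{C_k, p_{max}\}$ for every machine $M_i$. Two elementary lower bounds finish the argument: since $C_k = \frac{1}{m} \sum_{j=1}^{n} p_j$ equals the average machine load of \emph{every} schedule of this instance, we have $C_k \le \optcmaxcost$; and since $p_{max}$ is the processing time of a single job that must be assigned in full to some machine, $p_{max} \le \optcmaxcost$. Combining these gives $\max_i C_k(M_i) \le C_k + \max\{C_k, p_{max}\} \le 2\optcmaxcost$, which is exactly the claimed approximation ratio.

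The running time bound $\Osymbol(n \log m)$ follows from a routine implementation analysis: the block-cut forest is constructed in $\Osymbol(n)$ time from the input encoding, each heap operation on $H$ takes $\Osymbol(\log m)$ time, and the cumulative number of heap pushes and pops across all blocks is $\Osymbol(n)$ because each job contributes only a constant number of operations and cut-vertices are touched twice in their incident blocks. I do not expect a substantive obstacle, since \Cref{lem:strategy} already does the heavy lifting; the only mild care needed is in bookkeeping the heap operations along the pre-order traversal and in the feasibility check around the parent cut-vertex.
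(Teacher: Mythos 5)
Your proposal is correct and follows essentially the same route as the paper: apply \Cref{lem:strategy} at $j = k$ (the number of blocks), and combine it with the two lower bounds $C_k \le \optcmaxcost$ and $p_{max} \le \optcmaxcost$ to conclude $\cmaxcost(S) = \max_i C_k(M_i) \le 2\optcmaxcost$. The added feasibility check and the $\Osymbol(n \log m)$ running-time accounting match what the paper states around the theorem, so there is no substantive difference.
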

\begin{proof}
	The claim follows from the lemma above since for a block graph with $k$ blocks it holds both that $C_k \le \optcmaxcost$ and $p_{max} \le \optcmaxcost$ -- and for the constructed schedule $S$, $\cmaxcost(S) = \max_i C_k(M_i)$.
\end{proof}

The hard example which achieves the approximation ratio equal to $2 - \frac{1}{m}$ is as follows: let us have a conflict graph with $m$ cliques with $m - 1$ jobs with processing times equal to $\frac{p}{m}$ each, and one clique with one job with processing time $p$.
Clearly, in the optimum solution we assign the first cliques to $m - 1$ machines and the last one to the last machine, thus $\optcmaxcost = p$.

On the other hand, if we process the blocks in this order, then \Cref{alg:2apx_identical_block} after processing the first $m$ cliques assigns to each machine an equal load of $p \left(1 - \frac{1}{m}\right)$.
Thus, regardless of where the last job is assigned, the total makespan is equal to $p \left(2 - \frac{1}{m}\right)$.

Observe that the algorithm can be implemented in $\Osymbol(n \log{m})$ time. 
At each $B \in V_B$ with $|B|$ jobs we sort them in $\Osymbol(|B| \log{|B|})$ time. 
Then, we retrieve the first $|B|$ machines from the heap (with checking and omitting the one associated with a job represented by cut-vertex above) in $\Osymbol(|B| \log{m})$ time. 
Finally, we assign the jobs to the machines in the proper order and re-add the machines with updated loads to the heap in $\Osymbol(|B| \log{m})$ time.
Since $\forall_{B \in V_B} |B| \le m$ and $\sum_{B \in V_B} |B| = n$, the total running time follows directly.

\subsection{Unit time jobs, bounded makespan} \label{unit:bounded}
Let us first consider a version of the problem with a restricted makespan, i.e. $P|G = \blockgraph, \allowbreak p_j = 1|\cmaxcost \le k$. 
Note that it is equivalent to deciding whether there exists a coloring of $G$ where there are at most $k$ vertices in any color.
This problem may look very narrow, but it is an essential stepping stone toward considerations of the general problem for unit-time jobs.
In fact, we solve it by determining all feasible distinct colorings, that is, the colorings with different multisets of cardinalities of color classes.

First, let us introduce some auxiliary notions.
For a given block graph $G = (\J, E)$, we use the notion of block-cut tree $T_G$ to define the so-called \textit{sets of descendants} in the following way:
\begin{itemize}
	\item $u \in \J$ is a \textit{descendant} of a given cut-vertex $v \in V(T_G)$ if there exists a block $B \in V_B$ such that $u \in B$ and $B$ is in the subtree of $T_G$ rooted in $v$,
	\item $u \in \J$ is a \textit{descendant} of a given subset $U \subseteq B \setminus \{v\}$ for a block $B \in V_B$ and its parent cut-vertex $v \in B$ if either $u \in U$ or $u$ is a \textit{descendant} of some cut-vertex $u' \in U$.
\end{itemize}
We denote the sets of all descendants of $v$ and $U$ as $\succesorssymbol(v)$ and $\succesorssymbol(U)$, respectively. Observe that $v \in \succesorssymbol(v)$ for any $v \in V_{cut}$.

Additionally, we can assume that the block-cut tree $T_G$ is a \textit{plane tree}, i.e. there is an order of children for its every vertex. We call $u \in \J$ a \textit{$d$-th descendant} of a given $v \in V_{cut}$ and $d \in \N^+$ if there exists a block $B \in V_B$ such that $u \in B$ and $B$ is in the subtree of $T_G$ rooted in $d$-th child block of $v$.
We define $\succesorssymbol_d(v)$ to be the set of $d$-th descendants of $v$.
Similar as before, it is the case that $v \in \succesorssymbol_d(v)$.

As w stated above, ultimately we want to find all feasible distinct colorings for $G = G[\succesorssymbol(r)]$.
In order to achieve this, we will also compute all feasible distinct colorings for:
\begin{itemize}
    \item $G[\succesorssymbol(v)]$ for all $v \in V_{cut}$,
    \item $G[\succesorssymbol_d(v)]$ for all $v \in V_{cut}$ and all possible $d \in \N^+$,
    \item $G[\succesorssymbol(B \setminus \{v\})]$ for all blocks $B \in V_B$ and their parent vertices $v \in V_{cut}$ in $T_G$.
\end{itemize}
Our main algorithm (see \Cref{algorithm:main-loop}) does a post-order traversal of $T_G$ and eventually it constructs sets of all feasible colorings for $G[\succesorssymbol(v)]$, $G[\succesorssymbol_d(v)]$, and $G[\succesorssymbol(B \setminus \{v\})]$ in a recursive fashion.

The crucial idea in the algorithm is a compact way of representing colorings for subgraphs of $G$ mentioned above and identification (i.e. treatment as being the same) of some colorings. 
Note that we can only keep the set of cardinalities of all color classes that are assigned to a given object (be it a cut-vertex or a subset of vertices in a block) and that are assigned to the further descendants.
In addition, we have to keep track of the cardinalities of the color classes which are in use by vertices from the currently processed object.
Formally, we can define a \textit{pattern} for a coloring $c$ as a pair of vectors $(a, b)$ with $a, b \in \{0, \ldots, m\}^{k + 1}$ such that if $c$ is a coloring of $G[\succesorssymbol(U)]$ (respectively, $G[\succesorssymbol(v)]$ or $G[\succesorssymbol_d(v)]$), then $a_i$ denotes the number of colors of cardinality $i$ which are used by vertices in $U$ (respectively, by a cut-vertex $v$) and $b_i$ denotes the number of all other colors of cardinality $i$ in $c$ (i.e. the ones used only in a set of descendants of $U$).
Of course, in the case $G[\succesorssymbol(v)]$ or $G[\succesorssymbol_d(v)]$ there is only a single color assigned to $v$ -- but for consistency of the notation we do not differentiate this case.

We denote by $P(U)$ (respectively, $P(v)$ and $P_d(v)$) the set of all different patterns for $G[\succesorssymbol(U)]$ (respectively, $G[\succesorssymbol(v)]$ or $G[\succesorssymbol_d(v)]$). 
Such set contains all distinct colorings of $G[\succesorssymbol(U)]$ (respectively, $G[\succesorssymbol(v)]$ or $G[\succesorssymbol_d(v)]$).
Observe that there are $\Osymbol(m^{2 k + 2})$ different patterns for any subset $U$ of any block and at most $\Osymbol(k m^{k + 1})$ patterns for any $v \in V_{cut}$ since its $a$ has to contain a single one at the position $l$ and exactly $k$ zeroes (denoted by $\mathbf{1}^l$).

We can also talk of a pattern for simplicial vertices in an analogous manner. There would be only a single one for each such vertex, described by $a = \mathbf{1}^1$ (i.e. exactly one color of cardinality $1$ used for this vertex) and $b = (m - 1) \cdot \mathbf{1}^0$ (i.e. $m - 1$ colors of cardinality $0$, as they are not used anywhere).

In essence, our algorithm consists of three procedures that run recursively.
\begin{itemize}
    \item The first procedure (\Cref{algorithm:composing-subcolorings-for-subblockgraphs}) is run for each cut-vertex $v \in V(T_G)$ to merge all the patterns of all its descendant blocks.
    At step $d$-th the set $P_d(v)$ is merged with the set of all patterns for previous descendants $MP(v, d-1)$ ($MP(v, 0) = (\mathbf{1}^0, (m-1) \mathbf{1}^0)$).
    After considering all descendants the set $P(v)$ is finally produced.
	\item The second procedure (\Cref{algorithm:merger-of-earlier-cliques-and-this-clique}) is run for a block vertex $B \in V(T_G)$ (with its implicit parent cut-vertex $v$ in $T_G$).
	At each call it merges $P(u)$ for a next child cut-vertex $u \in T_G$ (or a single pattern for a simplicial vertex) into a set of all patterns for $P(B \setminus \{v\})$.
	\item The third procedure (\Cref{algorithm:colorings-of-block-without-parent-to-colorings-with-parent}) is run for a cut-vertex $v \in V(T_G)$ and its $d$-th child block $B \in V(T_G)$ to obtain $P_d(v)$ from $P(B \setminus \{v\})$.
\end{itemize}
We present them all in detail below. For brevity, we will use in pseudocodes zero vector ($\mathbf{0}$), unit vectors ($\mathbf{1}^l$ for some $l = 0, \ldots, k$), and vector operations such as addition, subtraction or scalar multiplication.

Moreover, it is tacitly assumed that with each pattern we always store a sample coloring respective to this pattern, i.e. a precise assignment from vertices to colors.

\subsubsection{Merging the patterns of children of a given cut-vertex}
\label{sec:merging-cut}

\begin{algorithm}[htpb]
	\begin{algorithmic}[1]
		\REQUIRE {A set $MP(v, d - 1)$ of merged patterns $P_i(v)$ for $i = 1, 2, \ldots, d - 1$ and a set of patterns $P_d(v)$}
		\ENSURE {A set $MP(v, d)$ of merged patterns $P_i(v)$ for $i = 1, 2, \ldots, d$}
		\STATE \textbf{if} $d = 1$ \textbf{then return} $P_d(v)$
		\STATE $MP(v, d) \gets \emptyset$
		\FORALL{$(\colorv, \colorDescv) \in MP(v, d - 1)$}
		    \FORALL{$(\colorv', \colorDescv') \in P_d(v)$}
		        \STATE $\cardv, \cardv' \gets$ the only non-zero coordinates of $\colorv$ and $\colorv'$, respectively
		        \STATE \textbf{if} $\cardv + \cardv' - 1 > k$ \textbf{then continue}
		        \FORALL{$\colorTargetDescv \in \{0, \ldots, m\}^{k + 1}$}
                    \STATE \textbf{if} $\sum_i \colorTargetDescv(i) \neq m - 1$ \textbf{then continue}
                        \STATE $j \to 0$
                        \FOR{$i = 0, \ldots, k + 1$}
                            \STATE $\cardTargetDescv(j), \ldots, \cardTargetDescv(j + \colorTargetDescv(i)) \gets i$
                            \STATE $j \gets j + \colorTargetDescv(i)$
                        \ENDFOR
                        \hfill\COMMENT{$\cardTargetDescv$ is a sequence of $m - 1$ color cardinalities consistent with $\colorTargetDescv$}
		            \STATE $\UnmergedColorings(0) \gets \{\colorDescv\}$, $\UnmergedColorings'(0) \gets \{\colorDescv'\}$ \\
                        \hfill\COMMENT{initially only the unique colors from $\colorv$ and $\colorv'$ are identified}
		            \FOR{$i = 1, \ldots, m - 1$}
		                \STATE $\UnmergedColorings(i) \gets \emptyset$, $\UnmergedColorings'(i) \gets \emptyset$
		                \FORALL{$(coloring, coloring') \in \UnmergedColorings(i - 1) \times \UnmergedColorings'(i - 1)$}
		                    \FOR{$j = 1, \ldots, k-\cardTargetDescv(i)$}
		                        \IF{$coloring(j) > 0$ \textbf{and} $coloring'(\cardTargetDescv(i) - j) > 0$}
		                            \STATE Add $coloring - \mathbf{1}^{j}$ to $\UnmergedColorings(i)$
		                            \STATE Add $coloring' - \mathbf{1}^{\cardTargetDescv(i) - j}$ to $\UnmergedColorings'(i)$
		                        \ENDIF
		                    \ENDFOR
		                \ENDFOR
		            \ENDFOR
		            \IF{$\mathbf{0} \in \UnmergedColorings(m - 1)$ \textbf{and} $\mathbf{0} \in \UnmergedColorings'(m - 1)$}
		                \STATE Add $(\mathbf{1}^{\cardv + \cardv' - 1}, \colorTargetDescv)$ to $MP(v, d)$
		            \ENDIF
		        \ENDFOR
    	    \ENDFOR
	    \ENDFOR
		\RETURN $MP(v, d)$
	\end{algorithmic}
	\caption
	{
		An algorithm for merging a set of already merged patterns $P_i(v)$ for $i = 1, 2, \ldots, d - 1$ and a set of patterns $P_d(v)$.
	}
	\label{algorithm:composing-subcolorings-for-subblockgraphs}
\end{algorithm}

First, we present a lemma about combining colorings of consecutive children of a cut-vertex $v$, to form colorings of all descendants of $v$.
Keep in mind that, for any $i$-th descendants of given a cut-vertex  pattern $P_i(a, b)$  is still such that $a = \mathbf{1}^i$, simply only one color is needed for a cut-vertex.

For $v \in V_{cut}$ and first $d$-th descendants, let the set of all distinct patterns be denoted by $MP(v, d - 1)$.
\begin{lemma}
    For any $v \in V_{cut}$ and $d \in N_{+}$, given $MP(v, d - 1)$ and $P_d(v)$ \Cref{algorithm:composing-subcolorings-for-subblockgraphs} calculates $MP(v, d)$.
	\label{lem:composing-subcolorings-for-subblockgraphs}
\end{lemma}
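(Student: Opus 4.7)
The plan is to establish correctness of \Cref{algorithm:composing-subcolorings-for-subblockgraphs} in two directions---soundness and completeness---and proceed by induction on $d$, where the base case $d = 1$ is trivial since the algorithm directly returns $P_1(v)$, which is $MP(v, 1)$ by definition. Throughout, the key structural fact I will exploit is that the subgraphs $G[\succesorssymbol_1(v)], \ldots, G[\succesorssymbol_d(v)]$ pairwise share only the vertex $v$ and have no edges between them; consequently, any proper coloring of their union decomposes into proper colorings of $G[\bigcup_{i<d} \succesorssymbol_i(v)]$ and $G[\succesorssymbol_d(v)]$ that agree on the color of $v$, together with some (possibly empty) identification of the remaining color classes across the two sides.

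For the soundness direction, I will show that every pair $(\mathbf{1}^{\cardv + \cardv' - 1}, \colorTargetDescv)$ the algorithm adds to $MP(v, d)$ is the pattern of a genuine coloring. Starting from the sample colorings stored with the two input patterns, I will rename colors so that $v$ receives the same color on both sides---this accounts for the cardinality $\cardv + \cardv' - 1$, subtracting one for the shared vertex $v$. Then the successful trace through the $\UnmergedColorings$ and $\UnmergedColorings'$ arrays, which terminates at $\mathbf{0}$ in both at step $m - 1$, serves as a witness for a consistent pairing of the remaining $m - 1$ color classes: the chosen split $j$ at iteration $i$ records how many vertices of the combined color of cardinality $\cardTargetDescv(i)$ come from each side. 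Identifying the paired classes in the gluing yields a proper coloring whose pattern is exactly $(\mathbf{1}^{\cardv + \cardv' - 1}, \colorTargetDescv)$.

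For the completeness direction, I will take an arbitrary coloring $c$ of $G[\bigcup_{i=1}^d \succesorssymbol_i(v)]$ with pattern $(a, b)$, restrict it to the first $d - 1$ subtrees and to the $d$-th subtree, and invoke the inductive hypothesis (together with the definition of $P_d(v)$) to locate the two restricted patterns in $MP(v, d-1)$ and $P_d(v)$ respectively. For each color of $c$, its intersections with the two sides prescribe how the combined cardinality should split, which in turn prescribes a specific $j$ to select at the appropriate iteration of the inner loop; with these choices the algorithm produces an $\UnmergedColorings$-trace ending at $\mathbf{0}$ and adds the desired pattern. The main obstacle is precisely this last step: the sequence $\cardTargetDescv$ is fixed deterministically by $\colorTargetDescv$, so the pairing dictated by $c$ must be realized in that particular order. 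I plan to handle this by observing that patterns quotient out color labels, so a permutation of the non-$v$ colors on either side leaves the input patterns unchanged while allowing any desired pairing order; hence a suitable execution of the algorithm always exists.
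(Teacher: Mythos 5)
Your proposal is correct and follows essentially the same route as the paper's own (much terser) proof: the color of $v$ contributes cardinality $\cardv + \cardv' - 1$ since $v$ is the unique shared vertex, and the remaining $m-1$ color classes are merged pairwise, with the exhaustive guess of $\colorTargetDescv$ plus the $\UnmergedColorings$/$\UnmergedColorings'$ dynamic program certifying exactly which target cardinality vectors admit such a pairing. Your write-up is simply more explicit than the paper's, spelling out both soundness and completeness and the relabeling argument that lets an arbitrary coloring's pairing be replayed in the fixed (sorted) order of $\cardTargetDescv$, points the paper leaves implicit.
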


\begin{proof}
	In the algorithm we determine if using some pattern $(a, b)$ from a set of the patterns for first $d - 1$ children $MP(v, d - 1)$ and some pattern $(a', b') \in P_d(v)$, a given pattern $(a^*, b^*)$ exists in $MP(v, d)$.
	First, it has to hold that $v$ has to be colored with a unique color given by $a$ and $a'$.
	We have to add the respective cardinalities and subtract one -- because this vertex is common in both colors, thus $a^*$ is determined uniquely.

	Other color classes can be composed by merging pairs of colors.
	For simplicity, we check exhaustively the cardinalities of target color classes and we check if a sequence of such cardinalities can be achieved.
	We may do this in a dynamic programming fashion, one color at a time.	
	For $i \in 0, \ldots, m-1$, the sets $\UnmergedColorings(i)$ and $\UnmergedColorings'(i)$ represents colors still to be merged after using some colors to construct colors of cardinalities $s_1, \ldots, s_i$.
	For each tuple representing colors to be merged, we can choose in up to $k$ ways colors that can be combined to obtain a merged color of the desired size.
	Clearly, it is sufficient for correctness (and necessary for time bounds) that each set $MP(v, d)$ preserves only unique patterns.
\end{proof}

\begin{corollary}
    By applying \Cref{algorithm:composing-subcolorings-for-subblockgraphs} in a loop to a sequence of children of $v \in V_{cut}$ in $T_G$ in any order, we can obtain a set of all patterns for a graph induced by $G[\succesorssymbol(v)]$, that is, $P(v)$.
	\label{cor:composing-subcolorings-for-subblockgraphs}
\end{corollary}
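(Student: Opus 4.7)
The plan is to proceed by induction on $d$ (the number of children of $v$ processed) and repeatedly invoke \Cref{lem:composing-subcolorings-for-subblockgraphs}. First I would dispatch the base case $d=1$: \Cref{algorithm:composing-subcolorings-for-subblockgraphs} returns $P_1(v)$ directly, which equals the set of all patterns for $G[\succesorssymbol_1(v)]$ by definition. For the inductive step, assuming $MP(v, d-1)$ is the set of all patterns for $G[\bigcup_{i=1}^{d-1} \succesorssymbol_i(v)]$, the lemma applied to the inputs $MP(v, d-1)$ and $P_d(v)$ guarantees that $MP(v, d)$ is the set of all patterns for $G[\bigcup_{i=1}^{d} \succesorssymbol_i(v)]$. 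Taking $d$ equal to the total number of children of $v$ then yields $P(v)$.

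For the inductive step to be valid, I would argue that proper colorings of $G[\bigcup_{i=1}^{d} \succesorssymbol_i(v)]$ are in bijection, up to renaming of color classes, with pairs consisting of a proper coloring of $G[\bigcup_{i=1}^{d-1} \succesorssymbol_i(v)]$ and a proper coloring of $G[\succesorssymbol_d(v)]$ that agree on the color assigned to $v$. This relies on the block-cut tree structure: $\succesorssymbol_i(v)$ and $\succesorssymbol_j(v)$ intersect solely at $v$ for $i \ne j$, so any two partial colorings that agree on $v$ can be combined into a proper coloring of the union. Once this decomposition is established, \Cref{lem:composing-subcolorings-for-subblockgraphs} maps exactly this combinatorial structure onto pattern-level merging.

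For the ``in any order'' part of the claim, my plan is to observe that patterns only record multiset statistics of color class cardinalities, so the decomposition above is symmetric in the indices $i$. Permuting the order in which children are processed must therefore produce the same final pattern set. The main obstacle I anticipate will be making precise that any two colorings realizing the same pattern contribute identically to the next $MP(v, d)$; since the algorithm's local decisions depend only on coordinates of $\colorv$, $\colorDescv$ and their primed counterparts, colorings with identical patterns generate identical entries in the output. Once that point is settled, both the induction and the order-independence follow.
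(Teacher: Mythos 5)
Your proposal is correct and follows essentially the same route as the paper: the corollary is an immediate iteration of \Cref{lem:composing-subcolorings-for-subblockgraphs}, with the base case $d=1$ handled by the algorithm returning $P_1(v)$ and the inductive step justified by the fact that, in a block graph, the sets $\succesorssymbol_i(v)$ for distinct children meet only in $v$ and have no crossing edges, so colorings of the union correspond (up to identification of color classes) to pairs of colorings sharing $v$'s color — precisely the structure the lemma's merging captures. The extra symmetry argument for order-independence is harmless but unnecessary, since the induction already shows that for every processing order the final set equals the order-independent set of all patterns of $G[\succesorssymbol(v)]$.
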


\begin{lemma}
    \Cref{algorithm:composing-subcolorings-for-subblockgraphs} runs in $\Osymbol(n k^3 m^{5k+6})$ time. 
	\label{lem:composing-subcolorings-for-subblockgraphs-complexity}
\end{lemma}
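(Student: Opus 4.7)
The plan is to obtain the bound by multiplying the iteration counts of all nested loops by the work done per innermost iteration, and then separately accounting for the tacit bookkeeping of sample colorings.

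First I would bound the sizes of the pattern sets. Any pattern $(\colorv, \colorDescv)$ occurring in $MP(v, d-1)$ or $P_d(v)$ has $\colorv$ equal to a unit vector $\mathbf{1}^l$ with $l \in \{1,\ldots,k\}$, contributing $O(k)$ choices, and $\colorDescv \in \{0,\ldots,m\}^{k+1}$, contributing $O(m^{k+1})$ choices. Hence each of these sets has size $O(k m^{k+1})$. The outer iteration over $\colorTargetDescv \in \{0,\ldots,m\}^{k+1}$ adds one more factor of $O(m^{k+1})$. Combining the three outer loops produces $O(k^2 m^{3k+3})$ triples to process.

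For the inner dynamic programming step, I would observe that $\UnmergedColorings(i)$ and $\UnmergedColorings'(i)$ are both subsets of $\{0,\ldots,m\}^{k+1}$ and thus each has at most $O(m^{k+1})$ distinct elements. Iterating over their Cartesian product costs $O(m^{2k+2})$, the innermost loop on $j$ contributes an additional $O(k)$ factor, and summing over the $O(m)$ values of $i$ gives $O(k m^{2k+3})$ work per outer triple. Construction of $\cardTargetDescv$ from $\colorTargetDescv$ is absorbed into this bound as $O(m)$. Multiplying by the number of outer triples gives $O(k^3 m^{5k+6})$ combinatorial work in total.

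Finally I would account for the $O(n)$ cost of assembling the sample coloring whenever a new pattern is added to $MP(v, d)$, since each such coloring is a function on up to $n$ vertices obtained by merging the samples associated with the two contributing patterns. Since there is at most one addition per outer triple, this bookkeeping contributes $O(n k^2 m^{3k+3})$, which is dominated by the combinatorial work, yielding the claimed $O(n k^3 m^{5k+6})$ overall. The main obstacle will be justifying the uniform $O(m^{k+1})$ bound on $|\UnmergedColorings(i)|$: this rests on the fact that every element arises from the initial $\colorDescv$ by coordinate-wise subtractions that preserve membership in $\{0,\ldots,m\}^{k+1}$, together with deduplication when adding to the sets, so that the polynomial-size state space is never exceeded.
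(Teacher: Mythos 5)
Your proof is correct and follows essentially the same route as the paper: the same per-loop bounds ($\Osymbol(k m^{k+1})$ for each of the two pattern sets, $\Osymbol(m^{k+1})$ for the target cardinality vectors, $\Osymbol(m)$ and $\Osymbol(m^{2k+2})$ for the dynamic-programming loops, and $\Osymbol(k)$ innermost) multiply to $\Osymbol(k^3 m^{5k+6})$ iterations, with the extra factor $n$ coming from maintaining sample colorings. The only difference is the bookkeeping: the paper simply charges $\Osymbol(m+n)$ to every elementary operation (using $k,m \le n$), which immediately gives $\Osymbol(n k^3 m^{5k+6})$, whereas your refinement charges $\Osymbol(n)$ only at pattern additions; that also suffices, though your remark that this $\Osymbol(n k^2 m^{3k+3})$ term is ``dominated by the combinatorial work'' is not literally true for large $n$ — it is dominated by the claimed bound itself, which is all that is needed.
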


\begin{proof}
    We proceed by estimating the number of iterations in each \textbf{for} loop:
    \begin{itemize}
        \item the first loop runs for $\Osymbol(k m^{k + 1})$ iterations since $|MP(v, d)| = \Osymbol(k m^{k + 1})$ for any $v \in V_{cut}$ and $d$ by the fact that it consists patterns from the set $\{\mathbf{1}^l\colon l = 0, \ldots, k\} \times \{0, \ldots, m\}^{k + 1}$,
        \item the second loop runs for $\Osymbol(k m^{k + 1})$ iterations since $|P_d(v)| = \Osymbol(k m^{k + 1})$ by the same argument as above,
        \item the third and fourth loop run for $\Osymbol(m^{k + 1})$ and $\Osymbol(m)$ iterations, respectively,
        \item the fifth loop runs for $\Osymbol(m^{2 k + 2})$ iterations, since $|\UnmergedColorings(i)| = \Osymbol(m^{k + 1})$, $|\UnmergedColorings'(i)| = \Osymbol(m^{k + 1})$ due to the fact that they are always subsets of the set $\{0, \ldots, m\}^{k + 1}$,
        \item the sixth loop obviously runs for $\Osymbol(k)$ iterations.
    \end{itemize}
    Thus, all loops in total amount to $\Osymbol(k^3 m^{5k+6})$ iterations.

    The elementary operation consists of a construction of a new vector from the set $\{0, \ldots, m\}^{k + 1}$ (which requires $\Osymbol(k)$ time), and an implicit construction of a respective coloring (which requires $\Osymbol(m + n)$ time).
    Therefore, by the natural assumption that $k, m \le n$ we arrive at the desired complexity, which concludes the proof.
\end{proof}

\subsubsection{Merging the colorings of children for a given block}
\label{sec:merging-block}

\begin{algorithm}[htpb]
	\begin{algorithmic}[1]
	\REQUIRE A block $B \in V_B$, its parent cut-vertex $v \in V_{cut}$ in $T_G$.
	 Sets of all patterns $P(U)$ and $P(u)$ for some $U \subseteq B \setminus \{v\}$ and a vertex $u \in B \setminus (U \cup \{v\})$
	\ENSURE A set of all patterns $P(U \cup \{u\})$
	\STATE \textbf{if} $U = \emptyset$ \textbf{then return} $P(u)$\hfill\COMMENT{Only for conformity with \Cref{algorithm:main-loop}}
	\STATE $P(U \cup \{u\}) \gets \emptyset$
	\FORALL{$(\ColorsOfU, \ColorsOther) \in P(U)$}
		\FORALL{$(\ColorOfu, \ColorsOtherPrim) \in P(u)$}
	        \STATE $\cardu \gets$ the only non-zero coordinate in $\ColorOfu$
	        \FOR{$i = 0, \ldots, k - \cardu$}
	            \IF{$\ColorsOther(i) > 0$}
	                \STATE $\ColoringsObtainable(0) \gets \{(\ColorsOfU, \ColorsOther - \mathbf{1}^i, \mathbf{1}^{\cardu + i}, \mathbf{0})\}$
                        \\\hfill\COMMENT{initially only the unique color from $\ColorOfu$ is identified with some color outside of $\ColorsOfU$}
                        \STATE $j \to 0$
                        \FOR{$i = 0, \ldots, k + 1$}
                            \STATE $\cardDescu(j), \ldots, \cardDescu(j + \ColorsOtherPrim(i)) \gets i$
                            \STATE $j \gets j + \ColorsOtherPrim(i)$
                        \ENDFOR
                        \hfill\COMMENT{$\cardDescu$ is a sequence of $m - 1$ color cardinalities consistent with $\ColorsOtherPrim$}
                    \FOR{$j = 1, \ldots, m - 1$}
                        \STATE $\ColoringsObtainable(j) \gets \emptyset$
                        \FORALL{$(\ColorsUSrc, \ColorsOtherSrc, \ColorsUuTgt, \ColorsOtherTgt) \in \ColoringsObtainable(j - 1)$}
	                    	\FOR{$l = 0, \ldots, k - \cardDescu(j)$}
		                        \IF{$\ColorsUSrc(l) > 0$}
		                            \STATE Add $(\ColorsUSrc - \mathbf{1}^l, \ColorsOtherSrc, \ColorsUuTgt + \mathbf{1}^{l + \cardDescu(j)}, \ColorsOtherTgt)$ to $\ColoringsObtainable(j)$
		                            \\\hfill\COMMENT {Merge $j$-th color from $\ColorsOtherPrim$ with some color used for $U$}
	                    	    \ENDIF
		                        \IF{$\ColorsOtherSrc(l) > 0$}
		                            \STATE Add $(\ColorsUSrc, \ColorsOtherSrc - \mathbf{1}^l, \ColorsUuTgt, \ColorsOtherTgt + \mathbf{1}^{l + \cardDescu(j)})$ to $\ColoringsObtainable(j)$
		                            \\\hfill\COMMENT {Merge $j$-th color from $\ColorsOtherPrim$ with some color not used for $U$}
	                    	    \ENDIF
	                    	\ENDFOR
                        \ENDFOR
                    \ENDFOR
                    \FORALL{$(\mathbf{0}, \mathbf{0}, \ColorsUuTgt, \ColorsOtherTgt) \in \ColoringsObtainable(m - 1)$}
                    		\STATE Add $(\ColorsUuTgt, \ColorsOtherTgt)$ to $P(U \cup \{u\})$
                    \ENDFOR
	            \ENDIF
	        \ENDFOR
        \ENDFOR
    \ENDFOR
	\RETURN $P(U \cup \{u\})$
	\end{algorithmic}
	\caption
	{
		For a given block $B$ together with its parent cut-vertex $v$, the procedure merges sets of patterns $U \subseteq B \setminus \{v\}$, and sets of patterns of another vertex from $B \setminus (U \cup \{v\})$.
	}
	\label{algorithm:merger-of-earlier-cliques-and-this-clique}
\end{algorithm}

Now we present a lemma about combining sets of patterns for cut-vertices and simplicial vertices for a given block $\blocksymbol$.
\begin{lemma}
    For any $B \in V_B$, its parent cut-vertex $v \in V_{cut}$ in $T_G$, and any $U \subseteq B \setminus \{v\}$ and $u \in B \setminus (U \cup \{v\})$ with their given sets of patters $P(U)$ and $P(u)$ \Cref{algorithm:merger-of-earlier-cliques-and-this-clique} computes $P(U \cup \{u\})$.
    \label{lem:merger-of-earlier-cliques-and-this-clique}
\end{lemma}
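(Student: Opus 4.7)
The plan is to show that \Cref{algorithm:merger-of-earlier-cliques-and-this-clique} enumerates exactly the patterns in $P(U\cup\{u\})$ by identifying each of its iterations with a valid way to glue a coloring of $G[\succesorssymbol(U)]$ with a coloring of $G[\succesorssymbol(u)]$ into a coloring of $G[\succesorssymbol(U\cup\{u\})]$. The key structural observation is that, since $B$ is a clique and $u\notin U\cup\{v\}$, the sets $\succesorssymbol(U)$ and $\succesorssymbol(u)$ are disjoint and induce subgraphs with no edges between them; moreover the vertices of $U\cup\{u\}$ must receive pairwise distinct colors in any proper coloring. Hence any proper coloring $c$ of $G[\succesorssymbol(U\cup\{u\})]$ restricts to proper colorings $c_U$ of $G[\succesorssymbol(U)]$ and $c_u$ of $G[\succesorssymbol(u)]$, and conversely any two such restrictions, each using the full palette of $m$ color names (possibly with empty classes), can be combined by a bijection of color names subject to the single constraint that the color of $u$ in $c_u$ is not identified with any color used on $U$ in $c_U$. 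The base case $U=\emptyset$ is handled by returning $P(u)$ directly, since no merging is necessary.

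Under this identification, producing a target pattern $(\ColorsUuTgt,\ColorsOtherTgt)$ from source patterns $(\ColorsOfU,\ColorsOther)$ and $(\ColorOfu,\ColorsOtherPrim)$ corresponds to a matching of the $m$ color classes of $c_U$ with the $m$ color classes of $c_u$: each merged class has a cardinality equal to the sum of the two paired cardinalities, and it is counted in $\ColorsUuTgt$ precisely when the pair contains a class from $\ColorsOfU$ or the distinguished class carrying $u$. I would verify that the algorithm enumerates exactly these matchings. The outer loop over the cardinality $i$ of the partner for $u$'s class enforces the clique constraint by drawing that partner from $\ColorsOther$ rather than $\ColorsOfU$, and seeds $\ColoringsObtainable(0)$ with the corresponding merged class of cardinality $\cardu+i$ contributing to $\ColorsUuTgt$. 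The inner DP then processes the remaining $m-1$ color classes of $c_u$ in the fixed order encoded by $\cardDescu$, at each step choosing to match the current class with a still-unmerged class drawn either from $\ColorsUSrc$ (contributing to $\ColorsUuTgt$) or from $\ColorsOtherSrc$ (contributing to $\ColorsOtherTgt$). A terminal state $(\mathbf{0},\mathbf{0},\ColorsUuTgt,\ColorsOtherTgt)$ records that every class of $c_U$ has been paired with some class of $c_u$, i.e., a full bijection, and the pair $(\ColorsUuTgt,\ColorsOtherTgt)$ is emitted as a pattern in $P(U\cup\{u\})$.

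The remaining work is a two-sided correctness argument. Soundness is routine: from any completed trajectory of the DP, combined with the sample colorings implicitly stored with the source patterns, one reads off a proper coloring of $G[\succesorssymbol(U\cup\{u\})]$ with the emitted pattern. The main obstacle is completeness: given any proper coloring $c$ of $G[\succesorssymbol(U\cup\{u\})]$ realizing some pattern $(\ColorsUuTgt,\ColorsOtherTgt)$, I need to exhibit a trajectory of the DP producing this pattern. For this I would take the restrictions $c_U$ and $c_u$, note that they are catalogued in $P(U)$ and $P(u)$ via their patterns (by the inductive assumption on the earlier runs of the algorithm), and observe that the induced matching of color classes is realizable: the partner of $u$ sits in the $\ColorsOther$-part because of the clique constraint, while each subsequent pairing dictated by $c$ provides a legal choice at the corresponding DP step. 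The point that makes this work is that $\UnmergedColorings$ and $\ColoringsObtainable$ are treated as capacity-vector multisets that forget which individual class has been merged, so any re-labeling of the classes of $c_U$ that is consistent with the fixed sequence $\cardDescu$ yields an equally valid trajectory producing the same pattern. Uniqueness of entries in $P(U\cup\{u\})$ follows from the set semantics of the data structure, so each realizable pattern is stored exactly once, as required.
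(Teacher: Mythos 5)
Your proposal is correct and follows essentially the same route as the paper's proof: you treat the merge as pairing the color classes of the two partial colorings, force the class of $u$ to be identified only with a class not used on $U$ (the clique constraint), and verify that the dynamic program over the remaining $m-1$ classes of the $u$-side, with its two branching choices per step, enumerates exactly these pairings. Your completeness argument is spelled out in somewhat more detail than the paper's, but it is the same idea, with the cardinality cap $k$ handled implicitly by the loop ranges just as in the paper.
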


\begin{proof}
    First, we have to merge the color assigned to $u$ with some other color.
	Of course, it cannot be used to any other vertex in $\blocksymbol$, hence we can merge only with the colors that are not assigned to vertices from $U$.
	The third {\bfseries for} loop proceeds over all such possibilities.
	
	Second, we have to merge other colors assigned to $\succesorssymbol(u)$ with the ones assigned to $\succesorssymbol(U)$.
	Observe that it can be done in any way -- either with colors assigned to the vertices in $\blocksymbol$ or with other colors.
	To find all distinct ways of merging the colors we use a dynamic programming approach over a sequence of cardinalities of colors used to color $\succesorssymbol(u)$.
	The next loop does so: at each step of merging a single color we preserve the sets of colors that are yet to be merged and that were already merged.
	The state of the dynamic program is denoted by $(\ColorsUSrc, \ColorsOtherSrc, \ColorsUuTgt, \ColorsOtherTgt)$.
	The first two elements of the tuple represent the colors (i.e. their cardinalities for $\succesorssymbol(U)$) that are not yet merged.
	The last two elements of the tuple represent the colors that are already merged.
	
	For any given color used in $\succesorssymbol(u)$ we can merge it with $O(k)$ colors with different cardinalities used to color $\succesorssymbol(U)$. And for each fixed cardinality we have two possibilities: either we can merge with it a color assigned to some vertex of $U$ or not -- the {\bfseries ifs} are taking care of both cases.
	
	Finally, if in $\ColoringsObtainable(m - 1)$ there is some tuple, then it means that the corresponding pattern can be obtained.
	Observe that for $i \in \{0, \ldots, m-1\}$ the set $\ColoringsObtainable(i)$ contain all obtainable patterns after merging $i$ colors from $P(u)$.
	Finally, the returned set represents all patterns for all possible colorings of $G[\succesorssymbol(U \cup \{u\})]$, thus it is exactly $P(U \cup \{u\})$.
\end{proof}

\begin{corollary}
    For any block $B \in V_B$ with its parent cut-vertex $v \in V_{cut}$ in $T_G$, a set of all patterns $P(B \setminus \{v\})$ can be obtained by calling \Cref{algorithm:merger-of-earlier-cliques-and-this-clique} for the consecutive vertices of $B \setminus \{v\}$.
    \label{cor:merger-of-earlier-cliques-and-this-clique}
\end{corollary}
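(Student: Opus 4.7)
The plan is to establish the corollary by a straightforward induction on the number of vertices of $B \setminus \{v\}$ that have been processed so far, leveraging \Cref{lem:merger-of-earlier-cliques-and-this-clique} as the inductive step. Fix an arbitrary enumeration $u_1, u_2, \ldots, u_t$ of the vertices in $B \setminus \{v\}$, where $t = |B| - 1$ (or $t = |B|$ if $B$ is the root block of its component). Define $U_i = \{u_1, \ldots, u_i\}$ for $i = 0, 1, \ldots, t$, with $U_0 = \emptyset$. The goal is to show that after the $i$-th call to \Cref{algorithm:merger-of-earlier-cliques-and-this-clique}, the computed set of patterns equals $P(U_i)$; the case $i = t$ then gives the desired conclusion.

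For the base case $i = 1$, the algorithm is invoked with $U = \emptyset$ and $u = u_1$, so the explicit early-exit in line~1 returns $P(u_1)$, which by definition is $P(U_1)$. Here $P(u_1)$ is assumed to be already available: if $u_1$ is simplicial then $P(u_1) = \{(\mathbf{1}^1,(m-1)\mathbf{1}^0)\}$ by the observation made before the subsection, and if $u_1$ is a cut-vertex then $P(u_1)$ is produced by \Cref{cor:composing-subcolorings-for-subblockgraphs} during the earlier (post-order) stage of the overall procedure.

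For the inductive step, assume that after $i - 1$ calls we have correctly computed $P(U_{i-1})$. At the $i$-th step we apply \Cref{algorithm:merger-of-earlier-cliques-and-this-clique} with $U = U_{i-1} \subseteq B \setminus \{v\}$ and $u = u_i \in B \setminus (U_{i-1} \cup \{v\})$, so the preconditions of \Cref{lem:merger-of-earlier-cliques-and-this-clique} are met; the lemma then guarantees that the output equals $P(U_{i-1} \cup \{u_i\}) = P(U_i)$, completing the induction.

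The only real subtlety, which I would take care to state explicitly, is why the ordering of the $u_j$'s does not matter: every pattern in $P(U_i)$ encodes, by definition, colorings of $G[\succesorssymbol(U_i)]$, and this induced subgraph depends only on the set $U_i$ and not on the order in which its elements are merged. Since at each step $U_i = U_{i-1} \cup \{u_i\}$ is the same set regardless of enumeration, any traversal of $B \setminus \{v\}$ produces the same final set $P(B \setminus \{v\})$. I do not foresee a hard step here; the work has already been done in \Cref{lem:merger-of-earlier-cliques-and-this-clique}, and the corollary is essentially a bookkeeping consequence.
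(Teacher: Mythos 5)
Your induction is exactly the argument the paper has in mind: the corollary is stated without proof as an immediate consequence of \Cref{lem:merger-of-earlier-cliques-and-this-clique}, obtained by iterating that lemma over the vertices of $B \setminus \{v\}$, with the base case handled by the early return for $U = \emptyset$ and the patterns $P(u_i)$ supplied either trivially (simplicial vertices) or recursively via \Cref{cor:composing-subcolorings-for-subblockgraphs}. Your explicit remark that the final set $P(B\setminus\{v\})$ is independent of the merge order is a correct and welcome piece of bookkeeping, not a deviation from the paper's approach.
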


\begin{lemma}
    \Cref{algorithm:merger-of-earlier-cliques-and-this-clique} runs in $\Osymbol(n k^3 m^{7k + 8})$ time. 
	\label{lem:merger-of-earlier-cliques-and-this-clique-complexity}
\end{lemma}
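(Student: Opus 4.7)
The plan is to mirror the structure of the complexity analysis given for \Cref{algorithm:composing-subcolorings-for-subblockgraphs} in \Cref{lem:composing-subcolorings-for-subblockgraphs-complexity}, bounding the cardinality of each set indexed by a \textbf{for} loop, multiplying these bounds together, and then charging an $\Osymbol(n)$ cost per elementary step for the bookkeeping of the sample coloring attached to every pattern. All of the bounds follow from the fact that every vector involved lies in $\{0,\ldots,m\}^{k+1}$, and that patterns coming from a single vertex $u$ have the distinguished form $a=\mathbf{1}^l$ for some $l\in\{0,\ldots,k\}$.

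First I would collect the sizes of the ambient sets. Since each pattern is a pair of vectors in $\{0,\ldots,m\}^{k+1}$, we have $|P(U)| = \Osymbol(m^{2k+2})$, while $|P(u)| = \Osymbol(k\, m^{k+1})$ because the first component of a pattern for the single vertex $u$ is of the form $\mathbf{1}^l$. The state of the dynamic program, $(\ColorsUSrc,\ColorsOtherSrc,\ColorsUuTgt,\ColorsOtherTgt)$, is a $4$-tuple of vectors in $\{0,\ldots,m\}^{k+1}$, so $|\ColoringsObtainable(j)| = \Osymbol(m^{4k+4})$ for every $j$.

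Next I would multiply the iteration counts of the nested loops. The two outermost loops over $P(U)$ and $P(u)$ contribute $\Osymbol(m^{2k+2})\cdot \Osymbol(k\, m^{k+1})$. The loop over $i$ adds a factor of $\Osymbol(k)$, the loop over $j$ a factor of $\Osymbol(m)$, the loop over $\ColoringsObtainable(j-1)$ a factor of $\Osymbol(m^{4k+4})$, and the innermost loop over $l$ a factor of $\Osymbol(k)$. Multiplying, the total iteration count is
\[
\Osymbol\!\bigl(m^{2k+2}\cdot k m^{k+1}\cdot k\cdot m\cdot m^{4k+4}\cdot k\bigr)=\Osymbol\!\bigl(k^3 m^{7k+8}\bigr).
\]
Since the intermediate recomputation of $\cardDescu$ from $\ColorsOtherPrim$ costs only $\Osymbol(m+k)$ and sits outside the inner dynamic programming loop, it does not dominate; likewise the final scan over $\ColoringsObtainable(m-1)$ is subsumed by the $j$-loop's contribution.

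Finally, I would account for the work performed per elementary step. A single vector update touches $\Osymbol(k)$ coordinates, but—following the convention stated before \Cref{sec:merging-cut} that each pattern carries a concrete sample coloring—the construction and copying of the coloring that certifies membership in $\ColoringsObtainable(j)$ or $P(U\cup\{u\})$ takes $\Osymbol(n)$ time. Under the natural assumption $k,m\le n$, this dominates the per-iteration cost, so the overall running time is the announced $\Osymbol(n k^3 m^{7k+8})$. The only subtle point, and arguably the main obstacle, is to justify that the bound on $|\ColoringsObtainable(j)|$ remains $\Osymbol(m^{4k+4})$ throughout the inner loop even though each step potentially doubles the set by branching on the two \textbf{if}-cases; this is handled by observing that the entries always lie in the fixed cube $\{0,\ldots,m\}^{k+1}$ and that duplicates are identified, so the bound is a bound on the ambient space, not on the number of branches taken.
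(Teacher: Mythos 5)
Your proposal matches the paper's own proof essentially step for step: the same per-loop iteration bounds ($\Osymbol(m^{2k+2})$, $\Osymbol(k m^{k+1})$, $\Osymbol(k)$, $\Osymbol(m)$, $\Osymbol(m^{4k+4})$, $\Osymbol(k)$), the same multiplication yielding $\Osymbol(k^3 m^{7k+8})$ iterations, the same observation that the final scan is dominated by the inner loops, and the same $\Osymbol(n)$ charge per elementary operation for maintaining a sample coloring. It is correct and requires no changes.
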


\begin{proof}
    Again we proceed by bounding the number of iterations in each \textbf{for} loop:
    \begin{itemize}
        \item the first loop runs for $\Osymbol(m^{2 k + 2})$ iterations since $|P(U)| = \Osymbol(m^{2 k + 2})$ for any $U$ by the fact that $P(U) \subseteq \{0, \ldots, m\}^{k + 1} \times \{0, \ldots, m\}^{k + 1}$,
        \item the second loop runs for $\Osymbol(k m^{k + 1})$ iterations since $|P(u)| = \Osymbol(k m^{k + 1})$ since $P(u) \subseteq \{\mathbf{1}^l\colon l = 0, \ldots, k\} \times \{0, \ldots, m\}^{k + 1}$,
        \item the third and fourth loop run for $\Osymbol(k)$ and $\Osymbol(m)$ iterations, respectively,
        \item the fifth loop runs for $\Osymbol(m^{4 k + 4})$ iterations, since every element is a quadruple of vectors from $\{0, \ldots, m\}^{k + 1}$,
        \item the sixth loop obviously runs for $\Osymbol(k)$ iterations.
    \end{itemize}
    Note that the running time of the last loop is dominated by the running times of the fourth, fifth, and sixth loops, so we do not need to consider it separately. 
    
    The total number of iterations is bounded by $\Osymbol(k^3 m^{7k + 8})$.
    Taking into account the time necessary for an elementary operation (which, as for the previous algorithm, can be bounded by $\Osymbol(n)$) results in the total complexity $\Osymbol(n k^3 m^{7k + 8})$.
\end{proof}

\subsubsection{Converting the patterns of a given block to the patterns of all descendants of given parent cut-vertex}
\label{sec:block-to-cut}

\begin{algorithm}[htpb]
	\begin{algorithmic}[1]
		\REQUIRE A cut-vertex $v \in V(T_G)$, $d \in \N^+$, and $P(B \setminus \{v\})$ for a $d$-th child block $B$ of $v$ in $T_G$.
		
		\STATE $P_d(v) \gets \emptyset$
		\FORALL{$(\ColorsInBlock, \ColorsOtherInBlock) \in P(B \setminus \{v\})$}
		    \FOR{$l = 0, \ldots, k - 1$}
			    \IF {$\ColorsOtherInBlock(l) > 0$}
				    \STATE Add $(\mathbf{1}^{l+1}, \ColorsInBlock + \ColorsOtherInBlock - \mathbf{1}^l)$ to $P_d(v)$
			    \ENDIF
		    \ENDFOR
		\ENDFOR
    \RETURN $P_d(v)$
	\end{algorithmic}
	\caption{Compute $P_d(v)$ from $P(B \setminus \{v\})$ for the respective $d$-th child block $B$ of $v$ in $T_G$.}
	\label{algorithm:colorings-of-block-without-parent-to-colorings-with-parent}
\end{algorithm}

\begin{lemma}
    Given $v \in V(T_G)$, $d \in \N^+$, and a set of patterns $P(B \setminus \{v\})$  for a $d$-th child block $B$ of $v$ in $T_G$ \Cref{algorithm:colorings-of-block-without-parent-to-colorings-with-parent} computes the set $P_d(v)$.
    \label{lem:coloring-cardinalities-for-a-block-and-fixed-vertex}
\end{lemma}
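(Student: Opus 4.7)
The plan is to establish a correspondence between valid colorings of $G[\succesorssymbol(B \setminus \{v\})]$ (which coincides with $G[\succesorssymbol_d(v) \setminus \{v\}]$) together with a choice of color for $v$, and colorings of $G[\succesorssymbol_d(v)]$. I need to verify both soundness (every pattern output by \Cref{algorithm:colorings-of-block-without-parent-to-colorings-with-parent} is indeed in $P_d(v)$) and completeness (every pattern in $P_d(v)$ is produced). The key structural observation driving both directions is that within $G[\succesorssymbol_d(v)]$ the vertex $v$ is adjacent exactly to the vertices of $B \setminus \{v\}$, so the ``used by $v$'' vs.\ ``not used by $v$'' dichotomy for color classes is completely determined by whether the class either contains $v$ or meets $B \setminus \{v\}$.

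For soundness, take $(\ColorsInBlock, \ColorsOtherInBlock) \in P(B \setminus \{v\})$ together with its stored coloring $c'$ of $G[\succesorssymbol(B \setminus \{v\})]$, and any $l$ with $\ColorsOtherInBlock(l) > 0$. By the definition of a pattern, there is a color class $C$ in $c'$ of cardinality $l$ that contains no vertex of $B \setminus \{v\}$. Extend $c'$ to a coloring $c$ of $G[\succesorssymbol_d(v)]$ by assigning to $v$ the color of $C$; this is proper because the only neighbors of $v$ in $\succesorssymbol_d(v)$ lie in $B \setminus \{v\}$ and none of them uses the color of $C$. In $c$, the class containing $v$ has cardinality $l + 1$, every class previously counted in $\ColorsInBlock$ is now a color not used by $v$, and every class from $\ColorsOtherInBlock$ except $C$ remains similarly unused by $v$. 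Hence the pattern of $c$ is $(\mathbf{1}^{l + 1},\ \ColorsInBlock + \ColorsOtherInBlock - \mathbf{1}^l)$, which is exactly the tuple that the algorithm emits.

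For completeness, consider any coloring $c$ of $G[\succesorssymbol_d(v)]$ whose pattern is $(\mathbf{1}^{l + 1}, b^{\ast}) \in P_d(v)$, so $v$'s class has cardinality $l + 1$ for some $l \in \{0, \ldots, k-1\}$. Restricting $c$ to $G[\succesorssymbol_d(v) \setminus \{v\}] = G[\succesorssymbol(B \setminus \{v\})]$ gives a coloring $c'$; call its pattern $(\ColorsInBlock, \ColorsOtherInBlock) \in P(B \setminus \{v\})$. The color formerly assigned to $v$ has cardinality $l$ in $c'$ and, by properness of $c$, is disjoint from $B \setminus \{v\}$, so $\ColorsOtherInBlock(l) \ge 1$. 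Every other color keeps its cardinality and its membership relation to $B \setminus \{v\}$ under restriction, so $\ColorsInBlock + \ColorsOtherInBlock - \mathbf{1}^l = b^{\ast}$. Therefore the algorithm, in the iteration using this $(\ColorsInBlock, \ColorsOtherInBlock)$ and this $l$, adds exactly $(\mathbf{1}^{l + 1}, b^{\ast})$ to $P_d(v)$.

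The main obstacle is really just the careful bookkeeping of how color classes migrate between the $a$- and $b$-components when $v$ is introduced: one must justify both that the color chosen for $v$ must come from $\ColorsOtherInBlock$ rather than $\ColorsInBlock$ (forced by the adjacency of $v$ to every vertex in $B \setminus \{v\}$), and that every class formerly in $\ColorsInBlock$ migrates wholesale into the ``other'' component of the $P_d(v)$ pattern (forced by the convention that under $P_d(v)$ only the color of $v$ is in the $a$-component). Running time and the $\Osymbol(n)$ per-iteration cost of storing a representative coloring are immediate from the two nested loops of sizes $|P(B \setminus \{v\})| = \Osymbol(m^{2k+2})$ and $\Osymbol(k)$.
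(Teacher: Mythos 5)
Your proof is correct and follows essentially the same route as the paper's: choose for $v$ a color class counted in $\ColorsOtherInBlock$ (forced since $v$ is adjacent to all of $B \setminus \{v\}$ and to nothing else in $\succesorssymbol_d(v)$), bump its cardinality by one, and lump all remaining classes into the ``other'' component. You simply spell out the soundness and completeness directions that the paper's shorter argument leaves implicit.
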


\begin{proof}
    The algorithm chooses one of the colors that are not assigned to $B \setminus \{v\}$ and assigns it to $v$. 
    Other colors, both assigned to $B \setminus \{v\}$ and not, can be lumped together (without merging) into one set for colors used only in descendants of $v$.
    We can do so, because by construction we ensured that they cannot interfere with the colorings of other blocks. 
\end{proof}

\begin{lemma}
    \Cref{algorithm:colorings-of-block-without-parent-to-colorings-with-parent} runs in time $\Osymbol(n k m^{2 k + 2})$.
\end{lemma}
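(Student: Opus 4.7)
The plan is to mirror the complexity analyses used for \Cref{algorithm:composing-subcolorings-for-subblockgraphs} and \Cref{algorithm:merger-of-earlier-cliques-and-this-clique}, namely, to bound the number of iterations of each \textbf{for} loop separately and then multiply by the cost of an elementary operation inside the innermost body.

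First, I would bound the outer loop. By construction, every pattern $(\ColorsInBlock, \ColorsOtherInBlock) \in P(B \setminus \{v\})$ belongs to $\{0, \ldots, m\}^{k + 1} \times \{0, \ldots, m\}^{k + 1}$, so trivially $|P(B \setminus \{v\})| = \Osymbol(m^{2k+2})$. Second, the inner \textbf{for} loop over $l$ clearly contributes $\Osymbol(k)$ iterations. Multiplying these gives $\Osymbol(k \, m^{2k+2})$ iterations in total, during which exactly one candidate pattern is possibly added to $P_d(v)$.

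Next, I would account for the work per iteration. The only non-trivial steps are the vector operations $\ColorsInBlock + \ColorsOtherInBlock - \mathbf{1}^l$ and the formation of $\mathbf{1}^{l+1}$, each of which costs $\Osymbol(k)$ as the vectors have $k+1$ coordinates. However, following the convention established earlier in the paper (namely, that each stored pattern implicitly carries a representative coloring, and that constructing such a coloring costs $\Osymbol(n)$), the elementary operation inside the innermost loop is dominated by the coloring bookkeeping and takes $\Osymbol(n)$ time under the natural assumption $k, m \le n$.

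Multiplying iterations by per-iteration cost then yields the claimed bound $\Osymbol(n k m^{2k+2})$. I do not anticipate any real obstacle here: the algorithm has only two nested loops and a constant-depth body, so the analysis reduces to reusing the two standard bounds $|P(B \setminus \{v\})| = \Osymbol(m^{2k+2})$ and elementary-operation cost $\Osymbol(n)$ already established in \Cref{lem:composing-subcolorings-for-subblockgraphs-complexity} and \Cref{lem:merger-of-earlier-cliques-and-this-clique-complexity}. The only subtlety worth flagging explicitly is that the coloring-storage convention, and not the raw vector arithmetic, is what forces the $n$ factor; otherwise the pure pattern-manipulation cost would be only $\Osymbol(k^2 m^{2k+2})$.
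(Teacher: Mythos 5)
Your proposal is correct and follows essentially the same route as the paper: the paper's proof simply cites $|P(B \setminus \{v\})| = \Osymbol(m^{2k+2})$ and leaves the $\Osymbol(k)$ inner loop and the $\Osymbol(n)$ per-pattern coloring bookkeeping implicit, exactly the two factors you spell out. Your explicit flag that the $n$ factor comes from storing a representative coloring rather than from vector arithmetic is a fair elaboration, not a deviation.
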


\begin{proof}
    The proof follows directly from the fact that $P(B \setminus \{v\})$ contains patterns (i.e. pairs of vectors from $\{0, \ldots, m\}^{k + 1}$) and therefore $|P(B \setminus \{v\})| = \Osymbol(m^{2 k + 2})$.
\end{proof}

\subsubsection{Computing the total running time}

Finally, we outline a recursive algorithm for constructing all feasible patterns $P(v)$ for a given cut-vertex that gathers all the subroutines listed above.

\begin{algorithm}[htpb]
	\begin{algorithmic}[1]
	\REQUIRE {$v \in V_{cut}$}
	\ENSURE {The set of patterns representing all distinct colorings $P(v)$}
	\STATE Let $ColoringsAfterVisitingBlocks(v) \gets (\mathbf{1}^{1}, (m - 1) \cdot \mathbf{1}^{0})$ \hfill\COMMENT{Initially only $v$ is colored}
	\STATE $d_{max} \gets $ number of children of $v$ in $T_G$
	\FOR{$d = 1, 2, \ldots, d_{max}$}
	    \STATE $B \gets$ $d$-th child block of $v$ in $T_G$
	    \STATE $U \gets \emptyset$
	    \STATE $ColoringsOfSubblock(U) \gets (\mathbf{0}, m \cdot \mathbf{1}^0)$ \hfill\COMMENT{$P(\emptyset)$ slightly abuses notation, but allows for a cleaner loop}
	    \FORALL{$v' \in B \setminus \{v\}$}
	        \IF{$v' \in V_{cut}$}
	            \STATE $ColoringsRootedIn(v') \gets$ \Cref{algorithm:main-loop} $(v')$
	        \ELSE
	            \STATE $ColoringsRootedIn(v') \gets (\mathbf{1}^1, (m - 1) \cdot \mathbf{1}^0)$ \hfill\COMMENT{$v'$ is a simplicial vertex}
	        \ENDIF
	        \STATE $ColoringsOfSubblock(U \cup \{v'\}) \gets$ \Cref{algorithm:merger-of-earlier-cliques-and-this-clique} $(ColoringsOfSubblock(U), ColoringsRootedIn(v'))$
	        \STATE $U \gets U \cup \{v'\}$
	    \ENDFOR
	    \STATE $ColoringsOfChild(d) \gets$ \Cref{algorithm:colorings-of-block-without-parent-to-colorings-with-parent} $(v, d, ColoringsOfSubBlock(B \setminus \{v\}))$
	    \STATE $ColoringsAfterVisitingBlocks(v) \gets$ \Cref{algorithm:composing-subcolorings-for-subblockgraphs} $(ColoringsAfterVisitingBlocks(v), ColoringsOfChild(d))$
    \ENDFOR
	\RETURN $ColoringsAfterVisitingBlocks(v)$
	\end{algorithmic}
	\caption{The main recursive function}
	\label{algorithm:main-loop}
\end{algorithm}

\begin{theorem}
	\Cref{algorithm:main-loop} returns a set of all patterns $P(v)$ 
	for any $v \in V_{cut}$.
	\label{lem:MergeColoringsOfVertices}
\end{theorem}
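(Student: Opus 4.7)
The plan is to prove the theorem by structural induction on the rooted block-cut subtree of $T_G$ hanging below $v$. Concretely, I would induct on the number of block-vertices in the subtree of $T_G$ rooted at $v$; intuitively, a recursive call on $v$ can only reach cut-vertices whose subtrees contain strictly fewer blocks, so the induction is well-founded. The base case is when $v$ has no children blocks in $T_G$: then $\succesorssymbol(v) = \{v\}$, the only pattern is $(\mathbf{1}^{1}, (m-1)\cdot\mathbf{1}^{0})$, which is exactly the value of \emph{ColoringsAfterVisitingBlocks}$(v)$ at initialization, and the \textbf{for} loop does not execute.

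For the inductive step, fix $v \in V_{cut}$ with children blocks $B_1, \ldots, B_{d_{max}}$ in $T_G$ and assume that for every cut-vertex $v'$ strictly below $v$ the recursive call returns $P(v')$. I would then establish the following loop invariants and combine them with the three correctness lemmas proved earlier:

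\begin{enumerate}
\item For the inner loop over $v' \in B_d \setminus \{v\}$: by the inductive hypothesis \emph{ColoringsRootedIn}$(v') = P(v')$ when $v' \in V_{cut}$, while for simplicial $v'$ we manifestly have $P(v') = \{(\mathbf{1}^{1}, (m-1)\cdot\mathbf{1}^{0})\}$. Starting from the trivial pattern set $P(\emptyset) = \{(\mathbf{0}, m \cdot \mathbf{1}^{0})\}$ and repeatedly applying \Cref{algorithm:merger-of-earlier-cliques-and-this-clique}, \Cref{cor:merger-of-earlier-cliques-and-this-clique} yields \emph{ColoringsOfSubblock}$(B_d \setminus \{v\}) = P(B_d \setminus \{v\})$ upon exit of the inner loop.
\item Applying \Cref{algorithm:colorings-of-block-without-parent-to-colorings-with-parent} to this set, \Cref{lem:coloring-cardinalities-for-a-block-and-fixed-vertex} gives \emph{ColoringsOfChild}$(d) = P_d(v)$.
\item For the outer loop over $d$: starting from \emph{ColoringsAfterVisitingBlocks}$(v) = \{(\mathbf{1}^{1}, (m-1)\cdot\mathbf{1}^{0})\} = MP(v,0)$ and repeatedly applying \Cref{algorithm:composing-subcolorings-for-subblockgraphs}, \Cref{cor:composing-subcolorings-for-subblockgraphs} (together with \Cref{lem:composing-subcolorings-for-subblockgraphs}) yields \emph{ColoringsAfterVisitingBlocks}$(v) = MP(v, d)$ after the $d$-th iteration.
\end{enumerate}

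After $d = d_{max}$ iterations the value returned is $MP(v, d_{max})$, which by definition is $P(v)$, proving the theorem.

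The main obstacle, to my mind, is being careful with the boundary cases in the two outer-most initializations, namely the ``degenerate'' values $P(\emptyset) = \{(\mathbf{0}, m \cdot \mathbf{1}^{0})\}$ and $MP(v,0) = \{(\mathbf{1}^{1}, (m-1) \cdot \mathbf{1}^{0})\}$ used to kick off the respective inductions; these are not literally patterns for actual subgraphs, but one must verify that plugging them into \Cref{algorithm:merger-of-earlier-cliques-and-this-clique} and \Cref{algorithm:composing-subcolorings-for-subblockgraphs}, respectively, reproduces $P(v')$ and $P_1(v)$ after one step. Beyond that, the result is essentially bookkeeping: stitching together \Cref{lem:composing-subcolorings-for-subblockgraphs}, \Cref{lem:merger-of-earlier-cliques-and-this-clique}, and \Cref{lem:coloring-cardinalities-for-a-block-and-fixed-vertex} along the recursion, since each subroutine has already been verified to produce the correct pattern set from its inputs.
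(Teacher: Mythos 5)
Your proposal is correct and follows essentially the same route as the paper's proof: it stitches together \Cref{lem:composing-subcolorings-for-subblockgraphs}, \Cref{lem:merger-of-earlier-cliques-and-this-clique}, \Cref{lem:coloring-cardinalities-for-a-block-and-fixed-vertex} and their corollaries in exactly the order the main loop invokes the subroutines, concluding $MP(v,d_{max}) = P(v)$. The only difference is presentational: you make the structural induction on the subtree (and the handling of the degenerate initial sets, which the pseudocode's $d=1$ and $U=\emptyset$ shortcuts absorb) explicit, whereas the paper leaves the recursion's well-foundedness implicit.
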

\begin{proof}
By \Cref{lem:merger-of-earlier-cliques-and-this-clique} and \Cref{cor:merger-of-earlier-cliques-and-this-clique}, the innermost \textbf{for} loop computes for every child block $B_i \in V_B$ of $v$ in $T_G$ a set $P(B_i \setminus \{v\})$. 
By \Cref{lem:coloring-cardinalities-for-a-block-and-fixed-vertex}, in line 13 of \Cref{algorithm:main-loop} we build from them the set $P_d(v)$.
By \Cref{lem:composing-subcolorings-for-subblockgraphs} we merge it (together with patterns for the previous block) into $MP(v, d)$ for $d = 1, 2, \ldots$ in the consecutive iterations.
Observe, that \Cref{algorithm:colorings-of-block-without-parent-to-colorings-with-parent} colors $v$.
Finally, \Cref{cor:composing-subcolorings-for-subblockgraphs} proves that $MP(v, d_{max}) = P(v)$, which finishes the proof.
This is by the assumption that initially $v$ is colored with one color with cardinality $1$ and by an observation that \Cref{algorithm:composing-subcolorings-for-subblockgraphs} takes into account that $v$ is already colored.
\end{proof}

\begin{theorem}
    Applying \Cref{algorithm:main-loop} for a root $r$ of $T_G$ solves the problem $P|G = \blockgraph, p_j = 1|C_{\max} \le k$ in $\Osymbol(n^2 k^3 m^{7k+8})$ time.
    \label{theorem:final-k-coloring-block-graphs}
\end{theorem}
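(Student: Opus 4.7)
The plan is to split the argument into correctness and running time, with correctness being essentially immediate from the previously proved \Cref{lem:MergeColoringsOfVertices} and the time bound following from an amortization over the block-cut tree.

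For correctness, I would invoke \Cref{lem:MergeColoringsOfVertices} applied to the root $r$ of $T_G$: since $\succesorssymbol(r) = V(G)$ whenever $G$ is connected, the set returned by \Cref{algorithm:main-loop} is exactly the set of patterns of all distinct proper colorings of $G$. The crucial observation is that patterns are by construction pairs of vectors in $\{0, \ldots, m\}^{k + 1}$, so cardinalities exceeding $k$ are not even representable, and the merging steps in \Cref{algorithm:composing-subcolorings-for-subblockgraphs,algorithm:merger-of-earlier-cliques-and-this-clique,algorithm:colorings-of-block-without-parent-to-colorings-with-parent} explicitly drop any combination whose resulting color class would have cardinality larger than $k$. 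Thus a non-empty $P(r)$ is equivalent to the existence of a proper $m$-coloring of $G$ with every color class of size at most $k$, which by the coloring-scheduling correspondence is precisely a schedule with $\cmaxcost \le k$. If $G$ is disconnected, I would run \Cref{algorithm:main-loop} on the root of each component and combine the resulting pattern sets by a routine identical in spirit to \Cref{algorithm:composing-subcolorings-for-subblockgraphs}; this adds no asymptotic overhead.

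For the running time, the plan is to show that the three expensive subroutines are each invoked $\Osymbol(n)$ times in total across the whole recursion. By tracing \Cref{algorithm:main-loop}, \Cref{algorithm:merger-of-earlier-cliques-and-this-clique} is called once for every (block, non-parent vertex of that block) pair, so $\sum_{B \in V_B}(|B| - 1) = \Osymbol(n)$ invocations; \Cref{algorithm:colorings-of-block-without-parent-to-colorings-with-parent} is called exactly once per block, and \Cref{algorithm:composing-subcolorings-for-subblockgraphs} once per parent-child edge of $T_G$, each giving $\Osymbol(n)$ invocations. Multiplying these counts by the per-call complexities proved in \Cref{lem:composing-subcolorings-for-subblockgraphs-complexity,lem:merger-of-earlier-cliques-and-this-clique-complexity} and for \Cref{algorithm:colorings-of-block-without-parent-to-colorings-with-parent}, the bottleneck is the product $\Osymbol(n) \cdot \Osymbol(n k^3 m^{7k+8})$, yielding the claimed bound $\Osymbol(n^2 k^3 m^{7k+8})$.

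The main obstacle I expect is making the amortization rigorous: one must verify that the recursive structure in \Cref{algorithm:main-loop} visits each vertex and each block a constant number of times, so that the $\Osymbol(n)$ factor for total invocations truly holds rather than silently inflating through nested loops inside a block. A straightforward induction on the depth of $T_G$, charging the work at a block $B$ with parent cut-vertex $v$ to $|B| - 1$ calls to \Cref{algorithm:merger-of-earlier-cliques-and-this-clique} plus one call each to \Cref{algorithm:colorings-of-block-without-parent-to-colorings-with-parent} and \Cref{algorithm:composing-subcolorings-for-subblockgraphs}, and using $\sum_{B} |B| = \Osymbol(n)$, should close this gap cleanly; the rest is an arithmetic combination of the per-subroutine complexity lemmas.
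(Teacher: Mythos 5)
Your proposal is correct and takes essentially the same route as the paper: correctness is read off directly from \Cref{lem:MergeColoringsOfVertices} (a nonempty $P(r)$ corresponds exactly to a proper $m$-coloring with all class sizes at most $k$, i.e.\ a schedule with $\cmaxcost \le k$), and the time bound follows from counting that \Cref{algorithm:composing-subcolorings-for-subblockgraphs,algorithm:merger-of-earlier-cliques-and-this-clique,algorithm:colorings-of-block-without-parent-to-colorings-with-parent} are each invoked $\Osymbol(n)$ times over the block-cut tree and multiplying by the per-call complexity lemmas, with the $\Osymbol(n k^3 m^{7k+8})$ cost of \Cref{algorithm:merger-of-earlier-cliques-and-this-clique} dominating. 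The only cosmetic difference is the disconnected case, which the paper handles by adding a dummy root cut-vertex and an extra color rather than merging per-component pattern sets as you suggest; both devices work and add no asymptotic overhead.
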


\begin{proof}
From \Cref{lem:MergeColoringsOfVertices} it follows directly that for a root $r$ of $T_G$ \Cref{algorithm:main-loop} finds $P(r)$, the set of all patterns for a block graph $G$ such that where cardinalities of color classes are bounded by a constant $k$.
Thus, any element in the set $P(r)$ implies a positive answer to the problem $P|G = \blockgraph, p_j = 1|C_{\max} \le k$ as it can be translated to an appropriate schedule. And when the set is empty it follows that no such schedule exists.

Let us calculate the total running time.
Observe that \Cref{algorithm:composing-subcolorings-for-subblockgraphs,algorithm:merger-of-earlier-cliques-and-this-clique,algorithm:colorings-of-block-without-parent-to-colorings-with-parent,algorithm:main-loop} will be called $\Osymbol(n)$ times in total.
\begin{itemize}
    \item \Cref{algorithm:composing-subcolorings-for-subblockgraphs} is called exactly once for each block and its $d$-th child (so at most once per each edge of $T_G$),
    \item \Cref{algorithm:merger-of-earlier-cliques-and-this-clique} is called at most once for each $u \in V(G)$,
    \item \Cref{algorithm:colorings-of-block-without-parent-to-colorings-with-parent} is called once for each block,
    \item \Cref{algorithm:main-loop} is called once for each vertex from $V_{cut}$
\end{itemize}
Note that each call of \Cref{algorithm:main-loop} runs in $\Osymbol(n)$ time in total if we discard calls to all other functions (including recursion).

Thus, the total time complexity adds to $\Osymbol(n^2 k^3 m^{5k+6} + n^2 k^3 m^{7k+8} + n^2 k m^{2k+2} + n^2) = \Osymbol(n^2 k^3 m^{7k+8})$.
\end{proof}

Note that if $G$ is not connected, instead of constructing an additional algorithm for merging patterns for connected components, we can simply add a dummy cut-vertex $r$ which connects the graph (and which is a root of $T_G$), and additional color.
This way, we have to check in the final $P(r)$, whether there exists a coloring of the block graph rooted in $r$ with pattern $(a, b)$, where $a = \mathbf{1}^{1}$ which means that the dummy vertex has a unique color.


\subsubsection{An example}

To give an intuitive overlook of the algorithm, let us consider the following example: a set of $m = 3$ identical machines, and block graph as depicted in \Cref{fig:blockgraph}, and limit on $\cmaxcost = 3$.
Note that it implies that all vectors used in the patterns in the algorithms given above are of length $4$.
Consider \Cref{algorithm:main-loop} applied to the graph presented in \Cref{fig:blockgraph}.
In the consecutive pictures the dashed line illustrates how the knowledge about the possible colorings of subblock graphs is constructed.
Of course, the algorithm is recursive, hence the procedures are called in a top-bottom manner.
However, also of course, the knowledge is built in bottom-top manner.

\newcommand{\nshift}{(0.0cm,0.5cm)}
\newcommand{\wshift}{(-0.5cm, -0.25cm)}
\newcommand{\eshift}{(0.5cm, -0.25cm)}
\newcommand{\sshift}{(0cm, -0.5cm)}

\newcommand{\basic}
{
\coordinate (J1pos) at (0,0);
\coordinate (J1posn) at ([shift=\nshift]J1pos);
\coordinate (J1posw) at ([shift=\wshift]J1pos);
\coordinate (J1poss) at ([shift=\sshift]J1pos);
\coordinate (J1pose) at ([shift=\eshift]J1pos);

\coordinate (J2pos) at (0,-1);
\coordinate (J2posn) at ([shift=\nshift]J2pos);
\coordinate (J2posw) at ([shift=\wshift]J2pos);
\coordinate (J2poss) at ([shift=\sshift]J2pos);
\coordinate (J2pose) at ([shift=\eshift]J2pos);

\coordinate (J3pos) at (-1,-2);
\coordinate (J3posn) at ([shift=\nshift]J3pos);
\coordinate (J3posw) at ([shift=\wshift]J3pos);
\coordinate (J3poss) at ([shift=\sshift]J3pos);
\coordinate (J3pose) at ([shift=\eshift]J3pos);

\coordinate (J4pos) at (1,-2);
\coordinate (J4posn) at ([shift=\nshift]J4pos);
\coordinate (J4posw) at ([shift=\wshift]J4pos);
\coordinate (J4poss) at ([shift=\sshift]J4pos);
\coordinate (J4pose) at ([shift=\eshift]J4pos);

\coordinate (J5pos) at (-2,-3);
\coordinate (J5posn) at ([shift=\nshift]J5pos);
\coordinate (J5posw) at ([shift=\wshift]J5pos);
\coordinate (J5poss) at ([shift=\sshift]J5pos);
\coordinate (J5pose) at ([shift=\eshift]J5pos);

\coordinate (J6pos) at (-1,-3);
\coordinate (J6posn) at ([shift=\nshift]J6pos);
\coordinate (J6posw) at ([shift=\wshift]J6pos);
\coordinate (J6poss) at ([shift=\sshift]J6pos);
\coordinate (J6pose) at ([shift=\eshift]J6pos);

\coordinate (J7pos) at (-0,-3);
\coordinate (J7posn) at ([shift=\nshift]J7pos);
\coordinate (J7posw) at ([shift=\wshift]J7pos);
\coordinate (J7poss) at ([shift=\sshift]J7pos);
\coordinate (J7pose) at ([shift=\eshift]J7pos);

\coordinate (J8pos) at (1,-3);
\coordinate (J8posn) at ([shift=\nshift]J8pos);
\coordinate (J8posw) at ([shift=\wshift]J8pos);
\coordinate (J8poss) at ([shift=\sshift]J8pos);
\coordinate (J8pose) at ([shift=\eshift]J8pos);

\coordinate (J9pos) at (2,-3);
\coordinate (J9posn) at ([shift=\nshift]J9pos);
\coordinate (J9posw) at ([shift=\wshift]J9pos);
\coordinate (J9poss) at ([shift=\sshift]J9pos);
\coordinate (J9pose) at ([shift=\eshift]J9pos);

\node[main] at (J1pos) (J1) {$J_1$};
\node[main] at (J2pos) (J2) {$J_2$};
\node[main] at (J3pos) (J3) {$J_3$};
\node[main] at (J4pos) (J4) {$J_4$};

\node[main] at (J5pos) (J5) {$J_5$};
\node[main] at (J6pos) (J6) {$J_6$};
\node[main] at (J7pos) (J7) {$J_7$};
\node[main] at (J8pos) (J8) {$J_8$};
\node[main] at (J9pos) (J9) {$J_9$};

\draw (J1) -- (J2) -- (J3) -- (J4) -- (J2);
\draw (J3) -- (J5) -- (J6) -- (J3);
\draw (J3) -- (J7);
\draw (J4) -- (J8);
\draw (J9) -- (J4);
}

\begin{figure}[htpb]
		\centering
	\begin{subfigure}[t]{.3\textwidth}
	\begin{tikzpicture}[align=center, main/.style = {draw, circle},  inner sep=1pt]
	\basic
	\end{tikzpicture}
	\subcaption{An example block graph $G$ with $T_G$ rooted in $J_1$}
	\label{fig:blockgraph}
	\end{subfigure}%
	\hfill
	\begin{subfigure}[t]{.3\textwidth}
	\centering
	\begin{tikzpicture}[align=center, main/.style = {draw, circle},  inner sep=1pt]
	\basic
	\draw[dashed] plot [smooth, tension=2] coordinates{(J5posw) (J5posn) (J5pose)};
	\draw[dotted] plot [smooth, tension=2] coordinates{(J6posw) (J6posn) (J6pose)};
	\end{tikzpicture}
	\subcaption
	{
		Sets $P(J_5)$ (dashed) and $P(J_6)$ (dotted).
 	}
	\label{fig:StartingColorings}
	\end{subfigure}%
	\hfill
	\begin{subfigure}[t]{.3\textwidth}
	\begin{tikzpicture}[align=center, main/.style = {draw, circle},  inner sep=1pt]
	\basic
	\draw[dashed] plot [smooth, tension=2] coordinates{(J5posw) (J5posn) (J6posn) (J6pose)};
	\end{tikzpicture}
	\subcaption
	{
		The set $P(\{J_5, J_6\})$ (dashed) obtained by \Cref{algorithm:merger-of-earlier-cliques-and-this-clique} from $P(J_5)$ and $P(J_6)$.
	}
	\label{fig:ColoringsCombined}
	\end{subfigure}%

	\begin{subfigure}[t]{.3\textwidth}
	\begin{tikzpicture}[align=center, main/.style = {draw, circle},  inner sep=1pt]
	\basic
	\draw[dashed] plot [smooth, tension=1] coordinates{(J5posw) (J5posn) (J3posw) (J3posn) (J3pose) (J6pose)};
	\end{tikzpicture}
	\subcaption{
		The set $P_1(J_3)$ (dashed) obtained by using \Cref{algorithm:colorings-of-block-without-parent-to-colorings-with-parent} from $P(\{J_5, J_6\})$.
 	}
	\label{fig:SecondBranch}
	\end{subfigure}%
	\hfill
	\begin{subfigure}[t]{.3\textwidth}
	\begin{tikzpicture}[align=center, main/.style = {draw, circle},  inner sep=1pt]
		\basic
		\draw plot [smooth, tension=1] coordinates{(J5posw) (J5posn) (J3posw) (J3posn) (J3pose) (J6pose)};
		\draw[dashed] plot [smooth, tension=2] coordinates{(J7posw) (J7posn) (J7pose)};
	\end{tikzpicture}
	\subcaption{
		The set $P(J_7)$ (dashed) is constructed.
	}
	\label{fig:SeconBrnach}
	\end{subfigure}%
	\hfill
	\begin{subfigure}[t]{.3\textwidth}
	\begin{tikzpicture}[align=center, main/.style = {draw, circle},  inner sep=1pt]
		\basic
		\draw plot [smooth, tension=1] coordinates{(J5posw) (J5posn) (J3posw) (J3posn) (J3pose) (J6pose)};
		\draw[dashed] plot [smooth, tension=1] coordinates{(J7posw) (J3poss)  (J3posw) (J3posn) (J3pose) (J7posn) (J7pose)};
	\end{tikzpicture}
	\subcaption{
		And transformed into $P_2(J_3)$ (dashed) by \Cref{algorithm:merger-of-earlier-cliques-and-this-clique}.
	}
	\label{fig:SecondBranchWithParent}
	\end{subfigure}%

	\begin{subfigure}[t]{.3\textwidth}
	\begin{tikzpicture}[align=center, main/.style = {draw, circle},  inner sep=1pt]
		\basic
		\draw[dashed] plot [smooth, tension=1] coordinates{(J5posw) (J5posn) (J3posw) (J3posn) (J3pose) (J7posn) (J7pose)};
	\end{tikzpicture}
	\subcaption{
	 	\Cref{algorithm:composing-subcolorings-for-subblockgraphs} merges $P_1(J_3)$ with $P_2(J_3)$ into $P(J_3)$ (dashed).
	}
	\label{fig:LeftBranchesMerged}
	\end{subfigure}%
	\hfill
	\begin{subfigure}[t]{.3\textwidth}
	\begin{tikzpicture}[align=center, main/.style = {draw, circle},  inner sep=1pt]
		\basic
		\draw plot [smooth, tension=1] coordinates{(J5posw) (J5posn) (J3posw) (J3posn) (J3pose) (J7posn) (J7pose)};
		\draw[dashed] plot [smooth, tension=1] coordinates{(J8posw) (J4posw) (J4posn) (J4pose) (J8pose)};
		\draw[dotted] plot [smooth, tension=1] coordinates{(J9posw) (J8posn) (J4posw) (J4posn) (J4pose) (J9posn) (J9pose)};
	\end{tikzpicture}
	\subcaption{
		In similar fashion (by first calculating $P(J_8)$ and calling \Cref{algorithm:colorings-of-block-without-parent-to-colorings-with-parent}) we obtain $P_1(J_4)$ (dashed) and $P_2(J_4)$ (dotted).
	}
	\label{fig:ThirdAndFourthBranch}
	\end{subfigure}%
	\hfill
	\begin{subfigure}{.333\textwidth}
	\begin{tikzpicture}[align=center, main/.style = {draw, circle},  inner sep=1pt]
		\basic
		\draw plot [smooth, tension=1] coordinates{(J5posw) (J5posn) (J3posw) (J3posn) (J3pose) (J7posn) (J7pose)};
		\draw[dashed] plot [smooth, tension=1] coordinates{(J8posw) (J4posw) (J4posn) (J4pose) (J9posn) (J9pose)};
	\end{tikzpicture}
	\subcaption{		  
		We call \Cref{algorithm:composing-subcolorings-for-subblockgraphs} to obtain $P(J_4)$ (dashed).
	}
	\label{fig:RightBranchesMerged}
	\end{subfigure}%
	\label{fig:exampleForPTAS}
	\begin{subfigure}[t]{.3\textwidth}
	\begin{tikzpicture}[align=center, main/.style = {draw, circle},  inner sep=1pt]
		\basic
		\draw[dashed] plot [smooth, tension=1] coordinates{(J5posw) (J5posn) (J3posw) (J3posn) (J4posn) (J4pose) (J9posn) (J9pose)};
	\end{tikzpicture}
	\subcaption{
		On the obtained colorings, i.e. $P(J_3)$ and $P(J_4)$, we call \Cref{algorithm:merger-of-earlier-cliques-and-this-clique} to obtain $P(\{J_3, J_4\})$ (dashed).
	}
	\label{fig:AllBranchesMerged}
	\end{subfigure}%
	\hfill
	\begin{subfigure}[t]{.3\textwidth}
	\begin{tikzpicture}[align=center, main/.style = {draw, circle},  inner sep=1pt]
		\basic
		\draw[dashed] plot [smooth, tension=1] coordinates{(J5posw) (J5posn) (J3posw) (J3posn) (J2posw) (J2posn) (J2pose) (J4posn) (J4pose) (J9posn) (J9pose)};
	\end{tikzpicture}
	\subcaption{
		And call \cref{algorithm:colorings-of-block-without-parent-to-colorings-with-parent} to obtain $P_1(J_2)$ (dashed), which is also equal to $P(J_2)$ by a dummy call of \cref{algorithm:composing-subcolorings-for-subblockgraphs}).
	}
	\label{fig:J2Subtree}
	\end{subfigure}%
	\hfill
	\begin{subfigure}[t]{.3\textwidth}
	\begin{tikzpicture}[align=center, main/.style = {draw, circle},  inner sep=1pt]
		\basic
		\draw[dashed] plot [smooth, tension=0.5] coordinates{(J5posw) (J5posn) (J3posw) (J3posn) (J1posw) (J1posn) (J1pose) (J4posn) (J4pose) (J9posn) (J9pose)};
	\end{tikzpicture}
	\subcaption{ 
		Finally, we call \cref{algorithm:colorings-of-block-without-parent-to-colorings-with-parent} to obtain $P_1(J_1)$ (dashed), which is equal to $P(J_1)$ by a dummy call of \cref{algorithm:composing-subcolorings-for-subblockgraphs}.
	}
	\label{fig:J1Subtree}
	\end{subfigure}%
	\caption{
		The figures present the order in which the knowledge about the colorings of subgraphs is constructed.
		In consecutive figures the lines represent subgraphs corresponding to the \emph{sets} of distinct colorings that are consecutively constructed.
		Observe also that the calls of the algorithms give the intuitive meaning of the presented algorithms.
	}
\end{figure}

\begin{figure}[htpb]
	\centering
	\begin{subfigure}[t]{.3\textwidth}
	\begin{tikzpicture}[align=center, main/.style = {draw, circle},  inner sep=1pt]
	\basic
	\node[] at (J1posn) {$M_1$};
	\node[] at (J2poss) {$M_3$};
	\node[] at (J3posn) {$M_1$};
	\node[] at (J4posn) {$M_2$};
	\node[] at (J5poss) {$M_2$};
	\node[] at (J6poss) {$M_3$};
	\node[] at (J7poss) {$M_2$};	
	\node[] at (J8poss) {$M_3$};
	\node[] at (J9poss) {$M_1$};	
	\end{tikzpicture}
	\end{subfigure}
	\hfill
	\begin{subfigure}[t]{.3\textwidth}
	\begin{tikzpicture}[align=center, main/.style = {draw, circle},  inner sep=1pt]
	\basic
	\node[] at (J1posn) {$M_1$};
	\node[] at (J2poss) {$M_3$};
	\node[] at (J3posn) {$M_2$};
	\node[] at (J4posn) {$M_1$};
	\node[] at (J5poss) {$M_3$};
	\node[] at (J6poss) {$M_1$};
	\node[] at (J7poss) {$M_3$};	
	\node[] at (J8poss) {$M_2$};
	\node[] at (J9poss) {$M_2$};	
	\end{tikzpicture}
	\end{subfigure}
	\hfill
	\begin{subfigure}[t]{.3\textwidth}
	\begin{tikzpicture}[align=center, main/.style = {draw, circle},  inner sep=1pt]
	\basic
	\node[] at (J1posn) {$M_1$};
	\node[] at (J2poss) {$M_3$};
	\node[] at (J3posn) {$M_2$};
	\node[] at (J4posn) {$M_1$};
	\node[] at (J5poss) {$M_1$};
	\node[] at (J6poss) {$M_3$};
	\node[] at (J7poss) {$M_3$};	
	\node[] at (J8poss) {$M_2$};
	\node[] at (J9poss) {$M_2$};	
	\end{tikzpicture}
	\end{subfigure}
	\caption{
		The sample block graph with sample colorings represented by $(0, 0, 0, 1; 0, 0, 0, 2)$.
		The meaning is that the color assigned to $J_1$ is of cardinality $3$ and there are $2$ other colors of cardinality $3$ each.
	}
	\label{fig:blockgraph_colord}
\end{figure}
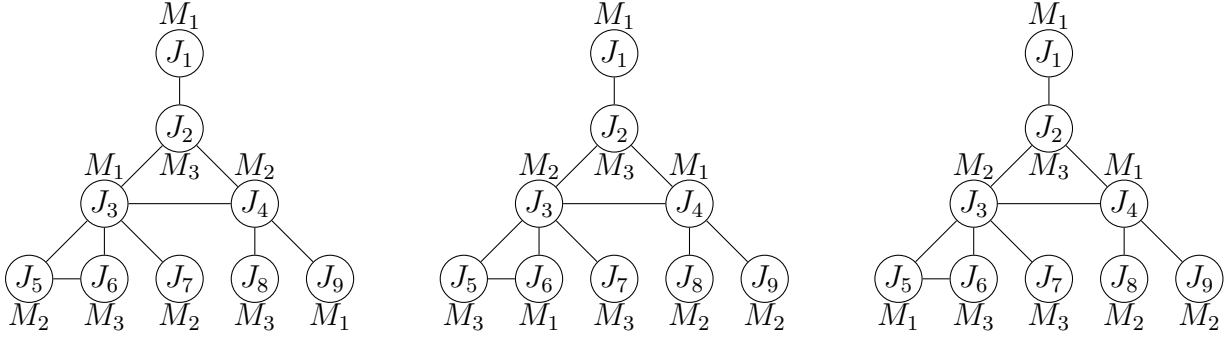

\begin{itemize}
	\item Without loss of generality we can assume that the first non-recursive application of the algorithms will be for block $B_1 = \{J_3, J_5, J_6\}$, and from this block we build the knowledge about the possible colorings of the graph.
	The \emph{sets} of the colorings are represented in \cref{fig:StartingColorings}.
	Due to the fact that both $J_5$ and $J_6$ are simplicial each of them will be represented by a tuple $\{(0,1,0,0;0,0,0,0)\}$ (obtained by lines 6--12 of \Cref{algorithm:main-loop}).
	\item Then they are merged into $P(\{J_5, J_6\}) = \{(0, 2, 0, 0; 1, 0, 0, 0)\}$\footnote{Here we write a pair of vectors $(a, b)$ as a single vector with semicolon separating both vectors.} by \Cref{algorithm:merger-of-earlier-cliques-and-this-clique} as there is only one possible coloring of $J_5$ and $J_6$. 
	The first part $(0, 2, 0, 0)$ denotes that all vertices in the argument (i.e. here $J_5$ and $J_6$) use exactly two different colors of cardinality $1$, and the second part $(1, 0, 0, 0)$ denotes that all other vertices in the partial coloring (here there are none) use one color of cardinality $0$.
	\item Using \Cref{algorithm:colorings-of-block-without-parent-to-colorings-with-parent} we transform $P(\{J_5, J_6\})$ into $P_1(J_3) = \{(0, 1, 0, 0; 0, 2, 0, 0)\}$: the only possible coloring of $J_3$, $B_1$, and its descendants in $T_G$ uses one color of cardinality $1$ for the argument (i.e. $J_3$), and two colors of cardinality $1$ for the remaining colored part.
	\item Next, for $B_2 = \{J_3, J_7\}$ the inner loop (lines 6--12) returns a set $P(\{J_7\}) = \{(0, 1, 0, 0; 2, 0, 0, 0)\}$.
	\item \Cref{algorithm:merger-of-earlier-cliques-and-this-clique} transforms it into $P_2(J_3) = \{(0, 1, 0, 0; 1, 1, 0, 0)\}$. 
	\item Finally, by \Cref{algorithm:composing-subcolorings-for-subblockgraphs}, we merge the obtained $P_1(J_3)$ with $P_2(J_3)$ and return $P(J_3) = \{(0, 1, 0, 0; 0, 1, 1, 0)\}$:
	the first vector has to be $(0, 1, 0, 0)$ because both first vertices of the only element for the previous $P(J_3)$ and $P_2(J_3)$ imply that $J_3$ is the only vertex colored with its color;
	the second vector has to be $(0, 1, 1, 0)$ because if we merge the only coloring of $\{J_5, J_6\}$ using two colors with cardinality $1$ with the only coloring of $\{J_7\}$ using one color with cardinality $1$ we have to get a coloring with one color with cardinality $2$ and one with cardinality $1$, because the third color is already used by $J_3$.
	\item The colorings for subgraph rooted in $J_4$ are obtained next: first, by obtaining coloring of $J_8$, as a simplical vertex, and then transforming it by \cref{algorithm:colorings-of-block-without-parent-to-colorings-with-parent} into a coloring of $\{J_4, J_8\}$.
	Similarly for first $J_9$ and then $\{J_4, J_9\}$.
	\item Finally, by \cref{algorithm:composing-subcolorings-for-subblockgraphs} we obtain the colorings of subgraph rooted in $J_4$.
	\item Of course (due to the recursive nature), both the sets of the colorings (the ones for subgraph rooted in $J_3$ and the ones rooted in $J_4$) are obtained due to the processing of block $\{J_2, J_3, J_4\}$.
	I.e., having a single block $B_2 = \{J_2, J_3, J_4\}$ we iteratively obtain in line 11 first $P(\{J_3\}) = \{(0, 1, 0, 0; \allowbreak 0, 1, 1, 0)\}$ and then $P(\{J_3, J_4\}) = \{(0, 0, 1, 1; 0, 0, 1, 0), (0, 0, 2, 0; 0, 0, 0, 1), (0, 0, 0, 2; 0, 1, 0, 0), (0, 1, 0, 1; 0, 0, 0, 1)\}$.
	In the last set, the first two elements come from combining $(0, 1, 0, 0; 0, 1, 1, 0)$ with $(0, 1, 0, 0; 0, 2, 0, 0)$, as we have to:
	identify color of $J_3$ (of cardinality $1$) with one of the two colors not used by $J_4$ in its partial coloring (also of cardinality $1$),
	identify color of $J_4$ (of cardinality $1$) with one of the two colors not used by $J_3$ in its partial coloring (of cardinality $2$ or $1$, hence two elements), and
	identify the remaining third color from both partial colorings.
	Similarly, we get the latter two elements, as well as the duplicate of the first element, from combining $(0, 1, 0, 0; 0, 1, 1, 0)$ with $(0, 1, 0, 0; 1, 0, 1, 0)$. 
	Note that in this case we cannot identify two colors of cardinality $2$ in both partial colorings, since it would violate the $\cmaxcost = 3$ condition.
	The combination of the colorings is done by \Cref{algorithm:merger-of-earlier-cliques-and-this-clique}.
	\item Now, \Cref{algorithm:colorings-of-block-without-parent-to-colorings-with-parent} transforms $P(\{J_3, J_4\})$ into $P_1(J_2) = \{(0, 0, 0, 1; 0, 0, 1, 1), (0, 0, 1, 0; 0, 0, 0, 2)\}$. 
	At this stage, we discarded several solutions which did not allow for a legal coloring of $J_2$ with a color not used by either $J_3$ or $J_4$. Since there is only one block, we get $P(J_2) = P_1(J_2)$ and return it.
	\item Of course it was done as a processing of the block $\{J_1, J_2\}$ when $J_2$ was visited.
	Hence finally by \Cref{algorithm:colorings-of-block-without-parent-to-colorings-with-parent} for $v = J_1$ we get $P(J_1) = P_1(J_1) = \{(0, 0, 0, 1; 0, 0, 0, 2)\}$ because the second vector from $P(J_2)$ cannot correspond to any feasible coloring of $J_1$. 
	Anyway $P(J_1)$ contains all feasible colorings of $G$, and since it is nonempty then we can conclude that there exists a coloring. In particular, as presented in \Cref{fig:blockgraph_colord}, it is a coloring where the cardinalities of all colors are equal to $3$.
	
\end{itemize}

\subsection{Unit time jobs} \label{unit:ptas}

In this section we formulate a PTAS for $P|G = \blockgraph, p_j = 1|\cmaxcost$ problem.
Note that for an instance of $Pm|G = \blockgraph, p_j=1|\cmaxcost$ either (1) there is no schedule if $\omega(G) > m$ or (2) the optimal schedule can be found in polynomial time because one can easily find a tree decomposition with width $\omega(G) \le m$. 
In the second case we can immediately use the FPTAS discussed in \cite{bodlaender1994scheduling}, 

After these preliminaries let us again consider \Cref{alg:2apx_identical_block}, but this time with respect to unit time jobs, i.e. when applied to $P|G = \blockgraph, p_j = 1|\cmaxcost$ problem.

\begin{lemma}
\label{lem:equal_coloring_block}
	Let an instance of $P|G = \blockgraph, p_j = 1|\cmaxcost$ with $n \ge 1$ jobs, where $n = d(m-1) + r$ for some $d \ge 0$ and $r \in \{0, \ldots, m-2\}$, and $m \ge 2$ machines be given.
    The greedy algorithm returns a schedule, in which there are assigned at most $\bettermaxclass$ jobs to every machine.
	Moreover, if $r = 0$, then the constructed schedule has at least one machine with a load strictly less than $\bettermaxclass$; otherwise, it has at least $m - r$ such machines.
\end{lemma}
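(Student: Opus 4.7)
The plan is to reinterpret \Cref{alg:2apx_identical_block} in the unit-time setting as a sequence of $n$ single-job greedy placements: for every job we assign it to the least-loaded machine not in the forbidden set, which is empty for jobs of the root block and is the singleton containing the parent cut-vertex's machine for every non-root block. Since all jobs in a block are identical and the algorithm picks the $|B|-1$ (resp.\ $|B|$) smallest-load eligible machines simultaneously, this step-by-step view produces exactly the same load vector as the batched one.

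I would then prove by induction on $t \in \{1, \ldots, n\}$ the following invariant after $t$ placements, writing $t = d'(m-1) + r'$ with $r' \in \{0, \ldots, m-2\}$: (i) every machine has load at most $\lceil t/(m-1) \rceil$, and (ii) if $r' = 0$ then at least one machine has load at most $d' - 1$, while if $r' > 0$ then at least $m - r'$ machines have load at most $d'$. The base $t = 1$ is immediate since one machine has load $1$ and the other $m-1$ have load $0$. The key property driving the inductive step is that after a greedy placement the target machine's new load equals $1 + \min_{M_i \neq F} L_i$, where $F$ is the forbidden machine.

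Assuming the invariant at $t - 1$ with parameters $(d'', r'')$, I would case-split on $r''$: the generic case $r'' \in \{1, \ldots, m-3\}$, the boundary case $r'' = 0$, and the roll-over $r'' = m - 2$ in which $d' = d'' + 1$ and $r' = 0$. In every subcase the inductive count of $\max(1, m - r'')$ machines at load at most $d''$ guarantees at least one such machine is eligible (for $r'' > 0$ because $m - r'' \geq 2$, and for $r'' = 0$ because a simple density argument yields a machine at load at most $d'' - 1$), so the greedy placement cannot push the maximum load beyond $\lceil t/(m-1) \rceil$; routine bookkeeping then shows at most one machine leaves the ``$\leq d''$'' stratum, which delivers (ii) with the updated parameters.

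The principal obstacle is the roll-over step $r'' = m - 2$, where the two bounds tighten simultaneously: $d'$ grows by one and $r'$ resets to $0$, so the number of tolerated low-load machines drops from $2$ to $1$ while the maximum must not exceed $d' = d'' + 1$. Both parts are carried by the inductive guarantee of two machines at load at most $d''$ before the placement, since one eligible copy bounds the maximum after the placement by $d'' + 1$, and even after one such machine is promoted to $d'' + 1$ the second remains at load at most $d''$, certifying (ii). Evaluating the invariant at $t = n$ gives the maximum-load bound $\lceil n/(m-1) \rceil$, together with the strict-less-than-$d$ guarantee in the $r = 0$ case and the count of $m - r$ low-load machines in the $r > 0$ case, which are exactly the two claims of the lemma.
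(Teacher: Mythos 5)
There is a genuine gap, and it starts with your very first step. Your sequential reinterpretation of \Cref{alg:2apx_identical_block} is incorrect as stated: the forbidden set for a job of a block $B$ is not just the singleton containing the parent cut-vertex's machine, it must also contain every machine already used by the previously placed jobs of the same block, since the jobs of $B$ are pairwise in conflict. With only the singleton exclusion, your process can place two jobs of one block on the same machine and produces a different (and infeasible) load vector. Concretely, with $m=3$ and boundary loads $(0,2,3)$, a new block of size $3$ attached at the job on $M_3$ sends both new jobs to $M_1$ under your rule (loads $(2,2,3)$, conflicting jobs together), whereas the algorithm assigns them to $M_1$ and $M_2$ (loads $(1,3,3)$). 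So the claimed equivalence of load vectors fails, and with it the premise ``new load $=1+\min_{M_i\neq F}L_i$'' on which your inductive step rests.

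The deeper problem is that once the forbidden set is corrected (it can contain up to $m-1$ machines for the last job of a block), your per-job invariant is no longer inductive. Knowing only that at least $\max(1,m-r'')\ge 2$ machines have load at most $d''$ does not guarantee that any of them is eligible: all of them may already carry a job of the current block or the parent cut-vertex. For example, with $m=4$, $t-1=7$ (so $d''=2$, $r''=1$), the loads $(3,1,1,2)$ satisfy your invariant, yet if the forbidden set is $\{M_2,M_3,M_4\}$ the greedy must load $M_1$ to $4>\lceil 8/3\rceil$. Such states are in fact unreachable, but your invariant does not exclude them; ruling them out requires tracking how many jobs of the current block have been placed, that each machine receives at most one of them, and how that count compares with $m-r'$ at the block boundary --- which is precisely the bookkeeping the paper performs by inducting on blocks (applying the hypothesis to the prefix graph $G'$ of the first $k$ blocks and case-splitting on $r'$ and on $n-n'$) rather than on individual jobs. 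To salvage your route you would have to strengthen the induction hypothesis with this within-block information, at which point it essentially collapses into the paper's per-block argument.
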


\begin{proof}
The proof is by induction on the number of blocks.
For a graph having exactly one block the theorem holds: we assign at most one job to every machine, and clearly $\bettermaxclass \ge 1$.
It is also easy to verify that if $r = 0$ then it has to be the case $n = m - 1$ for the problem to be feasible, thus there is one machine with a load less than $1$.
Similarly, the cases for $n \le m - 2$ and $n = m$ hold.

Now, assume that the theorem holds for all graphs composed of $k$ blocks, and consider any block graph $G$ composed of $k+1$ blocks.
Let us fix a subgraph $G'$ of $G$ on $n'$ vertices including only the first $k$ blocks according to its block-cut tree pre-order traversal.
Let $n' = d' (m - 1) + r'$ be the number of the vertices of $G'$.
By induction assumption the greedy algorithm for $G'$ returns a schedule whose total makespan does not exceed $\ceil{n' / (m - 1)}$.

We schedule the remaining jobs from the last block according to the greedy algorithm, assigning at most one job to each machine.

If $r' = 0$ then it is clear that the limit on the number of jobs increased by at least one since $\bettermaxclass \ge d' + 1 = \ceil{n' / (m - 1)} + 1$. Thus the extended schedule satisfies the total makespan condition.
By considering separately cases $n - n' \le m - 2$, $n-n' = m - 1$, and $n-n' = m$ we can verify directly that we have the required number of machines that are processing less than $\bettermaxclass$ jobs.
\begin{itemize}
	\item In the first case we have at most $r = n-n'$ machines where the new jobs are assigned.
	Hence only these machines can have $\bettermaxclass$ jobs in the extended schedule. Thus $m - r$ machines have to be loaded with less than  $\bettermaxclass$ jobs.
	\item In the second case $r = r' = 0$ holds and clearly there is at least one machine with a load less than $\bettermaxclass$.
	\item The last case means that a clique on $m$ vertices is added -- the $(k+1)$-th block has no common vertex with other $k$ blocks.
	However, this means that the new upper limit is $\bettermaxclass = \ceil{n' / (m - 1)} + 2$ and hence there are $m$ machines with a load less than $\bettermaxclass$, in fact.
\end{itemize}

For $r' > 0$ we can split the proof into four cases.
In each case, by induction, there are at least $m - r'$ machines processing less than $\ceil{n' / (m - 1)}$ jobs.
However, due to the fact that the blocks may have a common cut-vertex, perhaps one of them is not available for scheduling.
\begin{itemize}
	\item If $r' + n-n' < m - 1$, or equivalently $n-n' < m - r - 1$, then all jobs of $G - G'$ can be scheduled on machines processing less than $\ceil{n' / (m - 1)}$ jobs.
	In the end $m - r' - (n-n') = m - r$ machines process less than $\bettermaxclass$ jobs.
	\item If $r' + n-n' = m - 1$ then we still schedule the jobs of $G - G'$ on the machines processing less than $\ceil{n' / (m - 1)}$ jobs.
	After extending the schedule there remains at least one machine with a load less than $\ceil{n' / (m - 1)}$.
	\item If $m \le r' + n-n' \le 2 m - 3$ then $r = r' + (n-n')-(m-1)$ and, by induction, there are at most $r'$ machines processing $\ceil{n' / (m - 1)}$ jobs.
	Consider the case that one of the machines processing less than $\ceil{n' / (m - 1)}$ cannot be scheduled upon.
	Moreover, let $n-n' > m - r' - 1$
	(we skip other subcases because they are similar and easier to prove).
	Observe that no more than $r' - (n - n' - (m-r'-1))$ machines were processing $\ceil{n' / (m - 1)}$ jobs and are processing $\bettermaxclass$ jobs now.
	Hence, there will be at least $r' - (n - n' - (m-r'-1)) + (m-r'-1) + 1$ machines that are processing less than $\bettermaxclass$.
	By a simple transformation we obtain the induction thesis.
	\item if $r' + n-n' \ge 2 m - 2$ then it has to be the case $r = 0$, $n-n' = m$, and $r' = m - 2$.
	Hence, again the added block is a clique on $m$ vertices.
	By induction, there were at least $1$ machine processing less than $\ceil{n' / (m - 1)}$ jobs.
	Clearly, at least this machine is processing less than $\bettermaxclass$ in the extended schedule.
\end{itemize}
In the first two cases $\bettermaxclass = \ceil{n' / (m - 1)}$, but the jobs are assigned only to machines that are processing less than $\ceil{n' / (m - 1)}$ jobs.
The machines are guaranteed to exist, by induction.
In the other cases $\bettermaxclass > \ceil{n' / (m - 1)}$.
Thus by assigning at most one additional job to every machine we cannot violate the total makespan condition.

We verified that both the total makespan condition and the condition on the number of machines processing less than $\bettermaxclass$ jobs are fulfilled.
Therefore we see that extending the scheduling for the last block is always possible which concludes the proof of the induction step.
\end{proof} 

Interestingly, \Cref{theorem:final-k-coloring-block-graphs} and \Cref{alg:2apx_identical_block} together with an already known PTAS for the fixed number of machines and conflict graphs with bounded treewidth are sufficient to construct a PTAS for identical machines, unit time jobs, and block conflict graphs which is the main result of this section.
\begin{theorem}
\label{thm:ptas_identical_unit_block}
There exists a PTAS for $P|G = \blockgraph, p_j = 1|\cmaxcost$.
\end{theorem}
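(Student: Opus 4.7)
Fix $\varepsilon > 0$ and set $k_0 = \lceil 2/\varepsilon \rceil$. The plan is a two-case dichotomy based on the number of machines $m$, after first checking that $\omega(G) \le m$ (otherwise the instance is infeasible, which is detected trivially from the block-cut tree).

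If $m \le k_0 + 1$, then since every block graph is chordal we have $\treewidth(G) = \omega(G) - 1 \le m - 1 \le k_0$, so both the number of machines and the treewidth are bounded by a constant depending only on $\varepsilon$. Thus one can directly invoke the FPTAS for $Pm|\treewidth(G)\le k|\cmaxcost$ of Bodlaender et al.\ to obtain a $(1+\varepsilon)$-approximation in polynomial time.

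If instead $m > k_0 + 1$, I would first apply \Cref{theorem:final-k-coloring-block-graphs} iteratively for $k = 1, 2, \ldots, k_0$; this is polynomial in $n$ and $m$ because $k_0$ is a constant and each call runs in $\Osymbol(n^2 k^3 m^{7k+8})$ time. If some such $k$ certifies a feasible schedule, the smallest such $k$ equals $\optcmaxcost$ and we return the corresponding optimal schedule. Otherwise $\optcmaxcost > k_0 \ge 2/\varepsilon$, and I would invoke the greedy \Cref{alg:2apx_identical_block}. By \Cref{lem:equal_coloring_block}, its output makespan is at most $\bettermaxclass \le n/(m-1) + 1$, while the pigeonhole lower bound gives $\optcmaxcost \ge \lceil n/m \rceil \ge n/m$. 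Combining these,
\[
\frac{n}{m-1} + 1 \le \frac{m}{m-1}\,\optcmaxcost + 1 = \left(1 + \frac{1}{m-1}\right)\optcmaxcost + 1.
\]
Since $m > 2/\varepsilon + 1$ forces $1/(m-1) < \varepsilon/2$ and $\optcmaxcost > 2/\varepsilon$ forces $1 < (\varepsilon/2)\,\optcmaxcost$, the approximation ratio is at most $1+\varepsilon$ as required.

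The main obstacle is picking the threshold $k_0$ so that the two subalgorithms dovetail cleanly. Small optimum values would wreck the greedy analysis because of the unavoidable additive $+1$, so they must be absorbed by the exact subroutine for bounded makespan; this is affordable only when $k_0$ is a constant, which is exactly when $m$ is large in the second case. In the complementary regime of small $m$, the bounded-treewidth FPTAS covers all optima simultaneously. The choice $k_0 = \lceil 2/\varepsilon \rceil$ is dictated by the requirement that both the multiplicative slack $1/(m-1)$ and the additive slack $1$ each contribute at most $(\varepsilon/2)\,\optcmaxcost$ to the greedy bound.
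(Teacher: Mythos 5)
Your proposal is correct and follows essentially the same route as the paper's proof: the same trichotomy between small optimum makespan (handled exactly by \Cref{theorem:final-k-coloring-block-graphs}), small $m$ (handled by the Bodlaender et al.\ scheme for bounded treewidth and fixed $m$), and large $m$ with large optimum (handled by \Cref{alg:2apx_identical_block} via \Cref{lem:equal_coloring_block} and the bound $\optcmaxcost \ge \lceil n/m\rceil$). The only differences are cosmetic — you invoke the bounded-makespan subroutine only when $m$ is large and use the bounded-treewidth scheme as a $(1+\varepsilon)$-approximation directly, whereas the paper runs it with $\delta = \frac{1}{n+1}$ to get an exact answer — and neither change affects correctness.
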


\begin{proof}
    First let us assume that there exists a schedule with $\cmaxcost \le \frac{2}{\varepsilon}$.
    If this is the case, from \Cref{theorem:final-k-coloring-block-graphs}, it follows that we can find an optimal solution using \Cref{algorithm:main-loop} in time $\Osymbol(n^2 (2/\varepsilon)^3m^{14/\varepsilon + 8})$.

    Next let us suppose that $m \le \frac{2}{\varepsilon} + 1$.
    In this case we can use PTAS for $Pm|\treewidth(G) \le m|\cmaxcost$ from \cite{bodlaender1994scheduling}, i.e. $(1 + \delta)$-approximate algorithm running in time $\Osymbol(\delta^{-m} n^{m+1} m^{2 m})$. 
    For $\delta = \frac{1}{n+1}$ it is guaranteed to return the optimal solution in time $\Osymbol(n^{2/\/\varepsilon + 1}n^{2/\/\varepsilon + 2}(2/\/\varepsilon + 1)^{{4/\/\varepsilon} + 2})$.

    Finally, if $m > \frac{2}{\varepsilon} + 1$ and we did not find any optimal solution with $\optcmaxcost \le \frac{2}{\varepsilon}$ then we apply \Cref{alg:2apx_identical_block}.
    From \Cref{lem:equal_coloring_block} we infer that
    \begin{align*}
        \cmaxcost & < \frac{n}{m - 1} + 1
              < \frac{n}{m} \left(1 + \frac{1}{m - 1}\right) + \optcmaxcost \cdot \frac{\varepsilon}{2}
              < \frac{n}{m} \cdot(1 + \frac{\varepsilon}{2}) + \optcmaxcost \cdot \frac{\varepsilon}{2}
              \le \optcmaxcost \left(1 + \varepsilon\right).
    \end{align*}
    In the last inequality we used a simple lower bound $\optcmaxcost \ge \left\lceil\frac{n}{m}\right\rceil$.
\end{proof}

\section{Uniform machines}\label{sec:uniform}

\subsection{Bounded number of cut-vertices, unit time jobs}

To find an optimal solution for the problem $Q|G = \blockgraph, \cutvertices(G) \le k, p_j = 1|\cmaxcost$, we combine three ideas: binary search over all possible objective values, exhaustive search over all possible valid assignment of all cut-vertices to machines (i.e. assigning adjacent cut-vertices to different machines), and solving an auxiliary flow problem.

To simplify the problem we concentrate on the subproblem in which we guessed the value $C$ and we have to determine if there exists a feasible schedule with a makespan at most $C$.
Observe that in any schedule the optimal $\cmaxcost$ is determined by the number of jobs assigned to some machine.
Hence there can $\Osymbol(mn)$ candidates for $\cmaxcost$.
Each of the candidate makespans determines the capacities of the machines.


Let us also introduce the auxiliary flow problem. For some fixed value of the objective $C$ and a fixed valid assignment of all cut-vertices jobs $f\colon V_{cut} \to \M$, we build the flow network $F(G, C, f)$ with five layers (see \Cref{fig:flow_network} for an example):
\begin{itemize}
    \item first layer contains only the source node $s$ and last layer contains only the sink node $t$,
    \item second layer contains nodes for all jobs respective to the simplicial vertices in $G$ i.e. $J_i \in \J \setminus V_{cut}$,
    \item third layer contains nodes for all pairs of blocks (i.e. connected components after removal of cut-vertices) $B_k \in V_B$ and machines $M_j \in \M$,
    \item fourth layer contains nodes for all machines $M_j \in \M$,
\end{itemize}
We add all edges between the first two layers and between the last two layers.
Moreover, we add edges between nodes in the second (for some $J_i \in \J \setminus V_{cut}$) and third layer (for some $(B_k, M_j) \in V_B \times \M$) if and only if $J_i \in B$ and for every $v \in V_{cut}$ adjacent to $J_i$ we have $f(v) \neq M_j$.
Finally, we add edges between nodes in the third and fourth layer whenever they refer to the same machine $M_j$.

All edges but the ones ending in the sink node $t$ have capacity $1$.
An edge between $M_j$ and $t$ has weight $w_j = \floor{C s_j / l} - |f^{-1}(M_j)|$, where $f^{-1}(M_j)$ is the set of cut-vertices of $G$ that are assigned to the machine $M_j$ according to $f$.

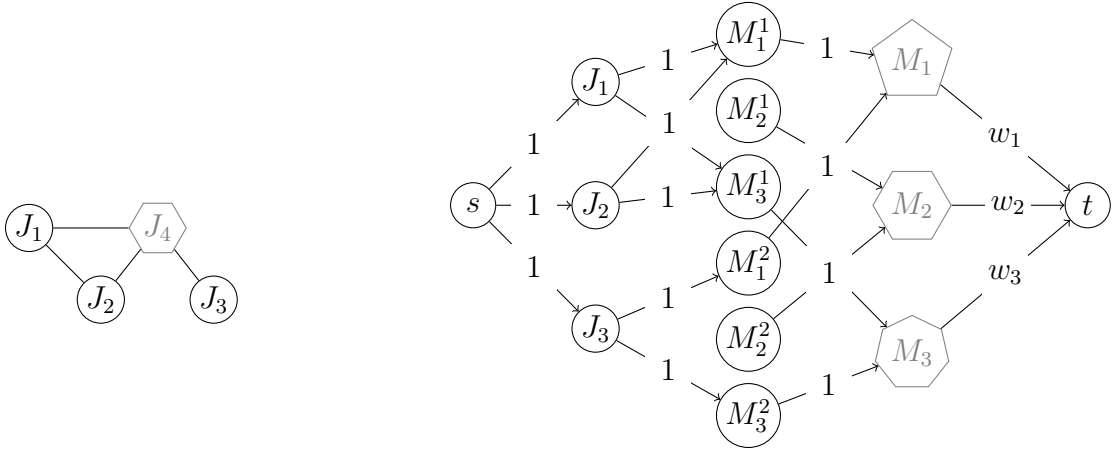
\begin{figure}[htpb]
\centering
\begin{subfigure}{0.2\textwidth}
\begin{tikzpicture}[align=center,node distance=2cm, main/.style = {draw, circle}, minimum size=6mm, inner sep=1pt]
	\node[main] (J1) {$J_1$};
	\node[main] (J2) [below right =0.5cm and 0.5 of J1]{$J_2$};
	\node[main] (J4) [right =1cm of J1, color=gray,regular polygon, regular polygon sides=6]{$J_4$};
	\node[main] (J3) [below right =0.5cm and 2 of J1]{$J_3$};
        \node[] (dummy) [below=2cm of J1]{};
	
	\draw (J4) -- (J1) -- (J2) -- (J4) -- (J3);
\end{tikzpicture}
\end{subfigure}
\qquad\qquad\qquad
\begin{subfigure}{0.5\textwidth}
\begin{tikzpicture}[align=center, main/.style = {draw, circle}, minimum size=6mm, inner sep=1pt]
	\node[main, inner sep=1pt] (s){$s$};
 
	\node[main] (J2) [right=1cm of s]{$J_2$};
	\node[main] (J1) [above=1cm of J2]{$J_1$};
	\node[main] (J3) [below=1cm of J2]{$J_3$};

	\node[main] (M11) [above right=0.1cm and 1.5cm of J1]{$M_1^1$};
	\node[main] (M21) [below=0.15cm of M11]{$M_2^1$};
	\node[main] (M31) [below=0.15cm of M21]{$M_3^1$};

	\node[main] (M12) [below=0.15cm of M31]{$M_1^2$};
	\node[main] (M22) [below=0.15cm of M12]{$M_2^2$};
	\node[main] (M32) [below=0.15cm of M22]{$M_3^2$};
	
	\node[main,regular polygon, regular polygon sides=6] (M2) [right=5cm of s,color=gray]{$M_2$};
	\node[main,regular polygon, regular polygon sides=5] (M1) [above=1cm of M2,color=gray]{$M_1$};
	\node[main,regular polygon, regular polygon sides=7] (M3) [below=1cm of M2,color=gray]{$M_3$};

	\node[main, inner sep=1pt] (t) [right=1.5cm of M2]{$t$};

	\draw[->] (s) -- (J1) node [midway, fill=white] {$1$};
	\draw[->] (s) -- (J2) node [midway, fill=white] {$1$};
	\draw[->] (s) -- (J3) node [midway, fill=white] {$1$};

	\draw[->] (J1) -- (M11) node [midway, fill=white] {$1$};
	\draw[->] (J1) -- (M31) node [midway, fill=white] {$1$};
	\draw[->] (J2) -- (M11) node [midway, fill=white] {$1$};
	\draw[->] (J2) -- (M31) node [midway, fill=white] {$1$};

	\draw[->] (M11) -- (M1) node [midway, fill=white] {$1$};
	\draw[->] (M21) -- (M2) node [midway, fill=white] {$1$};
	\draw[->] (M31) -- (M3) node [midway, fill=white] {$1$};

	\draw[->] (M12) -- (M1) node [midway, fill=white] {$1$};
	\draw[->] (M22) -- (M2) node [midway, fill=white] {$1$};
	\draw[->] (M32) -- (M3) node [midway, fill=white] {$1$};

	\draw[->] (J3) -- (M12) node [midway, fill=white] {$1$};
	\draw[->] (J3) -- (M32) node [midway, fill=white] {$1$};

	\draw[->] (M1) -- (t) node [midway, fill=white] {$w_1$};
	\draw[->] (M2) -- (t) node [midway, fill=white] {$w_2$};
	\draw[->] (M3) -- (t) node [midway, fill=white] {$w_3$};
\end{tikzpicture}
\end{subfigure}
\caption{An example graph $G$ with $2$ blocks and $f(J_4) = M_2$, and its flow network $F(G, C, f)$.}
\label{fig:flow_network}
\end{figure}

Now we proceed with a lemma describing the relation between the flow network and the original task scheduling problem:
\begin{lemma}
     For an instance of a problem $Q|G = \blockgraph, \cutvertices(G) \le k, p_j = 1|\cmaxcost$ there exists scheduling with total makespan at most $C$ if and only if for some valid function $f\colon V_{cut} \to \M$ there exists a flow network $F(G, C, f)$ with a maximum flow equal to $n - \cutvertices(G)$.
     \label{alg:flow_network}
\end{lemma}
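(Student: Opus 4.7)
The plan is to establish the equivalence by exhibiting a correspondence between feasible schedules of makespan at most $C$ and integral $(s,t)$-flows of value $n - \cutvertices(G)$ in $F(G, C, f)$, where $f$ is the restriction of the schedule to cut-vertices.

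For the forward direction, suppose $\sigma$ is a schedule with $\cmaxcost(\sigma) \le C$. I would define $f = \sigma|_{V_{cut}}$. This $f$ is valid because adjacent cut-vertices share a block (hence a conflict edge) and so are assigned to different machines by $\sigma$. Then I would construct an integral flow by routing, for each simplicial vertex $J_i$ belonging to its unique block $B_k$, one unit along $s \to J_i \to (B_k, \sigma(J_i)) \to \sigma(J_i) \to t$. The edge $J_i \to (B_k, \sigma(J_i))$ exists because every cut-vertex neighbor $v$ of $J_i$ lies in $B_k$ and must satisfy $\sigma(v) \ne \sigma(J_i)$. The capacity-$1$ bound on $(B_k, M_j) \to M_j$ is respected because the clique structure of $B_k$ forces at most one vertex of $B_k$ onto $M_j$. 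Finally, the $w_j$ capacity at $M_j \to t$ upper-bounds the number of simplicial jobs assigned to $M_j$: together with the $|f^{-1}(M_j)|$ cut-vertices placed on $M_j$, the total number of jobs is at most $\lfloor C s_j \rfloor$, which is implied by $\cmaxcost(\sigma) \le C$. The total flow value is then the number of simplicial vertices, $n - \cutvertices(G)$.

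For the reverse direction, take any valid $f$ and any flow of value $n - \cutvertices(G)$ in $F(G, C, f)$; by integrality of max flow with integer capacities, we may take this flow to be integral. Since each edge $s \to J_i$ has capacity $1$ and the flow saturates all of them (by the value), each simplicial $J_i$ sends one unit to a unique $(B_k, M_j)$ where $B_k$ is the block of $J_i$. I would set $\sigma(J_i) = M_j$ and $\sigma(v) = f(v)$ for every cut-vertex $v$. Conflict feasibility within each block follows from two facts: for the cut-vertex portion, from validity of $f$; for each simplicial–cut-vertex pair, from the existence condition on the middle edges; and for simplicial–simplicial pairs in the same block, from the capacity $1$ on $(B_k, M_j) \to M_j$ allowing only one simplicial vertex of $B_k$ to land on $M_j$. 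Across distinct blocks there are no conflict edges to check, since blocks can only share cut-vertices. The makespan bound then follows because machine $M_j$ receives exactly $|f^{-1}(M_j)|$ cut-vertex jobs plus at most $w_j = \lfloor C s_j \rfloor - |f^{-1}(M_j)|$ simplicial jobs, giving processing time at most $\lfloor C s_j \rfloor / s_j \le C$.

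The main obstacle I expect is verifying that the middle layer faithfully encodes the conflict constraints: the edges between simplicial-vertex nodes and $(B_k, M_j)$ nodes forbid placing a simplicial vertex on the same machine as an adjacent cut-vertex, while the unit capacity on $(B_k, M_j) \to M_j$ independently prevents two simplicial vertices in the same block from sharing a machine. These two mechanisms together must cover exactly the conflict structure of a block graph, and the argument relies on the fact that any two simplicial vertices of a block graph lie in a common block if and only if they are adjacent; this is where the block graph hypothesis is used and should be highlighted.
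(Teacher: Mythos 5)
Your proof is correct and follows essentially the same route as the paper: it identifies schedules of makespan at most $C$ with integral flows of value $n-\cutvertices(G)$, using the middle-layer edge condition for simplicial--cut-vertex conflicts, the unit capacities on $(B_k,M_j)\to M_j$ for simplicial--simplicial conflicts within a block, and the sink capacities $w_j$ for the load bound (with $\floor{C s_j}$ the number of unit jobs machine $M_j$ can process by time $C$). Your explicit remark on why these mechanisms cover all conflict edges of a block graph is a nice touch, but the argument is the same as the paper's.
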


\begin{proof}
    First note that the edges between the first two layers ensure that the maximum flow in $F(G, C, f)$ cannot exceed $n - \cutvertices(G)$.

    Assume that $f(v) = \sigma(v)$ for all $v \in V_{cut}$ for some optimal schedule $\sigma$.
    If there exists a maximum flow equal to $n - \cutvertices(G)$, then all the edges outgoing from $s$ are saturated.
    Moreover, since all capacities are integer values, there exists a maximum flow such that its value for every edge is also some integer value.
    Therefore, we assign $J_i$ to $M_j$ if there is a non-zero flow between nodes corresponding to $J_i$ and its adjacent $(B_k, M_j)$.

    Each node in the third layer has exactly one outgoing edge with a capacity $1$. This ensures that no two simplicial vertices from the same block get assigned to the same machine.
    Moreover, if some cut-vertex $v$ is assigned to the machine $M_j$, then there are no edges between $J_i$ and $(B_k, M_j)$ for any job $J_i$ adjacent to $v$ in $G$ and any block $B_k$ containing $v$.
    Finally, the edges between the last two layers ensure that the obtained scheduling has $\cmaxcost$ at most equal to $C$.

    The converse proof goes along the same lines as it is sufficient to use the converse mapping to the one presented above to pick saturated edges between the second and fourth layer: if a simplicial vertex $J_i$ from block $B_k$ is scheduled on machine $M_j$, then saturate edges $(J_i, (B_k, M_j))$ and $((B_k, M_j), M_j)$. In addition, we saturate all edges outgoing from $s$ and we compute the flows through the vertices in the fourth layer to get the values of flow incoming to $t$.
\end{proof}

Now we are ready to present the complete algorithm for $Q|G = \blockgraph, \cutvertices(G) \le k, p_j = 1|\cmaxcost$:
\begin{algorithm}[htpb]
\begin{algorithmic}
\REQUIRE A set of jobs $\J$, a set of machines $\M$, a block graph $G$, function $s\colon \M \to \N$, and a~guessed makespan $C$.
\FORALL{assignments $f\colon V_{cut} \to \M$}
    \FORALL{$u, v \in V_{cut}$}
        \STATE \textbf{if} $\{u, v\} \in E(G)$ \textbf{and} $f(u) = f(v)$ \textbf{then continue}
    \ENDFOR
    \FOR{$i = 1, \ldots, m$}
        \STATE \textbf{if} $|\{u \in V_{cut}\colon f(u) = i\}| > C$ \textbf{then continue}
        \hfill\COMMENT{guessed capacity exceeded}
    \ENDFOR
    \STATE Build the flow network $F(G, C, f)$ and solve the maximum flow problem on it
    \IF{maximum flow for $F(G, C, f)$ is equal to $n - \cutvertices(G)$}
        \RETURN a scheduling retrieved from the maximum flow for $F(G, C, f)$
    \ENDIF
\ENDFOR
\RETURN \NO
\end{algorithmic}
\caption{The core part of our algorithm for $Q|G = \blockgraph, \cutvertices(G) \le k, p_j = 1|\cmaxcost$}
\label{alg:uniform_flow_network}
\end{algorithm}

\begin{theorem}
    There exists a polynomial time algorithm for $Q|G = \blockgraph, \cutvertices(G) \le k, p_j = 1|\cmaxcost$ in time $\Osymbol(m^{k + 2} n^2 \log(mn))$.
    \label{thm:uniform_block_unit_cut}
\end{theorem}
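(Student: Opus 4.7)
The plan is to combine three layers: a binary search over candidate makespans, an exhaustive enumeration over valid assignments of the (few) cut-vertex jobs to machines, and the max-flow based feasibility test already justified in \Cref{alg:flow_network}. First, I would observe that, since all jobs are unit-time, the completion time on machine $M_i$ must equal $\ell/s_i$ for some integer $\ell \in \{0, 1, \ldots, n\}$. Hence the optimal makespan is attained at one of at most $mn$ candidate values, which I would sort once and then binary search in $\Osymbol(\log(mn))$ queries; each query asks whether the scheduling problem is feasible for a given $C$.

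For a fixed candidate $C$, I would invoke the core procedure of \Cref{alg:uniform_flow_network}. Because $\cutvertices(G) \le k$, there are at most $m^k$ functions $f\colon V_{cut} \to \M$, each of which can be checked against the conflict constraint (adjacent cut-vertices on different machines) and the per-machine capacity $\floor{C s_j}$ in $\Osymbol(k^2 + m)$ time. For every surviving $f$, I build the network $F(G, C, f)$ described in \Cref{sec:uniform} and compute a maximum flow. By \Cref{alg:flow_network}, a feasible schedule with makespan at most $C$ exists if and only if some such $f$ yields a max flow of value exactly $n - \cutvertices(G)$, and a max flow of that value also furnishes the schedule itself by reading off the saturated $J_i \to (B_k, M_j)$ edges. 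This immediately gives correctness; the binary search then returns the smallest feasible $C$, which is $\optcmaxcost$.

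The main technical step is the running-time estimate, which is essentially bookkeeping once the network size is controlled. The network $F(G, C, f)$ has $\Osymbol(n + m \cdot |V_B|) = \Osymbol(mn)$ vertices and $\Osymbol(mn)$ edges (each simplicial $J_i$ contributes $\Osymbol(m)$ edges, each $(B_k, M_j)$ contributes one edge to layer four), all capacities are integral, and the maximum flow value is bounded by $n$. A straightforward augmenting-path implementation therefore solves each max-flow instance in $\Osymbol(m n^2)$ time, and since there are $\Osymbol(m^k)$ candidate assignments $f$ per binary-search step and $\Osymbol(\log(mn))$ steps overall, the total cost is $\Osymbol(m^{k+2} n^2 \log(mn))$, matching the claimed bound. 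The one place to be a little careful is confirming that both directions of \Cref{alg:flow_network} go through when the guessed capacity $\floor{C s_j} - |f^{-1}(M_j)|$ is nonnegative for every $j$; assignments violating this are discarded up front, so the flow construction is always well-defined, and combined with integrality of max flow this closes the argument.
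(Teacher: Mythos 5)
Your proposal is correct and follows essentially the same route as the paper: binary search over the $\Osymbol(mn)$ candidate makespans, enumeration of the at most $m^k$ valid cut-vertex assignments, and the max-flow feasibility test of \Cref{alg:flow_network}, with the same network-size bookkeeping (the only cosmetic difference is using augmenting paths instead of Orlin's algorithm for the flow computation, which still fits within the claimed bound).
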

\begin{proof}
	Consider \Cref{alg:uniform_flow_network}.
    From \Cref{alg:flow_network} one directly obtains the equivalence between the existence of a flow in the constructed network and the existence of a corresponding schedule: it is sufficient to generate all valid assignments $f$, build the respective flow networks $F(G, C, f)$ and check whether for any of them there exists a maximum flow with a value equal to $n - \cutvertices(G)$ to verify the existence of a schedule with makespan at most $C$.

    For every $C \ge \optcmaxcost$, we will find at least one such solution (i.e. with $f$ exactly the same as the assignment of jobs from $V_{cut}$ in some optimal schedule $\sigma$) -- and for $C < \optcmaxcost$ it is impossible to find any feasible solution.

    The number of iterations for the binary search is at most $\Osymbol(\log(nm))$.
    Clearly, for each fixed possible makespan value $C$, there are at most $m^k$ different assignments $f$. Each one can be checked whether it is valid (i.e. it does not assign adjacent jobs in $G$ to the same machine and a load of each machine does not exceed $C$) in $\Osymbol(n)$ time.

    Finally, any flow network $F(G, C, f)$ has $\Osymbol(m n)$ vertices and $\Osymbol(m n)$ edges, thus the maximum flow problem can be solved in $\Osymbol(m^2 n^2)$ time e.g. using Orlin algorithm \cite{orlin2013max}.
    
\end{proof}

Note that Furma\'nczyk and Mkrtchyan in \cite{furmanczyk2020graph} proved that the problem of \textsc{Equitable Coloring} for block graphs is FPT with respect to the number of cut-vertices. Their result can be expressed in the language of unit-time job scheduling on identical machines with the relevant conflict graph. So, our polynomial-time algorithm for uniform machines can be seen as a partial generalization of their result.

\subsection{Bounded number of blocks}

Let us now consider a more restricted case, when the number of blocks is bounded by a constant, but with arbitrary job processing requirements. We begin with a simple $k$-approximation algorithm.

\begin{algorithm}[htpb]
\begin{algorithmic}
\REQUIRE A set of jobs $\J$, a set of machines $\M$, a block graph $G$, functions $p\colon \J \to \N$ and $s\colon \M \to \N$.
\STATE $\sigma_{best} \gets$ a dummy schedule with $C_{max} = \infty$ \COMMENT{to simplify the pseudocode}
\STATE Sort machines by their non-increasing speeds $s_j$
\FORALL{assignments $f\colon V_{cut} \to \M$}
    \FORALL{$u, v \in V_{cut}$}
        \STATE \textbf{if} $\{u, v\} \in E(G)$ \textbf{and} $f(u) = f(v)$ \textbf{then continue}
    \ENDFOR
    \STATE $\sigma \gets f$
    \FORALL{$B \in V_B$}
        \STATE $L \gets$ unassigned jobs from $B$ sorted by their non-increasing processing requirements $p_j$
        \STATE $j \gets 0$
        \FOR{$i = 1, 2, \ldots, |L|$}
            \STATE \textbf{while} $\sigma(v) = M_{i + j}$ for some $v \in V_{cut} \cap B$ \textbf{do} $j \gets j + 1$
            \STATE Assign $L_i$ to $M_{i + j}$ in $\sigma$
        \ENDFOR
    \ENDFOR
    \IF{$C_{max}(\sigma) < C_{max}(\sigma_{best})$}
        \STATE $\sigma_{best} \gets \sigma$
    \ENDIF
\ENDFOR
\RETURN $\sigma_{best}$
\end{algorithmic}
\caption{A $k$-approximation algorithm for $Q|G = \kblockgraph{k}|\cmaxcost$.}
\label{alg:kaproximateuniformmachines}
\end{algorithm}

\begin{theorem}
\Cref{alg:kaproximateuniformmachines} is a $k$-approximation strongly polynomial time algorithm for $Q|G = \kblockgraph{k}|\cmaxcost$.
\end{theorem}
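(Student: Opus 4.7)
The plan is to analyze the iteration in which the guessed function $f$ coincides with the restriction of some fixed optimal schedule $\sigma^*$ to $V_{cut}$; call this the \emph{correct guess}. Because $\sigma^*$ is a valid schedule, this $f$ passes the conflict check in the algorithm. Let $\sigma$ be the schedule produced in that iteration; it suffices to show $\cmaxcost(\sigma)\le k\cdot\optcmaxcost$, since the final $\sigma_{best}$ has makespan no worse than $\sigma$.

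I would first make a counting observation: for every machine $M_i$, at most one job assigned to $M_i$ belongs to any given block $B$. Indeed, all jobs within a block are pairwise conflicting, so they land on distinct machines; moreover, the inner \textbf{while} loop forces simplicial jobs of $B$ to skip the machines already used by cut-vertices of $B$ via $f$. Since $G$ has at most $k$ blocks, machine $M_i$ carries at most $k$ jobs in total.

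The main step is to bound the contribution $p_j/s_i$ of each job $j$ placed on $M_i$ by $\optcmaxcost$. For a cut-vertex $v$ this is immediate because $f$ agrees with $\sigma^*$ on $V_{cut}$, so $p_v/s_{f(v)}\le\optcmaxcost$. For a simplicial job I would invoke a pigeonhole argument: fix a block $B$ and enumerate its available machines (those not occupied by cut-vertices of $B$ under $f$) in non-increasing order of speed as $M'_1, M'_2, \ldots$, and enumerate its simplicial jobs in non-increasing order of $p_j$ as $j^B_1, j^B_2, \ldots$. The algorithm assigns $j^B_r$ to $M'_r$. In $\sigma^*$ the jobs $j^B_1, \ldots, j^B_r$ are distributed among the same set $\{M'_1, M'_2, \ldots\}$, so by pigeonhole some $j^B_s$ with $s\le r$ lands on $M'_{r'}$ with $r'\ge r$. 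Then $p_{j^B_s}\le\optcmaxcost\cdot s_{M'_{r'}}\le\optcmaxcost\cdot s_{M'_r}$, and since $p_{j^B_r}\le p_{j^B_s}$ we obtain $p_{j^B_r}/s_{M'_r}\le\optcmaxcost$. Summing at most $k$ such contributions on any $M_i$ gives the desired $k\cdot\optcmaxcost$ bound.

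For the running time, there are at most $m^{|V_{cut}|}\le m^{k-1}$ functions $f$ to enumerate, each screened in $O(k^2)$ time; for each valid $f$ the inner loops sort the jobs of each block in $O(n\log n)$ time in total and perform linear-time assignments, with $O(n)$ used to compute $\cmaxcost(\sigma)$. This gives $O(m^{k-1}(n\log n + km))$ overall, which is strongly polynomial for fixed $k$. I expect the pigeonhole estimate for simplicial jobs to be the main obstacle: it relies on both design choices of the algorithm (LPT within each block combined with machines pre-sorted by non-increasing speed) working together, and on the correctness of the guess ensuring that the set of available machines for $B$ in the algorithm coincides with that in $\sigma^*$.
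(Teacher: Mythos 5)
Your proposal is correct and follows essentially the same route as the paper's proof: fix the iteration where the guessed assignment of cut-vertices matches an optimal schedule, note that at most one job per block (hence at most $k$ jobs) lands on any machine, and bound each job's contribution by $\optcmaxcost$ — directly for cut-vertices, and via the pigeonhole comparison of the LPT-within-block assignment on speed-sorted available machines against the optimum for simplicial jobs. Your per-block "available machines" formulation just spells out the counting that the paper phrases in terms of the global machine index of the makespan machine, and both running-time bounds are of the same $\Osymbol(n\,m^{\Osymbol(k)}\log n)$ strongly polynomial form.
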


\begin{proof}
    Since we iterate over all possible assignments of cut-vertices to machines we only need to prove that a schedule $\sigma$ with the same assignment of these jobs as in an optimal schedule has a makespan at most $k$ times greater than the optimal makespan.

    Observe that if $G$ has at most $k$ blocks, then there are at most $k$ jobs assigned to every machine.
    Let us denote by $M_i$ any machine with processing time equal to the makespan $C_{max}(\sigma)$, and by $J_j$ a job assigned to it with a maximum processing time. We have two possibilities:
    \begin{enumerate}
        \item $J_j \in V_{cut}$: since it is assigned to the same machine as in some optimal schedule, then $\optcmaxcost \ge \frac{p_j}{s_i}$; moreover, any other job on $M_i$ contributes at most $\frac{p_j}{s_i}$ to the makespan, thus $C_{max}(\sigma) \le k \cdot \frac{p_j}{s_i}$,
        \item $J_j \notin V_{cut}$: now $J_j \in B$, for some block $B$, could be scheduled on a different machine in the optimal schedule.
        Due to the greedy assignment, we know that there are at least $i - |V_{cut} \cap B|$ jobs of not smaller size that are simplicial.
        This means that some machine $M_{i' \ge i}$ has to be assigned a job of size at least $p_j$ in the optimal schedule.
        Hence as well $\optcmaxcost \ge \frac{p_j}{s_i}$.
    \end{enumerate}
    In either case we know that at least one schedule under consideration has to be $k$-approximate.

    Note that the running time of this algorithm is bounded by the number of all assignments from $V_{cut}$ to $M$ and by sorting of jobs for each block.
    Since in total they are bounded by $O(m^k)$ and $O(n \log{n})$, respectively, the running time of this algorithm is $O(n m^k \log{n})$.
\end{proof}

The above algorithm will come handy later but now we would like to point out that we can find a considerable improvement to it, i.e. a PTAS for the same problem.

\begin{algorithm}[htpb]
\begin{algorithmic}
\REQUIRE A set of jobs $\J$, a set of machines $\M$, a block graph $G$, functions $p\colon \J \to \N$ and $s\colon \M \to \N$, and a~guessed value $C$.
\FOR{$i = 1, \ldots, m$}
  \STATE Calculate capacity $c_i$ as $C \cdot s(M_i)$ rounded up to the nearest power of $1 + \varepsilon$
\ENDFOR
\STATE Sort machines in $\M$ in a non-decreasing order of $c_i$
\FOR{$j = 1, \ldots, n$}
    \STATE Round $p_j$ down to the nearest power of $1 + \varepsilon$
\ENDFOR
\IF{$\max_j p_j > \max_i c_i$}
   \RETURN \NO
\hfill   \COMMENT{Necessary condition, guarantees visiting all jobs}
\ENDIF
\STATE $\tau \gets \ceil{\log_{1+\varepsilon}(k/\varepsilon)}$
\STATE $l_0 \gets 0$
\STATE Compute $U_0$ -- a set with a single configuration for basis $l_0$
\FOR{$i = 1, \ldots, m$}
    \STATE $U_i \gets \emptyset$, $l_i \gets \max\{\log_{1 + \varepsilon}(c_i) - \tau, 0\}$
    \FOR{$u \in U_{i - 1}$}
        \STATE $shift \gets \min\{l_{i} - l_{i-1}, \tau\}$
        \hfill\COMMENT {The jobs that are not scheduled yet and become very small}
        \FOR{$s = 1, \ldots, k$}
        	\STATE $n_{0,s} \gets \sum_{i = 0}^{shift} n_{i,s} + |\{J_j \in B_s \setminus V_{cut}\colon (1+\varepsilon)^{l_i} >  p_j > (1+\varepsilon)^{l_{i - 1} + \tau} = c_{i-1}\}|$
            \FOR{$i = 1, \ldots, \tau - shift$}
                \STATE $n_{i, s} \gets n_{i + shift, s}$ \hfill\COMMENT{Convert $u$ from basis $l_{i - 1}$ to basis $l_i$}
            \ENDFOR
            \FOR{$i = \tau - shift + 1, \ldots, \tau$}
            	\STATE $n_{i, s} \gets |\{J_j \in B_s \setminus V_{cut}\colon p_j = (1+\varepsilon)^{l_i + i}\}|$
            	\hfill\COMMENT{The jobs not yet considered}
            \ENDFOR
            \FORALL{$J_j \in B_s \setminus V_{cut}$}
                \IF{$p_j < (1 + \varepsilon)^{l_i}$}
                    \STATE $p_j \gets 0$ \hfill\COMMENT{Discard sizes of the small jobs}
                \ENDIF
            \ENDFOR
        \ENDFOR
        \FORALL{$u' \in \{0, 1\}^{|u|}$}
            \IF{$u' \le u$}
                \STATE \textbf{if} $u'$ represents a set of jobs with at least two jobs from a single block \textbf{then continue}
                \STATE \textbf{if} $u'$ represents a set of jobs with the total processing time greater than $c_i$ \textbf{then continue}
                \STATE Add $u - u'$ to $U_i$ \hfill\COMMENT{Schedule jobs from $u'$ on $M_i$ and store the configuration of remaining jobs}
            \ENDIF
        \ENDFOR
    \ENDFOR
\ENDFOR
\IF{$\mathbf{0} \in U_m$}
    \RETURN any schedule corresponding to $\mathbf{0} \in U_m$
\ENDIF
\RETURN \NO
\end{algorithmic}
\caption{The core part of our algorithm for $Q|G = \kblockgraph{k}|\cmaxcost$}
\label{alg:ptas_uniform_block}
\end{algorithm}

\begin{theorem}
	Binary search used in conjunction with \Cref{alg:ptas_uniform_block} is a PTAS for $Q|G = \kblockgraph{k}|\cmaxcost$.
	\label{thm:uniform_kblock_cut}
\end{theorem}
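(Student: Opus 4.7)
The strategy is the standard rounding-plus-dynamic-programming PTAS template, tailored to exploit two structural features of $\kblockgraph{k}$: there are at most $k-1$ cut-vertices (so their assignment to machines can be enumerated in $\Osymbol(m^{k-1})$ time, as already done in \Cref{alg:uniform_flow_network}), and at most $k$ jobs end up on any single machine (at most one per block, and only $k$ blocks exist). The outer structure of the proof is therefore: enumerate every valid placement $f\colon V_{cut}\to\M$ of the cut-vertices, binary-search the target makespan $C$, and call \Cref{alg:ptas_uniform_block} to decide whether the remaining simplicial jobs can be scheduled on the machines with residual capacities $C\,s(M_i)-p(f^{-1}(M_i))$ up to a $(1+\varepsilon)$ slack.

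To prove correctness for a fixed $f$ and $C$, I would first observe that rounding every $p_j$ down and every $c_i$ up to the nearest power of $(1+\varepsilon)$ distorts the optimum makespan by at most a factor of $(1+\varepsilon)$. I would then argue by induction on $i$ that $U_i$ contains every multiset of still-unscheduled non-cut-vertex jobs reachable from some feasible partial schedule on $M_1,\dots,M_i$, where a configuration records, per block $B_s$ and per size class $(1+\varepsilon)^{l_i+j}$ with $j=0,\dots,\tau$, the number of remaining jobs in that class. The inductive step relies on (a)~the enumeration over subsets $u' \le u$ capturing every legal placement of at most one job per block with total rounded size $\le c_i$, and (b)~the basis shift from $l_{i-1}$ to $l_i$, which correctly rebuckets surviving jobs into the new window and relegates those below $(1+\varepsilon)^{l_i}$ to a pool of small jobs to be reinserted greedily at the end.

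The heart of the approximation analysis is bounding the error from that small-job pool. Because $\tau=\lceil\log_{1+\varepsilon}(k/\varepsilon)\rceil$, any job declared small at step $i$ has rounded size at most $(1+\varepsilon)^{l_i}\le c_i(1+\varepsilon)^{-\tau}\le (\varepsilon/k)\,c_i$. Since at most $k$ jobs are ever placed on $M_i$, a greedy post-processing that re-inserts each small job on any compatible machine adds at most $k\cdot(\varepsilon/k)c_i=\varepsilon c_i$ load there, i.e.\ an additive $\varepsilon C$ in time units. Together with the $(1+\varepsilon)$ factor from capacity rounding, one obtains $\cmaxcost\le(1+\Osymbol(\varepsilon))\optcmaxcost$, which becomes $(1+\varepsilon)\optcmaxcost$ after rescaling $\varepsilon$ by a constant.

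For the running time, each configuration is a $k(\tau+1)$-tuple with entries in $\{0,\dots,n\}$, so $|U_i|\le(n+1)^{k(\tau+1)}$, and each transition enumerates $2^{k(\tau+1)}$ candidate subsets $u'$; for fixed $k$ and $\varepsilon$ both quantities are constants in $n,m$. The binary search over $C$ contributes a logarithmic factor and the cut-vertex enumeration a factor $m^{k-1}$, giving overall polynomial time. I expect the main obstacle to be the careful bookkeeping of the shift step when $l_i-l_{i-1}<\tau$: one must prove that no job is lost, duplicated, or mis-bucketed as classes partially slide — some survive unchanged, some get absorbed into class $0$, and genuinely small ones are discarded into the pool — and that the final greedy reinsertion respects the one-per-block-per-machine constraint, which follows from the fact that a small job belongs to a single block $B_s$ and each machine already carries at most one job from $B_s$.
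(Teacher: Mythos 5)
Your outer structure already diverges from the algorithm the theorem refers to: \Cref{alg:ptas_uniform_block} does not enumerate an assignment $f\colon V_{cut}\to\M$ externally and then work with residual capacities; instead the cut-vertices are carried \emph{inside} every configuration via the flags $a_1,\ldots,a_{cut(G)}$, and the check ``at least two jobs from a single block'' enforces the conflict constraint between a cut-vertex and the simplicial jobs of its blocks. That restructuring could in principle be repaired (and you would then need an explicit check that a simplicial job of $B_s$ is never put on the machine hosting a cut-vertex of $B_s$), so I regard it as a deviation rather than the core problem.

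The genuine gap is in your treatment of the small jobs. In the paper's algorithm a job that falls below $(1+\varepsilon)^{l_i}$ is \emph{not} removed to a pool: it stays in the configuration as a count $n_{0,s}$ with its size zeroed, so it is still assigned by the dynamic program, subject to the one-per-block constraint, and -- crucially -- only to the current machine or to later machines, which are processed in order of non-decreasing capacity. Hence every small job assigned to $M_{i'}$ satisfies $p_j<(1+\varepsilon)^{l_{i'}}\le(\varepsilon/k)c_{i'}$ \emph{for the machine that actually receives it}, and the bound ``at most $k$ jobs per machine'' gives the additive $\varepsilon c_{i'}$ per machine. Your version postpones these jobs and reinserts them greedily ``on any compatible machine'' at the end, and there the bound $k\cdot(\varepsilon/k)c_i=\varepsilon c_i$ no longer follows: a job declared small at step $i$ (small relative to a fast machine's capacity $c_i$) may be forced onto a much slower machine $M_{i'}$ with $i'<i$ -- for instance when the only machine free of jobs from its block is a slow one, because your DP's placement of the large jobs need not match the optimal schedule's -- and then its size can exceed $\varepsilon c_{i'}$ by an arbitrary factor, destroying the $(1+\Osymbol(\varepsilon))$ guarantee. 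Restricting the reinsertion to machines of index at least $i$ would restore the load bound but then the existence of a compatible machine is exactly what you have not shown. So either you keep the small jobs inside the DP as zero-size tokens, as the paper does, or you must add a nontrivial argument that a capacity-respecting reinsertion always exists; as written, the approximation analysis does not go through.
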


\begin{proof}
    To prove the theorem it is sufficient to consider \Cref{alg:ptas_uniform_block} for a fixed guessed value $C$.
    Throughout the algorithm instead of speeds $s(M_i)$ we deal rather with capacities $c_i$ for all machines, which are directly determined by their speeds and the guessed value.

    We iterate over all machines in the order of their non-decreasing capacities and we try every feasible combination of jobs for each machine.
    Clearly, if $M_i$ has capacity $c_i$, then we only need to take into account jobs with $p_j \le c_i \le (1+\varepsilon)^{l_i + \tau}$. Additionally, we need to consider cut-vertices separately, since they belong to multiple blocks at the same time.
    Note that we do not need to differentiate jobs with the same rounded processing requirements coming from the same block, as long as they all correspond to simplicial vertices in $G$.

    The crucial idea is to use \emph{configurations} of unscheduled jobs grouped by their rounded size. Formally, a configuration for a given $\tau \in \mathbb{N}$ (fixed throughout the algorithm, depending only on $\varepsilon$ and $k$) and value $l \ge 0$ (called the \emph{basis}) is defined as a tuple $(n_{0, 1}, \ldots, n_{0, k}; n_{l, 1}, \ldots, n_{l, k}; \ldots; n_{l + \tau, 1}, \ldots, n_{l + \tau, k}; a_{1}, \ldots, a_{cut(G)})$, where:
    \begin{enumerate}
        \item $n_{0, s}$ is the number of unscheduled jobs from block $B_s \setminus V_{cut}$ with $p_j < (1 + \varepsilon)^{l}$ for $s = 1, \ldots, k$;
        \item $n_{i, s}$ is the number of unscheduled jobs from block $B_s \setminus V_{cut}$ with $p_j = (1+\varepsilon)^{i}$ for $s = 1, \ldots, k$, $i = l, \ldots, l + \tau$;
        \item $a_i \in \{0, 1\}$ denotes whether the $i$-th cut-vertex is not yet scheduled for $i = 1, \ldots |V_{cut}|$.
    \end{enumerate}

    Additionally, we need an operation of conversion of a configuration from a basis $l$ to some other basis $l' \ge l$.
    This is done just by:
    \begin{itemize}
        \item taking the original configuration $u$ with basis $l$ and recomputing all $n_{i, s}$ values with respect to the new basis (i.e. shifting from $n_{i, s}$ to $n_{\max\{i + (l - l'), 0\}, s}$ for all $s = 1, \ldots, k$,
        \item iterating over all jobs from $V(G) \setminus V_{cut}$ with $(1 + \varepsilon)^l < p_j \le (1 + \varepsilon)^{l'}$ and counting them in the new configuration at their proper place.
    \end{itemize}
    We refer to \Cref{fig:conversion} for a high-level outline of a conversion operation.

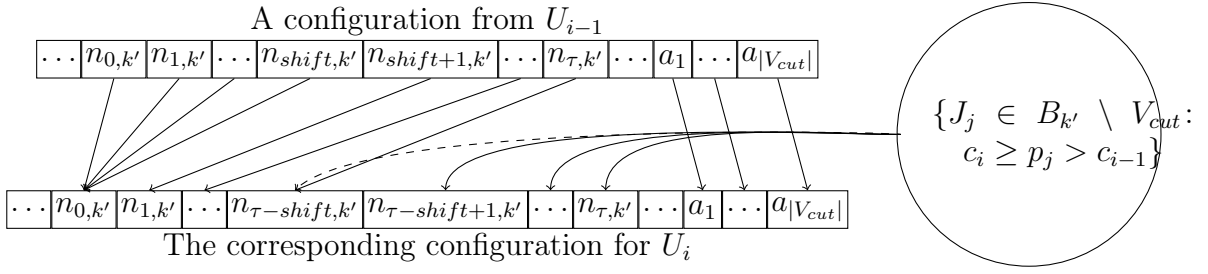
\begin{figure}[H]
	\centering
	\begin{tikzpicture}
		\coordinate (v1pos) at (0,0);
		\coordinate (v1primpos) at (0,-2);
		\coordinate (jobspos) at (8, -1);
		\matrix [align=center,  inner sep=0.05cm, nodes={draw, anchor=center, minimum size=.5cm}] at (v1pos)
		{
			\node {$\ldots$}; & \node (n0org) {$n_{0,k'}$}; & \node (n1org) {$n_{1,k'}$}; & \node (n2org) {$\ldots$}; & \node (nshiftorg) {$n_{shift,k'}$}; & \node (nshift1org) {$n_{shift+1,k'}$}; & \node (nshift2org) {$\ldots$}; & \node (ntauorg) {$n_{\tau,k'}$}; & \node {$\ldots$}; & \node (a1org) {$a_1$}; & \node (a2org) {$\ldots$};  & \node (alastorg) {$a_{|V_{cut}|}$};   \\
		};
		\matrix [align=center,  inner sep=0.05cm, nodes={draw, anchor=center, minimum size=.5cm}] at (v1primpos)
		{
			\node {$\ldots$}; & \node (n0mod) {$n_{0,k'}$}; & \node (n1mod) {$n_{1,k'}$}; & \node (n2mod) {$\ldots$}; & \node (ntauminushiftmod) {$n_{\tau-shift, k'}$}; & \node (nnextmod) {$n_{\tau-shift+1, k'}$}; & \node (nnext2mod) {$\ldots$};  & \node (ntaumod) {$n_{\tau,k'}$}; & \node {$\ldots$}; & \node (a1mod) {$a_1$}; & \node (a2mod) {$\ldots$};  & \node (alastmod) {$a_{|V_{cut}|}$};   \\
		};
		\node at ([shift={(0,0.5)}]v1pos) {A configuration from $U_{i-1}$};
		\node at ([shift={(0,-0.5)}]v1primpos) {The corresponding configuration for $U_{i}$};
		\node [draw, circle, text width =3cm] at (jobspos) (jobs) {
                \begin{tabular}{l}
                    $\{J_j~\in~B_{k'}~\setminus~V_{cut}\colon$ \\
                    \quad $c_{i} \ge p_j > c_{i-1} \}$
                \end{tabular}};
		\draw[->] (n0org.south) -- (n0mod.north);
		\draw[->] (n1org.south) -- (n0mod.north);
		\draw[->] (n2org.south) -- (n0mod.north);
		\draw[->] (a1org.south) -- (a1mod.north);
		\draw[->] (a2org.south) -- (a2mod.north);
		\draw[->] (alastorg.south) -- (alastmod.north);
		\draw[->] (nshiftorg.south) -- (n0mod.north);
		\draw[->] (nshift1org.south) -- (n1mod.north);
		\draw[->] (nshift2org.south) -- (n2mod.north);
		\draw[->] (ntauorg.south) -- (ntauminushiftmod.north);
		\draw[->] (jobs.west) .. controls +(-1,0) and +(0,1) .. node[above] {} (ntaumod.north);
		\draw[->] (jobs.west) .. controls +(-1,0) and +(0,1) .. node[above] {} (nnextmod.north);
		\draw[->] (jobs.west) .. controls +(-1,0) and +(0,1) .. node[above] {} (nnext2mod.north);
		\draw[dashed, ->] (jobs) .. controls +(-1,0) and +(1,1) .. node[above] {} (ntauminushiftmod.north);
	\end{tikzpicture}
	\caption{
		A high-level overview of the conversion of configuration sets $U_{i-1}$ to $U_i$ in \cref{alg:ptas_uniform_block} for some block $V_{k'}$ is illustrated.
		Keep in mind that when transforming the configurations, the jobs that have processing requirement $c_i \ge p_j > c_{i-1}$ and are not cut-vertices are added.
		Finally, if the difference between the capacities of the machines is big (and $shift \ge \tau$) then it might be the case that some new jobs will be added to $n_{0,k'}$; the dashed line is added to emphasize that it is the only case when the same field is filled with values from both: the old vector and the set of jobs.
	}
        \label{fig:conversion}
\end{figure}    

    Given such notation, at each iteration we build set $U_i$ containing configurations remaining after considering the $i$ slowest machines (i.e. the machines with the smallest capacities).
    Clearly, $U_0$ contains exactly one configuration for basis $l_0$, which can be obtained by iterating over all jobs and incrementing respective variables $n_{i, s}$. Of course $a_i = 1$ for all $i = 1, \ldots, |V_{cut}|$ since at the beginning all cut-vertices are not assigned to any machine.
    And if $U_m$ contains a configuration $\mathbf{0}$ and every job was considered at some point (guaranteed by necessary condition $\max_j p_j \le \max_i c_i$), then all jobs are scheduled, thus the algorithm found some feasible schedule.
  
    Note that the algorithm ensures that the sum of processing times (after rounding down and zeroing out the smallest jobs) for jobs scheduled on $M_i$ does not exceed $c_i$.
    Thus, if there were no rounding down and zeroing out, we would get a schedule with makespan $C$.
    However, both of these operations do not take us far away from our guess.
    Observe that for any graph with $k$ blocks, there can be only up to $k$ jobs assigned to a single machine $M_i$. Therefore, the total processing time of all jobs denoted by $n_{0, s}$ variables assigned to any single machine cannot exceed $k(1+\varepsilon)^{l_i} \le \varepsilon (1+\varepsilon)^{l_i+\tau} = \varepsilon c_i$ -- thus the total processing time on $M_i$ is no greater than $(1 + \varepsilon) c_i$.
    Moreover, by de-rounding of all $p_j$ we increase the total processing times on all machines at most $1+\varepsilon$ times.
    Therefore, if we find a schedule for a guessed value $C$, then we are certain that its makespan does not exceed $(1 + \varepsilon)^2 C$.

    Now suppose there exists a schedule with makespan $C$. We claim that \Cref{alg:ptas_uniform_block} always finds a schedule for a guess $C$.
    First, let us consider any schedule $\sigma$ with makespan $C$. Let us also denote by $u_i$ ($i = 0, \ldots, m$) its respective configurations of unscheduled jobs after the first $i$ machines were scheduled.
    We can prove inductively that $u_i \in U_i$ since for any $u_{i - 1} \in U_{i - 1}$ we consider all possible schedulings up to the roundings of $p_j$ and not considering the processing times of jobs denoted by $n_{0, s}$ variables.
    If $u'$ denotes a scheduling of jobs for machine $M_i$ in $\sigma$, then it is also feasible in the algorithm since both rounding down and zeroing out some processing times cannot increase the total processing time of any set of jobs. 
    Thus, since $\sigma$ is a feasible schedule for a guess $C$, then there has to exist some schedule corresponding to $\mathbf{0}$ in $U_m$.
    This completes the correctness proof.

    Now let us estimate the total running time of the algorithm.
    Note that for any fixed $\tau$ and $l$ each variable $n_{i, s}$ ($i = 0, l, \ldots, l + \tau$, $s = 1, \ldots, k$) has only $\Osymbol(n)$ possible values.
    Thus, there are $\Osymbol(n^{k (\tau + 1)} 2^k)$ distinct configurations of unscheduled jobs in each $U_i$.
    
    The running time of a single iteration of the main loop is dominated by computing a set of feasible assignments to a machine. We have $\Osymbol(n^{k (\tau + 1)} 2^k)$ configurations from a previous iteration.
    Every configuration can be converted to basis $l_i$ in $\Osymbol(n)$ time.
    There are $\Osymbol(2^{k (\tau + 2)})$ possible vectors $u'$, and checking every single one whether it obeys the requirements can be done in $\Osymbol(n)$ time.
    Taking everything into account we get the running time bound of $\Osymbol(n^{k (\tau + 1)} 2^{k (\tau + 3)})$ i.e. clearly a bound of form $\Osymbol(n^{f(\varepsilon)})$.
    Combining this with a binary search over a set of possible makespans we get this running time with an additional logarithmic factor.
\end{proof}

There is a small caveat concerning the total running time.
In the simplest approach we can just ensure an integer value on $\optcmaxcost$ by multiplying the processing requirements of all jobs by $\prod_{i=1}^m s_i$.
Then $\optcmaxcost \in \{1, 2, \ldots, kp_{\max}s_{\max}^m\}$, hence the binary search runs $\Osymbol(m\log (ks_{\max}p_{\max}))$ times.

However, this guarantees only a weakly polynomial time bound on running time.
To overcome this problem we use the well-known trick by Hochbaum and Shmoys (see \cite{hochbaum1988}, Theorem 1).
We just need to run \Cref{alg:kaproximateuniformmachines} first, obtaining a schedule with $C \in [\optcmaxcost,  k \optcmaxcost]$.
Then, although the application of the usual binary search in $\log{k}$ iterations fails to get us to the optimal solution, it was shown that it is possible to modify this procedure to arrive in $\log{k}$ + $\log{\frac{3}{\varepsilon}}$ carefully designed iterations at the $(1 + \varepsilon)$-approximate solution -- which of course still preserves the properties of PTAS.

In a sense this result is the best possible, due to the strong NP-hardness of the problem \cite{page2020makespan}.

\section{Unrelated machines}\label{sec:unrelated}

\subsection{Fixed number of machines, graph with bounded treewidth}

Here we provide an algorithm for $Rm|\treewidth(G) \le k|\cmaxcost$ problem, capturing $Rm|G = \blockgraph|\cmaxcost$ for block graphs with bounded clique number as a special case.
As with most of the algorithms for graphs with bounded treewidth, we work on $T$, a tree decomposition of graph $G$ \cite{cygan2015parameterized}.
To simplify the procedure, we preprocess $T$ by duplicating its bags and introducing empty ones to obtain a tree where each internal node has exactly two children. Such \textit{extended} tree decomposition has overall $\Osymbol(n)$ vertices.

In our reasoning we were inspired by an algorithm for $Pm|\treewidth(G) \le k|\cmaxcost$ problem from \cite{bodlaender1994scheduling}, albeit we use their steps in a different order.

\begin{algorithm}[htpb]
\begin{algorithmic}
\REQUIRE A set of jobs $\J$, a set of machines $\M$, a block graph $G$, function $p\colon \J \times \M \to \N$, and a~guessed makespan $C$.
\STATE Round down all $p_{i,j}$ to the nearest multiple of $\frac{\varepsilon C}{n}$
\STATE $V_B^O \gets$ a sequence of bags given by a post-order traversal of an extended tree decomposition $T$ of $G$
\FORALL{$B \in V_B^O$}
    \IF{$B$ is a leaf in $T$}
        \STATE $S(B) \gets$ a set of all possible schedules of jobs from $B$ with $\cmaxcost \le C$
    \ELSE
        \STATE $B_1, B_2 \gets$ children of $B$ in $T$
        \STATE $S' \gets$ a set of all possible schedules of jobs from $B$ with $\cmaxcost \le C$
        \STATE $S(B_1), S(B_2) \gets$ sets of already computed possible schedules for $B_1$, $B_2$
        \STATE $S(B) \gets \emptyset$
        \FORALL{$(s', s_1, s_2) \in S' \times S(B_1) \times S(B_2)$}
            \STATE \textbf{if} jobs from $B \cap B_1$ are on different machines in $s'$ and $s_1$ \textbf{then continue}
            \STATE \textbf{if} jobs from $B \cap B_2$ are on different machines in $s'$ and $s_2$ \textbf{then continue}
            \STATE $s \gets$ merged partial schedules $s'$, $s_1$, and $s_2$
            \IF{$\cmaxcost(s) \le C$}
                \STATE $S(B) \gets S(B) \cup \{s\}$
            \ENDIF
        \ENDFOR
        \STATE Trim $S(B)$ to contain unique machine loads or assignment of jobs from $B$
    \ENDIF
\ENDFOR
\RETURN any schedule from $S(r)$ \textbf{if} $S(r) \neq \emptyset$ \textbf{otherwise} \texttt{NO}
\end{algorithmic}
\caption{The core part of our algorithm for $Rm|\treewidth(G) \le k|\cmaxcost$}
\label{alg:arbitrary_tw}
\end{algorithm}

\begin{theorem}
    \Cref{alg:arbitrary_tw} combined with binary search is an FPTAS for $Rm|\treewidth(G) \le k|\cmaxcost$ problem.
    It runs in total time equal to $\Osymbol(n \log(n p_{max}) \cdot m^{3 k + 3} \cdot \ceil{\frac{n}{\varepsilon}}^{2m})$.
    \label{thm:arbitrary_tw}
\end{theorem}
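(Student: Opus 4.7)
The plan is to establish correctness (the algorithm finds a schedule with true makespan at most $(1+\varepsilon)C$ whenever $\optcmaxcost \le C$) via dynamic programming over the extended tree decomposition, then analyse the running time by bounding the number of essentially different partial schedules retained at each bag, and finally wrap everything in a binary search over guesses of $C$.

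For correctness, I would begin by bounding the effect of the rounding. Since every $p_{i,j}$ is replaced by a value no larger than the original and differing by at most $\varepsilon C/n$, and since any machine processes at most $n$ jobs, the rounded makespan of any schedule is at most its true makespan and at least the true makespan minus $\varepsilon C$. Thus if $\optcmaxcost \le C$ then some schedule has rounded makespan at most $C$, and conversely any schedule the algorithm produces with rounded makespan $\le C$ has true makespan at most $(1+\varepsilon)C$. Next, I would argue by induction on the post-order of $T$ that after processing bag $B$, the set $S(B)$ contains a representative of every equivalence class of partial schedules for the jobs in the subtree rooted at $B$ that respect the conflict graph and have rounded makespan at most $C$, where two partial schedules are equivalent when they induce the same load vector on all $m$ machines and the same assignment for the jobs in $B$. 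The base case for leaf bags is immediate. For the inductive step, the tree-decomposition property guarantees that every job outside $B$ appears in at most one of the two subtrees below $B$; hence the agreement on $B \cap B_1$ and $B \cap B_2$ checked by the algorithm is exactly the condition needed to glue three consistent pieces into a valid global partial schedule. Trimming is safe because any two equivalent partial schedules are interchangeable in every subsequent merge.

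The running-time analysis reduces to bounding $|S(B)|$. After rounding, each machine's load is a nonnegative multiple of $\varepsilon C/n$ bounded by $C$, so it takes at most $\ceil*{n/\varepsilon}+1$ values, giving $\Osymbol(\ceil*{n/\varepsilon}^m)$ load vectors. Since $|B| \le k+1$, there are at most $m^{k+1}$ assignments of bag jobs to machines, so $|S(B)| = \Osymbol(m^{k+1}\ceil*{n/\varepsilon}^m)$ after trimming. The inner triple loop ranges over $S' \times S(B_1) \times S(B_2)$, whose size is bounded by $m^{k+1} \cdot (m^{k+1}\ceil*{n/\varepsilon}^m)^2 = \Osymbol(m^{3k+3}\ceil*{n/\varepsilon}^{2m})$, and each iteration performs only $\Osymbol(k+m)$ work for consistency checking and merging. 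The extended tree decomposition has $\Osymbol(n)$ bags, yielding per-guess complexity $\Osymbol(n \cdot m^{3k+3}\ceil*{n/\varepsilon}^{2m})$.

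Finally, since all $p_{i,j}$ are integers and $\optcmaxcost \in [1, n p_{max}]$, binary search over integer values of $C$ in this range runs for $\Osymbol(\log(n p_{max}))$ iterations; the smallest $C$ for which the algorithm returns a schedule satisfies $C \le \optcmaxcost$ and its returned schedule has true makespan at most $(1+\varepsilon)C \le (1+\varepsilon)\optcmaxcost$, establishing the FPTAS property and the stated total running time. I expect the main obstacle to be formalising the equivalence relation used in the trimming step and verifying that it is preserved under the merge of three partial schedules at an internal bag; once this invariant is pinned down the counting bound on $|S(B)|$ and the rest of the complexity analysis are routine.
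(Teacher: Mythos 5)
Your proposal is correct and follows essentially the same route as the paper: round processing times to multiples of $\varepsilon C/n$, run the dynamic program over the extended tree decomposition keeping one representative per (load vector, bag-assignment) class, bound $|S(B)|$ by $m^{k+1}\ceil{n/\varepsilon}^m$, and wrap the feasibility test in a binary search over $\Osymbol(\log(n p_{max}))$ guesses. Your explicit statement of the trimming invariant and of why gluing at $B\cap B_1$, $B\cap B_2$ suffices is just a more detailed rendering of the argument the paper gives.
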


\begin{proof}
    The algorithm will perform a binary search on the value of $\cmaxcost$ and in each turn it will check for feasibility. For an initial upper bound of $\cmaxcost$ let us take $n p_{max}$.

    Note, that the maximum possible difference between the initial and rounded processing times is equal to $n \cdot \frac{\varepsilon \optcmaxcost}{n} = \varepsilon \optcmaxcost$.

    To prove the correctness of the algorithm, it is sufficient to note that on the one hand, if it does not return \texttt{NO}, then it always returns a valid schedule.

    On the other hand, it is easy to note that every partial schedule $s \in S(B)$:
    \begin{enumerate}
        \item assigns all jobs included in a subtree of $T$ rooted in $B$,
        \item satisfies all incompatibility conditions between jobs in the subtree of $T$ rooted in $B$,
        \item has a total makespan at most $C$.
    \end{enumerate}
    Moreover, by construction we ensure that we include schedules with all possible current loads of all machines and assignments of jobs to machines in the current bag $B$.

    Thus, if we consider an optimal schedule, then it appears ultimately in $S(r)$ as long as $C \ge (1 + \varepsilon) \optcmaxcost$ -- or it can be excluded at some point during trimming, but then there remains another, equivalent schedule in terms of current loads and assignments of jobs from the current bag, which eventually will provide some equivalent complete schedule in terms of loads with rounded processing times.
    
    To compute the time complexity, let us first consider how large any set $S(B)$ can be.
    We can have at most $\ceil{\frac{n}{\varepsilon}}$ different loads for each machine, thus $\ceil{\frac{n}{\varepsilon}}^m$ different loads in total.
    Additionally, there are at most $m^{k + 1}$ assignments of jobs from $B$ to the machines.
    Thus, after trimming we are left with at most one schedule per unique machine load or assignment of jobs, so $|S(B)| \le \ceil{n / \varepsilon}^m m^{k + 1}$.
    Moreover, $|S'| \le m^{k + 1}$,
    thus $|S' \times S(B_1) \times S(B_2)| \le \ceil{n / \varepsilon}^{2 m} m^{3 k + 3}$.
    For each triple it takes $\Osymbol(k^2)$ time to check if the jobs are scheduled consistently, and $\Osymbol(m)$ time to merge the schedules and check their new total makespan.

    Since there are $\Osymbol(n)$ bags to process and $\Osymbol(\log(n p_{max}))$ iterations of binary search, it completes the proof.
\end{proof}

\section{Experimental results}
\label{sec:experiments}

\subsection{Setup}
In this section, we present the findings from the computational experiments conducted.
We ran the experiments on: Intel(R) Core(TM)2 Duo CPU, T8300 @ 2.40GHz; 2x2GB of DDR RAM (667 MT/s); OS: NAME="Linux~Mint" VERSION="21.3 (Virginia)"; SSD: Model=KINGSTON SA400S37120G.
The algorithms were implemented in C++ using \texttt{Boost Library} for solving maximum flow instances.
The code with example scripts is available at \url{https://bitbucket.org/tpikies/block-scheduling/} \cite{blockrepo}.
The testing consists of the following steps:

\newcommand{\ntasks}{n_{tasks}\xspace}
\newcommand{\nparts}{n_{parts}\xspace}
\newcommand{\avg}{\textrm{avg}\xspace}
\newcommand{\GREEDY}{\texttt{Greedy}\xspace}
\newcommand{\TREE}{\texttt{Tree-Width}\xspace}
\newcommand{\EXACTCMAX}{\texttt{Exact-Cmax}\xspace}
\newcommand{\FLOW}{\texttt{Flow}\xspace}
\newcommand{\TIMEOFCOMPUTATIONLABEL}{Time of computations [$\mu$s]}

\paragraph{Partition Generation}
First, we provide two values: $\ntasks$ (number of tasks $n$) and $\nparts$ (number of blocks),   to obtain the sizes of the blocks in the desired block graphs.
Precisely, we generate a random partition of $\ntasks + \nparts - 1$ into $\nparts$ parts.
We apply a modified version of the well-known uniform random integer partitioning algorithm from \cite[p. 72--76]{nijenhuis2014combinatorial}.
We modified the algorithm, to generate a partition into given number of parts, and to ensure that all the values in the partition are within arbitrary predefined bounds $2$ and $m$.
The number to be partitioned and the bounds were chosen to generate a collection of $n$ jobs, represented by connected graph, which can be scheduled on $m$ machines.
 
\paragraph{Block Graph Generation}
Due to our best knowledge, there does not exist a predefined, established, algorithm for random block graph generation.
Due to this, we propose the following method, which consists of the 2 phases.
First, the sizes of the parts are randomly shuffled; and a clique of the size given by the first part size is created.
After this, iteratively until all other part sizes are considered, the following steps are performed:
\begin{enumerate}
	\item A next part size $s$ in the shuffled partition is picked.
	\item One of existing blocks, denoted as $B$, is chosen, uniformly at random and one of its vertices, denoted as $v$, is chosen, also uniformly at random.
	\item A new block $B'$ consisting of $v$ and $s-1$ new vertices is constructed. 
	That is, we connect a new clique consisting of $s - 1$ new vertices to $v$, thus forming a block of size $s$.
\end{enumerate}
By this construction we obtain a (rooted) cut-tree structure representing the generated graphs.

\paragraph{Algorithm Application}
Finally, the respective scheduling algorithms are applied. 
In particular, we implemented: \Cref{alg:2apx_identical_block} (\GREEDY),  \Cref{algorithm:main-loop} (\EXACTCMAX), a version of \Cref{alg:arbitrary_tw} suitable for the identical machines (\TREE); and \Cref{alg:uniform_flow_network} (\FLOW).
In addition, we implemented the PTAS for $P|G=\blockgraph|\cmaxcost$ from \Cref{thm:ptas_identical_unit_block}, which effectively consists of the first three algorithms (\PTAS).

\paragraph{A note on interpretation of data in tables}
The algorithms were applied for instances of the random graph generated according to the parameters specified for a particular test.
In particular we performed the tests for $25$ instances of a graph with $n \in \{5, 10, \ldots, 50\}$ vertices, and for $m \in \{4,6,8\}$ machines; unless stated otherwise.
In the following tables, the letter 'T' signifies that some timeouts were present -- tests crossed the $15$ minutes limit.
Similarly, the letter 'M' signifies that the program was terminated due to excessive memory usage.
We considered $3$ different functions providing the number of blocks for given $n$, namely:
$b_{\min} = \ceil{\frac{n-1}{m-1}}$, $b_{\max} = n-1$, $b_{\avg} = \floor{(b_{\min} + b_{\max})/2}$.
We use such a formula for $b_{\min}$, due to the fact that a first block in block graph consists of up to $m$ vertices, and any other block can consists of up to $m-1$ vertices not considered so far; obviously it holds when $n \ge m$.
For relevant algorithms, we consider different processing times for jobs.
Namely, for functions $p_0, p_1, p_2$, processing times assigned are random integers from the ranges $[1..5]$, $[1..10]$, $[1..20]$, respectively.
All computation times are given in $\mu$s.

\subsection{Empirical results for \Cref{alg:2apx_identical_block} -- \GREEDY}

In this subsection we assess the approximation ratio of \Cref{thm:2apx_identical_block} for an average instance of $P|G=\blockgraph|\cmaxcost$.
\GREEDY was applied for $m$ identical machines, where $m\in \{4,6,8\}$ and a block graph consisting of $n$ vertices, where $n \in \{50, \ldots, 300\}$. 

The result, $\cmaxcost$ of the constructed schedule, is then compared with a simple lower bound on the optimal $\cmaxcost$, i.e. with $(\sum p_i)/m$.
The averages of those ratios are presented in the table.
Due to the fact that the ratio tends to $1.0$, when $n$ increases, in addition we analyze the case when $n \le 40$ and the jobs are unit time in more detailed way.
Precisely, for each of the generated graphs we construct a schedule by \GREEDY as previously, but we compare the length of the schedule with optimum length obtained by an application of \TREE; the averages of those ratios are also presented in the tables.
As we can see, in some cases the difference between the lower bound and real value of optimal $\cmaxcost$ was noticeable, but not great.
\begin{table}[H]
        \newcolumntype{H}{>{\setbox0=\hbox\bgroup}c<{\egroup}@{}}
        \newcolumntype{F}{cHHHHH}
	\footnotesize
	\centering
	\newcommand{\APXGREEDY}{$\cmaxcost(S_G)/(\sum p_j/m)$}
	\begin{tabular}{|c|c|F|F|F|F|F|F|F|F|F|}
		\hline
		\multicolumn{56}{|c|}{\APXGREEDY} \\
		\hline
		\hline
		&   & \multicolumn{18}{c|}{$b_{\min}$}&\multicolumn{18}{c|}{$b_{\avg}$}&\multicolumn{18}{c|}{$b_{\max}$} \\
		$m$ & $n$  & \multicolumn{6}{c|}{$p_{0}$}&\multicolumn{6}{c|}{$p_{1}$}&\multicolumn{6}{c|}{$p_{2}$}&\multicolumn{6}{c|}{$p_{0}$}&\multicolumn{6}{c|}{$p_{1}$}&\multicolumn{6}{c|}{$p_{2}$}&\multicolumn{6}{c|}{$p_{0}$}&\multicolumn{6}{c|}{$p_{1}$}&\multicolumn{6}{c|}{$p_{2}$} \\

		\hline4 &   50& 1.06 &   41.1 &   38.9 &   14.7 & 100.00 & GREEDY& 1.05 &   40.6 &   38.8 &   19.7 & 100.00 & GREEDY& 1.06 &   41.3 &   39.0 &   22.5 & 100.00 & GREEDY& 1.06 &   75.7 &   71.6 &   15.4 & 100.00 & GREEDY& 1.05 &   74.8 &   70.8 &   19.8 & 100.00 & GREEDY& 1.06 &   75.8 &   71.6 &   22.2 & 100.00 & GREEDY& 1.06 &  145.1 &  136.4 &   15.2 & 100.00 & GREEDY& 1.07 &  144.3 &  135.2 &   19.9 & 100.00 & GREEDY& 1.07 &  145.5 &  136.2 &   22.5 & 100.00 & GREEDY\\
		&  100& 1.03 &   79.2 &   76.9 &   26.3 & 100.00 & GREEDY& 1.03 &   78.6 &   76.5 &   35.4 & 100.00 & GREEDY& 1.03 &   76.5 &   74.4 &   43.2 & 100.00 & GREEDY& 1.04 &  145.7 &  140.8 &   26.3 & 100.00 & GREEDY& 1.03 &  144.2 &  140.0 &   37.1 & 100.00 & GREEDY& 1.03 &  140.3 &  136.0 &   45.6 & 100.00 & GREEDY& 1.04 &  278.2 &  268.7 &   29.7 & 100.00 & GREEDY& 1.04 &  276.9 &  267.6 &   35.4 & 100.00 & GREEDY& 1.03 &  267.3 &  259.1 &   42.5 & 100.00 & GREEDY\\
		&  150& 1.02 &  116.5 &  114.0 &   61.4 & 100.00 & GREEDY& 1.02 &  117.0 &  114.9 &   37.0 & 100.00 & GREEDY& 1.02 &  115.3 &  112.9 &   54.5 & 100.00 & GREEDY& 1.02 &  213.5 &  208.8 &   61.8 & 100.00 & GREEDY& 1.02 &  216.0 &  211.2 &   37.2 & 100.00 & GREEDY& 1.02 &  211.6 &  206.8 &   51.6 & 100.00 & GREEDY& 1.03 &  408.2 &  398.1 &   61.0 & 100.00 & GREEDY& 1.02 &  413.2 &  403.2 &   37.9 & 100.00 & GREEDY& 1.03 &  404.4 &  394.2 &   51.8 & 100.00 & GREEDY\\
		&  200& 1.02 &  155.0 &  152.3 &   68.8 & 100.00 & GREEDY& 1.02 &  153.2 &  150.6 &   80.6 & 100.00 & GREEDY& 1.02 &  153.5 &  150.9 &   48.4 & 100.00 & GREEDY& 1.02 &  285.0 &  279.5 &   73.4 & 100.00 & GREEDY& 1.02 &  281.6 &  276.3 &   81.1 & 100.00 & GREEDY& 1.02 &  281.8 &  276.5 &   48.6 & 100.00 & GREEDY& 1.02 &  543.8 &  533.3 &   68.2 & 100.00 & GREEDY& 1.02 &  537.7 &  526.7 &   80.4 & 100.00 & GREEDY& 1.02 &  538.4 &  527.3 &   51.0 & 100.00 & GREEDY\\
		&  250& 1.01 &  189.9 &  187.7 &   85.6 & 100.00 & GREEDY& 1.01 &  191.2 &  188.9 &  100.2 & 100.00 & GREEDY& 1.01 &  191.4 &  189.1 &   60.8 & 100.00 & GREEDY& 1.01 &  348.4 &  343.7 &   85.8 & 100.00 & GREEDY& 1.01 &  351.4 &  346.3 &  102.5 & 100.00 & GREEDY& 1.01 &  351.3 &  346.5 &   60.0 & 100.00 & GREEDY& 1.01 &  664.1 &  654.8 &   87.4 & 100.00 & GREEDY& 1.02 &  671.6 &  661.0 &  101.5 & 100.00 & GREEDY& 1.01 &  669.0 &  660.9 &   60.8 & 100.00 & GREEDY\\
		&  300& 1.01 &  230.6 &  228.1 &   70.8 & 100.00 & GREEDY& 1.01 &  229.5 &  227.1 &  101.4 & 100.00 & GREEDY& 1.01 &  227.8 &  225.0 &  120.1 & 100.00 & GREEDY& 1.01 &  423.0 &  417.7 &   75.7 & 100.00 & GREEDY& 1.01 &  421.7 &  416.2 &  100.8 & 100.00 & GREEDY& 1.01 &  416.8 &  411.4 &  122.0 & 100.00 & GREEDY& 1.01 &  807.8 &  797.0 &   71.4 & 100.00 & GREEDY& 1.01 &  803.5 &  793.2 &  102.8 & 100.00 & GREEDY& 1.02 &  796.1 &  783.9 &  120.9 & 100.00 & GREEDY\\
		\hline6 &   50& 1.07 &   27.8 &   26.0 &   13.7 & 100.00 & GREEDY& 1.10 &   28.1 &   25.6 &   20.4 & 100.00 & GREEDY& 1.09 &   28.5 &   26.2 &   22.6 & 100.00 & GREEDY& 1.08 &   51.4 &   47.6 &   14.0 & 100.00 & GREEDY& 1.10 &   51.8 &   46.9 &   20.6 & 100.00 & GREEDY& 1.10 &   52.8 &   47.9 &   23.5 & 100.00 & GREEDY& 1.08 &   97.6 &   90.2 &   13.8 & 100.00 & GREEDY& 1.12 &  100.0 &   89.2 &   20.1 & 100.00 & GREEDY& 1.12 &  101.5 &   91.0 &   22.6 & 100.00 & GREEDY\\
		&  100& 1.04 &   53.3 &   51.5 &   23.6 & 100.00 & GREEDY& 1.05 &   53.1 &   50.7 &   35.7 & 100.00 & GREEDY& 1.04 &   51.9 &   49.7 &   42.3 & 100.00 & GREEDY& 1.04 &   98.0 &   94.5 &   24.2 & 100.00 & GREEDY& 1.05 &   98.0 &   93.0 &   36.0 & 100.00 & GREEDY& 1.05 &   95.4 &   90.9 &   42.4 & 100.00 & GREEDY& 1.04 &  187.2 &  179.8 &   24.3 & 100.00 & GREEDY& 1.06 &  187.0 &  177.1 &   35.8 & 100.00 & GREEDY& 1.05 &  182.0 &  173.0 &   42.3 & 100.00 & GREEDY\\
		&  150& 1.03 &   78.6 &   76.1 &   62.5 & 100.00 & GREEDY& 1.03 &   78.2 &   76.2 &   33.8 & 100.00 & GREEDY& 1.04 &   78.2 &   75.5 &   52.2 & 100.00 & GREEDY& 1.03 &  144.1 &  139.4 &   62.1 & 100.00 & GREEDY& 1.03 &  144.0 &  139.9 &   34.4 & 100.00 & GREEDY& 1.04 &  144.0 &  138.3 &   53.9 & 100.00 & GREEDY& 1.04 &  275.4 &  265.5 &   63.0 & 100.00 & GREEDY& 1.03 &  274.7 &  266.6 &   34.2 & 100.00 & GREEDY& 1.04 &  274.4 &  263.4 &   52.6 & 100.00 & GREEDY\\
		&  200& 1.02 &  102.4 &  100.1 &   68.1 & 100.00 & GREEDY& 1.02 &  102.7 &  100.7 &   82.2 & 100.00 & GREEDY& 1.02 &  102.9 &  101.3 &   43.8 & 100.00 & GREEDY& 1.03 &  188.3 &  182.9 &   68.8 & 100.00 & GREEDY& 1.03 &  189.4 &  184.4 &   82.8 & 100.00 & GREEDY& 1.02 &  189.8 &  186.0 &   44.6 & 100.00 & GREEDY& 1.03 &  358.7 &  348.5 &   69.9 & 100.00 & GREEDY& 1.03 &  361.3 &  351.4 &   85.3 & 100.00 & GREEDY& 1.02 &  361.4 &  354.2 &   45.2 & 100.00 & GREEDY\\
		&  250& 1.02 &  128.6 &  126.4 &   84.7 & 100.00 & GREEDY& 1.02 &  128.3 &  126.1 &  102.9 & 100.00 & GREEDY& 1.01 &  128.0 &  126.2 &   54.2 & 100.00 & GREEDY& 1.02 &  237.3 &  231.7 &   85.0 & 100.00 & GREEDY& 1.02 &  236.0 &  231.1 &  103.1 & 100.00 & GREEDY& 1.02 &  235.4 &  231.6 &   55.3 & 100.00 & GREEDY& 1.03 &  453.0 &  441.6 &   85.2 & 100.00 & GREEDY& 1.02 &  450.4 &  440.8 &  104.8 & 100.00 & GREEDY& 1.02 &  449.3 &  441.8 &   54.6 & 100.00 & GREEDY\\
		&  300& 1.02 &  154.1 &  151.7 &  102.3 & 100.00 & GREEDY& 1.02 &  152.4 &  150.1 &  122.8 & 100.00 & GREEDY& 1.01 &  152.4 &  150.7 &   64.1 & 100.00 & GREEDY& 1.02 &  282.6 &  278.0 &  101.8 & 100.00 & GREEDY& 1.02 &  279.5 &  274.4 &  123.6 & 100.00 & GREEDY& 1.02 &  280.5 &  276.1 &   68.2 & 100.00 & GREEDY& 1.02 &  540.4 &  530.2 &  102.7 & 100.00 & GREEDY& 1.02 &  533.4 &  522.9 &  125.4 & 100.00 & GREEDY& 1.01 &  533.9 &  526.2 &   67.6 & 100.00 & GREEDY\\
		\hline8 &   50& 1.14 &   22.2 &   19.4 &   20.3 & 100.00 & GREEDY& 1.14 &   22.6 &   19.8 &   23.7 & 100.00 & GREEDY& 1.07 &   21.1 &   19.7 &   13.4 & 100.00 & GREEDY& 1.16 &   41.0 &   35.3 &   20.6 & 100.00 & GREEDY& 1.17 &   42.0 &   36.0 &   24.2 & 100.00 & GREEDY& 1.09 &   39.3 &   36.0 &   13.4 & 100.00 & GREEDY& 1.18 &   79.1 &   67.0 &   20.7 & 100.00 & GREEDY& 1.16 &   79.5 &   68.4 &   23.6 & 100.00 & GREEDY& 1.12 &   76.1 &   68.2 &   13.5 & 100.00 & GREEDY\\
		&  100& 1.05 &   40.2 &   38.4 &   24.4 & 100.00 & GREEDY& 1.07 &   40.7 &   38.2 &   36.7 & 100.00 & GREEDY& 1.06 &   39.8 &   37.4 &   44.0 & 100.00 & GREEDY& 1.05 &   74.2 &   70.4 &   24.6 & 100.00 & GREEDY& 1.09 &   75.8 &   69.7 &   37.4 & 100.00 & GREEDY& 1.07 &   73.3 &   68.2 &   44.2 & 100.00 & GREEDY& 1.06 &  142.4 &  134.2 &   24.8 & 100.00 & GREEDY& 1.09 &  144.8 &  132.7 &   37.8 & 100.00 & GREEDY& 1.08 &  140.8 &  129.8 &   44.4 & 100.00 & GREEDY\\
		&  150& 1.04 &   59.7 &   57.2 &   71.1 & 100.00 & GREEDY& 1.03 &   58.8 &   57.3 &   35.8 & 100.00 & GREEDY& 1.05 &   59.6 &   56.9 &   53.6 & 100.00 & GREEDY& 1.06 &  110.9 &  104.7 &   64.9 & 100.00 & GREEDY& 1.04 &  109.1 &  105.0 &   34.6 & 100.00 & GREEDY& 1.05 &  108.9 &  104.1 &   53.4 & 100.00 & GREEDY& 1.06 &  211.6 &  199.3 &   64.6 & 100.00 & GREEDY& 1.04 &  207.8 &  199.9 &   35.6 & 100.00 & GREEDY& 1.04 &  206.9 &  198.2 &   53.7 & 100.00 & GREEDY\\
		&  200& 1.04 &   78.6 &   75.9 &   70.0 & 100.00 & GREEDY& 1.04 &   78.3 &   75.6 &   84.6 & 100.00 & GREEDY& 1.02 &   78.2 &   76.7 &   43.8 & 100.00 & GREEDY& 1.04 &  143.9 &  138.7 &   71.6 & 100.00 & GREEDY& 1.04 &  144.0 &  138.4 &   85.7 & 100.00 & GREEDY& 1.02 &  143.8 &  140.4 &   44.4 & 100.00 & GREEDY& 1.04 &  275.8 &  264.6 &   71.2 & 100.00 & GREEDY& 1.05 &  276.2 &  263.6 &   86.8 & 100.00 & GREEDY& 1.03 &  275.1 &  267.6 &   44.6 & 100.00 & GREEDY\\
		&  250& 1.03 &   96.5 &   94.1 &   86.9 & 100.00 & GREEDY& 1.02 &   96.8 &   94.7 &  107.4 & 100.00 & GREEDY& 1.02 &   96.8 &   95.0 &   54.6 & 100.00 & GREEDY& 1.03 &  177.6 &  171.8 &   87.7 & 100.00 & GREEDY& 1.03 &  178.5 &  173.4 &  108.0 & 100.00 & GREEDY& 1.02 &  177.4 &  174.0 &   56.4 & 100.00 & GREEDY& 1.03 &  338.4 &  327.4 &   89.6 & 100.00 & GREEDY& 1.03 &  342.0 &  330.8 &  106.6 & 100.00 & GREEDY& 1.03 &  340.1 &  331.6 &   55.1 & 100.00 & GREEDY\\
		&  300& 1.03 &  116.9 &  113.7 &  105.3 & 100.00 & GREEDY& 1.02 &  115.1 &  112.8 &  128.1 & 100.00 & GREEDY& 1.01 &  114.9 &  113.3 &   63.0 & 100.00 & GREEDY& 1.03 &  214.5 &  208.2 &  105.5 & 100.00 & GREEDY& 1.03 &  211.7 &  206.0 &  128.2 & 100.00 & GREEDY& 1.02 &  211.6 &  207.5 &   64.8 & 100.00 & GREEDY& 1.03 &  409.4 &  396.9 &  105.5 & 100.00 & GREEDY& 1.03 &  402.8 &  392.2 &  130.5 & 100.00 & GREEDY& 1.02 &  403.8 &  395.6 &   65.8 & 100.00 & GREEDY\\
		\hline
	\end{tabular}
	\caption{
		The ratio of $\cmaxcost$ of schedules produced by \GREEDY and $\frac{\sum p_j}{m}$ (a lower bound on the $\cmaxcost$ of an optimum schedule).
		As previously stated, for $p_0, p_1, p_2$, the processing times are drawn uniformly at random from the sets $[1,5]$, $[1,10]$, $[1,20]$, respectively.
	 }
	\label{tab:greedy}
\end{table}
\begin{table}[H]
        \newcolumntype{H}{>{\setbox0=\hbox\bgroup}c<{\egroup}@{}}
        \newcolumntype{F}{cHHHHH}
	\footnotesize
	\centering
	\begin{tabular}{|c|c|F|F|F|}
		\hline
		\multicolumn{20}{|c|}{$\cmaxcost(S_G) / \cmaxcost(S_T)$} \\
		\hline
		\hline
		$m$ & $n$  & \multicolumn{6}{c|}{$b_{\min}$}&\multicolumn{6}{c|}{$b_{\avg}$}&\multicolumn{6}{c|}{$b_{\max}$} \\
		\hline 
		4 &    5& 1.00 &   2.00 &   1.00 &   4.24 & 50.00 & GREEDYvsTREE& 1.00 &   2.00 &   1.00 &   4.20 & 50.00 & GREEDYvsTREE& 1.00 &   2.00 &   1.00 &   4.28 & 50.00 & GREEDYvsTREE\\
		&   10& 1.00 &   3.00 &   1.00 &   5.24 & 50.00 & GREEDYvsTREE& 1.00 &   3.00 &   1.00 &   6.16 & 50.00 & GREEDYvsTREE& 1.00 &   3.00 &   1.00 &   6.28 & 50.00 & GREEDYvsTREE\\
		&   15& 1.09 &   4.56 &   1.14 &   6.24 & 50.00 & GREEDYvsTREE& 1.03 &   4.12 &   1.03 &   8.04 & 50.00 & GREEDYvsTREE& 1.01 &   4.04 &   1.01 &   8.44 & 50.00 & GREEDYvsTREE\\
		&   20& 1.03 &   5.92 &   1.18 &   7.68 & 50.00 & GREEDYvsTREE& 1.10 &   5.52 &   1.10 &   9.72 & 50.00 & GREEDYvsTREE& 1.08 &   5.40 &   1.08 &  10.64 & 50.00 & GREEDYvsTREE\\
		&   25& 1.08 &   7.56 &   1.08 &   8.28 & 50.00 & GREEDYvsTREE& 1.00 &   7.04 &   1.01 &  10.80 & 50.00 & GREEDYvsTREE& 1.00 &   7.00 &   1.00 &  12.08 & 50.00 & GREEDYvsTREE\\
		&   30& 1.08 &   8.68 &   1.09 &   9.68 & 50.00 & GREEDYvsTREE& 1.01 &   8.08 &   1.01 &  12.68 & 50.00 & GREEDYvsTREE& 1.00 &   8.00 &   1.00 &  13.80 & 50.00 & GREEDYvsTREE\\
		&   35& 1.08 &  10.00 &   1.11 &  10.92 & 50.00 & GREEDYvsTREE& 1.05 &   9.44 &   1.05 &  14.16 & 50.00 & GREEDYvsTREE& 1.01 &   9.12 &   1.01 &  15.88 & 50.00 & GREEDYvsTREE\\
		&   40& 1.12 &  11.68 &   1.17 &  11.92 & 50.00 & GREEDYvsTREE& 1.05 &  10.52 &   1.05 &  15.68 & 50.00 & GREEDYvsTREE& 1.04 &  10.40 &   1.04 &  17.72 & 50.00 & GREEDYvsTREE\\
		\hline 
		6 &    5& 1.00 &   1.00 &   1.00 &   3.48 & 50.00 & GREEDYvsTREE& 1.00 &   1.00 &   1.00 &   4.08 & 50.00 & GREEDYvsTREE& 1.00 &   1.00 &   1.00 &   4.24 & 50.00 & GREEDYvsTREE\\
		&   10& 1.00 &   2.00 &   1.00 &   5.12 & 50.00 & GREEDYvsTREE& 1.00 &   2.00 &   1.00 &   6.32 & 50.00 & GREEDYvsTREE& 1.00 &   2.00 &   1.00 &   6.28 & 50.00 & GREEDYvsTREE\\
		&   15& 1.00 &   3.00 &   1.00 &   6.20 & 50.00 & GREEDYvsTREE& 1.00 &   3.00 &   1.00 &   8.28 & 50.00 & GREEDYvsTREE& 1.00 &   3.00 &   1.00 &   8.72 & 50.00 & GREEDYvsTREE\\
		&   20& 1.00 &   4.00 &   1.00 &   7.08 & 50.00 & GREEDYvsTREE& 1.00 &   4.00 &   1.00 &  13.52 & 50.00 & GREEDYvsTREE& 1.00 &   4.00 &   1.00 &  10.80 & 50.00 & GREEDYvsTREE\\
		&   25& 1.00 &   5.00 &   1.00 &   8.36 & 50.00 & GREEDYvsTREE& 1.00 &   5.00 &   1.00 &  10.76 & 50.00 & GREEDYvsTREE& 1.00 &   5.00 &   1.00 &  12.92 & 50.00 & GREEDYvsTREE\\
		&   30& 1.07 &   5.92 &   1.18 &   8.60 & 50.00 & GREEDYvsTREE& 1.10 &   5.52 &   1.10 &  12.56 & 50.00 & GREEDYvsTREE& 1.08 &   5.40 &   1.08 &  14.16 & 50.00 & GREEDYvsTREE\\
		&   35& 1.08 &   6.92 &   1.15 &   9.56 & 50.00 & GREEDYvsTREE&   T &   6.08 &   1.01 &  14.36 &  50.00 & GREEDYvsTREE& 1.00 &   6.00 &   1.00 &  15.92 & 50.00 & GREEDYvsTREE\\
		&   40& 1.10 &   7.80 &   1.11 &  10.60 & 50.00 & GREEDYvsTREE&   T &      T &   1.00 &   15.8 &  50.00 & GREEDYvsTREE&   T &      7 &   1.00 &  17.96 &  50.00 & GREEDYvsTREE\\
		\hline
	\end{tabular}
	\caption{
		The ratio of $\cmaxcost$ of a schedule produced by \GREEDY, denoted by $S_G$, and a schedule produced by \TREE, denoted by $S_T$ (an optimal schedule).
		Here we considered unit-time jobs.
		The letter 'T' denotes timeout, as stated earlier.
	} 
	\label{tab:greed_vs_treewidth}
\end{table}
We can infer from \Cref{tab:greedy} that the schedules produced by \GREEDY are not very far from optimal ones.
In many cases, the solution is much better than guaranteed by the worst-case ratio given in \Cref{thm:2apx_identical_block} or even in \Cref{lem:equal_coloring_block}.
That is, twice the value of the optimal makespan, or $\sim\frac{m}{m-1}$ in the latter case.
It seems that the probability of generating a hard instance tends to $0$ as $n$ increases, for all $p$ and all $b$ functions.

To have a better view for small instances, because it seems that the approximation ratio is worst in these cases, we ran the tests for small data and compared the $\cmaxcost$ of the schedules with the optimal ones, obtained using treewidth.
Here we considered unit-time jobs only, due to time limitations.
It seems that for small instances, the algorithm is also working well in fact, cf. \Cref{tab:greed_vs_treewidth}.
As we can see, it is the case of $n \le 50$, the approximation ratio is getting worse, when $n$ grows; however, at some moment it stops getting worse.

\subsection{Empirical results on \EXACTCMAX}
\Cref{tab:exact_cmax_increasing_cmax} and \Cref{fig:different_b_functions_for_cmax} illustrate $2$ things.
As expected, even small increase of the $\cmaxcost$ bound has a radical impact on the processing time -- the algorithm is exponential in $\cmaxcost$ bound.
The second thing is that the real processing time depends one the average size of a block, which is evident especially for $b=b_{\min}$.
If size of a block is equal to $m$, then, in fact, the problem becomes a bit easier: on average, a given subgraph has few distinct colorings possible.
This means that the exponential (with respect to $\cmaxcost$) number of distinct colorings may appear in the processing of given block-cut tree later (higher in the tree).
This is well illustrated in the left figure of \Cref{fig:different_b_functions_for_cmax}: when the average size of a block tends to $m$, for given number of blocks, the processing time decreases. 

\begin{table}[H]
	\setlength{\tabcolsep}{1pt}
	\newcolumntype{H}{>{\setbox0=\hbox\bgroup}c<{\egroup}@{}}
	\newcolumntype{F}{HHHHrHH}
	\footnotesize
	\centering
	\begin{tabular}{|c|c|c|F|F|F|}
		\hline
		\multicolumn{24}{|c|}{\TIMEOFCOMPUTATIONLABEL} \\
		\hline
		\hline
		$\cmaxcost$ & $m$   & $n$                    & \multicolumn{7}{c|}{$b_{\min}$}&\multicolumn{7}{c|}{$b_{\avg}$}&\multicolumn{7}{c|}{$b_{\max}$} \\
		\hline 
		2 & 4 &   25&   8 &  -0.14 &  -1.00 &   7.00 & 775.88 & 100.00  & EXACT-CMAX&  16 &  -0.14 &  -1.00 &   7.00 & 829.32 & 100.00  & EXACT-CMAX&  24 &  -0.14 &  -1.00 &   7.00 & 720.52 & 100.00  & EXACT-CMAX\\
		&   &   50&  17 &  -0.08 &  -1.00 &  13.00 & 1658.20 & 100.00  & EXACT-CMAX&  33 &  -0.08 &  -1.00 &  13.00 & 1693.44 & 100.00  & EXACT-CMAX&  49 &  -0.08 &  -1.00 &  13.00 & 1352.00 & 100.00  & EXACT-CMAX\\
		& 8 &   25&   4 &  -0.25 &  -1.00 &   4.00 & 6440.84 & 100.00  & EXACT-CMAX&  14 &  -0.25 &  -1.00 &   4.00 & 17078.72 & 100.00  & EXACT-CMAX&  24 &  -0.25 &  -1.00 &   4.00 & 6643.32 & 100.00  & EXACT-CMAX\\
		&   &   50&   7 &  -0.14 &  -1.00 &   7.00 & 14260.80 & 100.00  & EXACT-CMAX&  28 &  -0.14 &  -1.00 &   7.00 & 34088.00 & 100.00  & EXACT-CMAX&  49 &  -0.14 &  -1.00 &   7.00 & 17326.44 & 100.00  & EXACT-CMAX\\
		\hline 
		4 & 4 &   25&   8 &  -0.14 &  -1.00 &   7.00 & 1571.00 & 100.00  & EXACT-CMAX&  16 &  -0.14 &  -1.00 &   7.00 & 3987.20 & 100.00  & EXACT-CMAX&  24 &  -0.14 &  -1.00 &   7.00 & 6486.32 & 100.00  & EXACT-CMAX\\
		&   &   50&  17 &  -0.08 &  -1.00 &  13.00 & 2692.64 & 100.00  & EXACT-CMAX&  33 &  -0.08 &  -1.00 &  13.00 & 6146.76 & 100.00  & EXACT-CMAX&  49 &  -0.08 &  -1.00 &  13.00 & 9070.04 & 100.00  & EXACT-CMAX\\
		& 8 &   25&   4 &  1.00 &   4.00 &   4.00 & 99543.92 & 100.00  & EXACT-CMAX&  14 &  1.00 &   4.00 &   4.00 & 1052715.20 & 100.00  & EXACT-CMAX&  24 &  1.00 &   4.00 &   4.00 & 1044746.96 & 100.00  & EXACT-CMAX\\
		&   &   50&   7 &  -0.14 &  -1.00 &   7.00 & 21636.00 & 100.00  & EXACT-CMAX&  28 &  -0.14 &  -1.00 &   7.00 & 1346302.60 & 100.00  & EXACT-CMAX&  49 &  -0.14 &  -1.00 &   7.00 & 1230309.36 & 100.00  & EXACT-CMAX\\
		\hline 
		6 & 4 &   25&   8 &  -0.14 &  -1.00 &   7.00 & 1869.56 & 100.00  & EXACT-CMAX&  16 &  -0.14 &  -1.00 &   7.00 & 11790.44 & 100.00  & EXACT-CMAX&  24 &  -0.14 &  -1.00 &   7.00 & 33542.88 & 100.00  & EXACT-CMAX\\
		&   &   50&  17 &  -0.08 &  -1.00 &  13.00 & 3268.76 & 100.00  & EXACT-CMAX&  33 &  -0.08 &  -1.00 &  13.00 & 17046.20 & 100.00  & EXACT-CMAX&  49 &  -0.08 &  -1.00 &  13.00 & 44106.60 & 100.00  & EXACT-CMAX\\
		& 8 &   25&   4 &  1.00 &   4.00 &   4.00 & 102171.76 & 100.00  & EXACT-CMAX&  14 &  1.00 &   4.00 &   4.00 & 8456828.20 & 100.00  & EXACT-CMAX&  24 &  1.00 &   4.00 &   4.00 & 13720041.88 & 100.00  & EXACT-CMAX\\
		&   &   50&   7 &  -0.14 &  -1.00 &   7.00 & 50216.72 & 100.00  & EXACT-CMAX&  28   &   M   &      M   &      M   &      M   &      M &       &  49   &   M   &      M   &      M   &      M   &      M &       \\
		\hline 
		8 & 4 &   25&   8 &  1.00 &   7.00 &   7.00 & 4198.20 & 100.00  & EXACT-CMAX&  16 &  1.01 &   7.04 &   7.00 & 42511.40 & 100.00  & EXACT-CMAX&  24 &  1.00 &   7.00 &   7.00 & 139879.12 & 100.00  & EXACT-CMAX\\
		&   &   50&  17 &  -0.08 &  -1.00 &  13.00 & 6786.84 & 100.00  & EXACT-CMAX&  33 &  -0.08 &  -1.00 &  13.00 & 57931.56 & 100.00  & EXACT-CMAX&  49 &  -0.08 &  -1.00 &  13.00 & 200516.92 & 100.00  & EXACT-CMAX\\
		& 8 &   25&   4 &  1.00 &   4.00 &   4.00 & 119770.36 & 100.00  & EXACT-CMAX&  14 &  1.00 &   4.00 &   4.00 & 19407589.32 & 100.00  & EXACT-CMAX&  24   &   M   &      M   &      M   &      M   &      M &       \\
		&   &   50&   7 &  1.00 &   7.00 &   7.00 & 104718.60 & 100.00  & EXACT-CMAX&  28   &   M   &      M   &      M   &      M   &      M &       &  49   &   M   &      M   &      M   &      M   &      M &       \\
		\hline 
		10 & 4 &   25&   8 &  1.00 &   7.00 &   7.00 & 4601.08 & 100.00  & EXACT-CMAX&  16 &  1.01 &   7.04 &   7.00 & 77086.52 & 100.00  & EXACT-CMAX&  24 &  1.00 &   7.00 &   7.00 & 316493.96 & 100.00  & EXACT-CMAX\\
		&   &   50&  17 &  -0.08 &  -1.00 &  13.00 & 11749.00 & 100.00  & EXACT-CMAX&  33 &  -0.08 &  -1.00 &  13.00 & 197852.48 & 100.00  & EXACT-CMAX&  49 &  -0.08 &  -1.00 &  13.00 & 805403.16 & 100.00  & EXACT-CMAX\\
		& 8 &   25&   4 &  1.00 &   4.00 &   4.00 & 125151.48 & 100.00  & EXACT-CMAX&  14 &  1.00 &   4.00 &   4.00 & 26388878.44 & 100.00  & EXACT-CMAX&  24   &   M   &      M   &      M   &      M   &      M &       \\
		&   &   50&   7 &  1.00 &   7.00 &   7.00 & 115484.52 & 100.00  & EXACT-CMAX&  28   &   M   &      M   &      M   &      M   &      M &       &  49   &   M   &      M   &      M   &      M   &      M &       \\
		\hline
	\end{tabular}
	\caption{
		Processing times for \EXACTCMAX.
		As stated earlier, the letter 'M' denotes cancellation of the program due to lack of the memory.
}
	\label{tab:exact_cmax_increasing_cmax}
\end{table}

\begin{figure}[H]
	\centering
	\begin{subfigure}{0.49\textwidth}
		\centering
		\includegraphics[width=\textwidth]{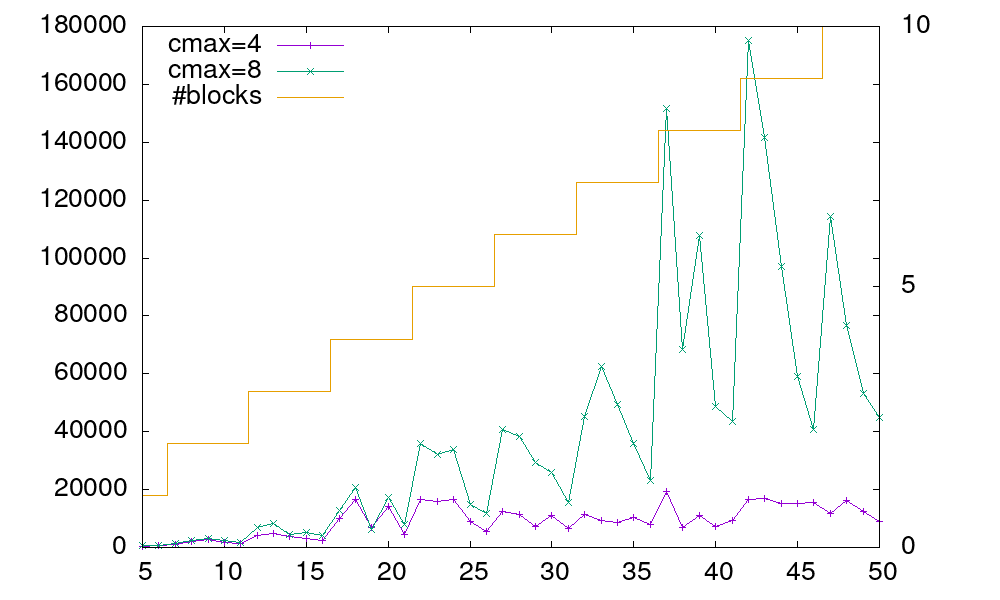}
	\end{subfigure}
	\begin{subfigure}{0.49\textwidth}
		\centering
\includegraphics[width=\textwidth]{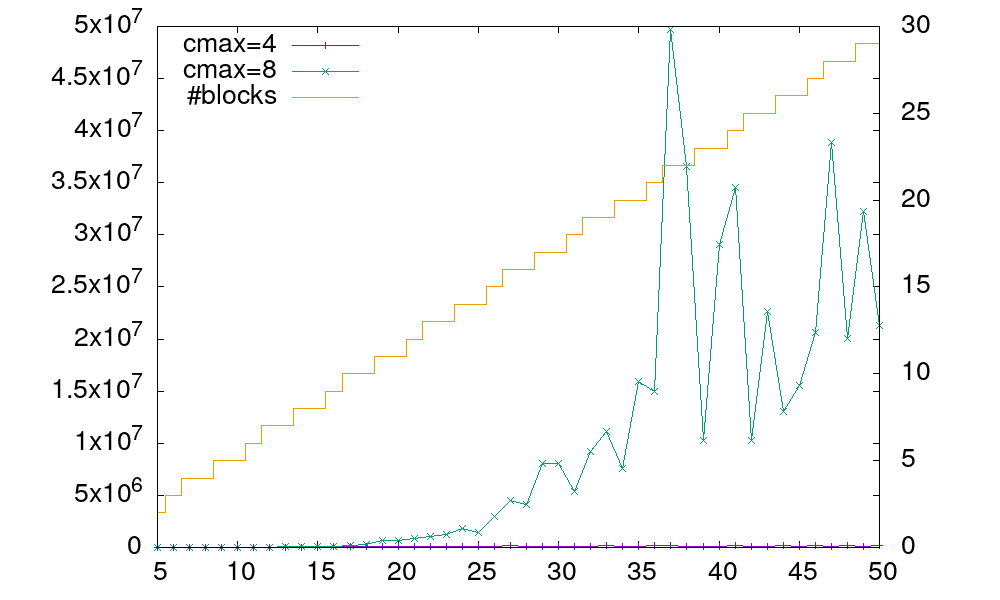}
\end{subfigure}
\caption{
	Processing time [$\mu$s] in the number of vertices for different $b$ functions: $\#blocks = b_{\min}$ and $\#blocks = b_{\max}$, respectively.
	In addition, the yellow lines show the number of blocks.
	For $b_{\min}$, the slopes are clearly correlated with changes of the block number.
}
\label{fig:different_b_functions_for_cmax}
\end{figure}

\subsection{Empirical results on \TREE}
When testing this algorithm we have observed that, in quite a few cases, the algorithm could not be applied at all due to the high memory requirements.
In particular for even $m \ge 6$ machines and $n \ge 50$ jobs, the program started to consume more than $4$GB of memory, the total amount that we assumed as a limit.

One idea to make it applicable to a larger data, is to better exploit the simplicity of the structure of block graphs and to prepare an \FPTAS specially tailored to this class.
More precisely, to exploit the fact that for identical machines in many places we do not have to carry about loads of particular machines, only about loads at all. 
This means that one could apply \EXACTCMAX to this case as well, provided that its complexity is upper bounded by the number of distinct colorings that appear during the processing.
Obviously, the maximum cardinality of a color class has size potentially up to $n$ (a good heuristic is to use $\cmaxcost$ produced by \GREEDY here instead).
However, the set of distinct colorings would be mostly empty, the total number of vectors still would be $\Osymbol(n^{poly(m)})$--polynomial one, due to the bound on the number of machines.
Hence, if one prepare data structures in a way that they are sensitive to the number of different coloring that are stored, it should be possible to apply the algorithms as well.
We have tested the proposed application of \EXACTCMAX, together with the heuristic approach, due to the fact that it is simply equivalent to running \EXACTCMAX with the bound on $\cmaxcost$ given by greedy algorithm.
As we can see in \Cref{tab:tree_vs_exactcmax1} and \Cref{tab:tree_vs_exactcmax2}, for small number of machines the savings on identifying colorings were not justified. 
This is reasonable, for small number of machines, the number of possibly equivalent (for a block graph) loads is small.
However, when the number of machines grows, also the number of equivalent loads of the machines grows, which means that the savings will grow as well.
This proves, that an approach tailored to the studied structure is more efficient, than a general approach.
\begin{table}[H]
	\newcolumntype{H}{>{\setbox0=\hbox\bgroup}c<{\egroup}@{}}
	\newcolumntype{F}{HHHrHH}
	\footnotesize
	\centering
	\begin{minipage}[t]{.5\linewidth}
		\centering
		\begin{tabular}{|c|c|F|F|F|}
			\hline
			\multicolumn{20}{|c|}{\TIMEOFCOMPUTATIONLABEL} \\
			\hline
			\hline
			$m$ & $n$  & \multicolumn{6}{c|}{$b_{\min}$}&\multicolumn{6}{c|}{$b_{\avg}$}&\multicolumn{6}{c|}{$b_{\max}$} \\
			\hline 
			4 &    5& 1.00 &   2.00 &   2.00 &  60.88 & 100.00  &   TREE& 1.00 &   2.00 &   2.00 & 103.40 & 100.00  &   TREE& 1.00 &   2.00 &   2.00 & 156.76 & 100.00  &   TREE\\
			&   10& 1.00 &   3.00 &   3.00 &  83.48 & 100.00  &   TREE& 1.00 &   3.00 &   3.00 & 590.80 & 100.00  &   TREE& 1.00 &   3.00 &   3.00 & 1442.00 & 100.00  &   TREE\\
			&   15& 1.00 &   4.20 &   4.20 & 254.28 & 100.00  &   TREE& 1.00 &   4.00 &   4.00 & 1912.36 & 100.00  &   TREE& 1.00 &   4.00 &   4.00 & 6720.64 & 100.00  &   TREE\\
			&   20& 1.00 &   5.76 &   5.76 & 464.08 & 100.00  &   TREE& 1.00 &   5.04 &   5.04 & 5296.84 & 100.00  &   TREE& 1.00 &   5.00 &   5.00 & 24402.52 & 100.00  &   TREE\\
			&   25& 1.00 &   7.00 &   7.00 & 446.88 & 100.00  &   TREE& 1.00 &   7.04 &   7.04 & 17080.68 & 100.00  &   TREE& 1.00 &   7.00 &   7.00 & 63354.72 & 100.00  &   TREE\\
			&   30& 1.00 &   8.08 &   8.08 & 1053.60 & 100.00  &   TREE& 1.00 &   8.00 &   8.00 & 32224.08 & 100.00  &   TREE& 1.00 &   8.00 &   8.00 & 148294.36 & 100.00  &   TREE\\
			&   35& 1.00 &   9.24 &   9.24 & 1694.96 & 100.00  &   TREE& 1.00 &   9.00 &   9.00 & 88948.16 & 100.00  &   TREE& 1.00 &   9.00 &   9.00 & 274267.72 & 100.00  &   TREE\\
			&   40& 1.00 &  10.40 &  10.40 & 1886.16 & 100.00  &   TREE& 1.00 &  10.04 &  10.04 & 242250.04 & 100.00  &   TREE& 1.00 &  10.00 &  10.00 & 550778.88 & 100.00  &   TREE\\
			&   45& 1.00 &  12.00 &  12.00 & 6073.72 & 100.00  &   TREE& 1.00 &  12.00 &  12.00 & 194169.36 & 100.00  &   TREE& 1.00 &  12.00 &  12.00 & 947251.20 & 100.00  &   TREE\\
			&   50& 1.00 &  13.08 &  13.08 & 6464.20 & 100.00  &   TREE& 1.00 &  13.00 &  13.00 & 713830.92 & 100.00  &   TREE& 1.00 &  13.00 &  13.00 & 1907404.12 & 100.00  &   TREE\\
			\hline 
			6 &    5& 1.00 &   1.00 &   1.00 &  99.52 & 100.00  &   TREE& 1.00 &   1.00 &   1.00 & 473.00 & 100.00  &   TREE& 1.00 &   1.00 &   1.00 & 978.48 & 100.00  &   TREE\\
			&   10& 1.00 &   2.00 &   2.00 & 404.08 & 100.00  &   TREE& 1.00 &   2.00 &   2.00 & 8072.00 & 100.00  &   TREE& 1.00 &   2.00 &   2.00 & 31996.92 & 100.00  &   TREE\\
			&   15& 1.00 &   3.00 &   3.00 & 793.28 & 100.00  &   TREE& 1.00 &   3.00 &   3.00 & 133423.04 & 100.00  &   TREE& 1.00 &   3.00 &   3.00 & 370702.52 & 100.00  &   TREE\\
			&   20& 1.00 &   4.00 &   4.00 & 3481.00 & 100.00  &   TREE& 1.00 &   4.00 &   4.00 & 1044794.72 & 100.00  &   TREE& 1.00 &   4.00 &   4.00 & 3012753.80 & 100.00  &   TREE\\
			&   25& 1.00 &   5.00 &   5.00 & 5376.52 & 100.00  &   TREE& 1.00 &   5.00 &   5.00 & 2345882.48 & 100.00  &   TREE& 1.00 &   5.00 &   5.00 & 12808966.60 & 100.00  &   TREE\\
			&   30& 1.00 &   5.56 &   5.56 & 11500.52 & 100.00  &   TREE& 1.00 &   5.00 &   5.00 & 41399041.72 & 100.00  &   TREE& 1.00 &   5.00 &   5.00 & 65858571.92 & 100.00  &   TREE\\
			&   35& 1.00 &   6.44 &   6.44 & 11521.40 & 100.00  &   TREE&   ?   &      T   &      0   &      T   &  40.00 &   TREE& 1.00 &   6.00 &   6.00 & 113559549.88 & 100.00  &   TREE\\
			&   40& 1.00 &   7.08 &   7.08 & 18891.28 & 100.00  &   TREE&   ?   &      T   &      0   &      T   &      0 &   TREE&   ?   &      T   &      0   &      T   &  52.00 &   TREE\\
			&   45& 1.00 &   8.04 &   8.04 & 48510.52 & 100.00  &   TREE&   ?   &      T   &      0   &      T   &  64.00 &   TREE&   ?   &      T   &      0   &      T   &      0 &   TREE\\
			&   50& 1.00 &   9.00 &   9.00 & 74830.52 & 100.00  &   TREE&   ?   &      T   &      0   &      T   &  16.00 &   TREE&   ?   &      T   &      0   &      T   &      0 &   TREE\\
			\hline 
		\end{tabular}
	\caption{Results of applications of Algorithm \TREE}
		\label{tab:tree_vs_exactcmax1}
	\end{minipage}%
	\begin{minipage}[t]{.5\linewidth}
		\centering
		\begin{tabular}{|c|c|F|F|F|}
			\hline
			\multicolumn{20}{|c|}{\TIMEOFCOMPUTATIONLABEL} \\
			\hline
			\hline
			$m$ & $n$ & \multicolumn{6}{c|}{$b_{\min}$}&\multicolumn{6}{c|}{$b_{\avg}$}&\multicolumn{6}{c|}{$b_{\max}$} \\
			\hline 
			4 &    5& 1.00 &    2.0 &    2.0 &  193.2 & 100.00 & EXACT-CMAX& 1.00 &    2.0 &    2.0 &  192.1 & 100.00 & EXACT-CMAX& 1.00 &    2.0 &    2.0 &  215.6 & 100.00 & EXACT-CMAX\\
			&   10& 1.00 &    3.0 &    3.0 &  527.8 & 100.00 & EXACT-CMAX& 1.00 &    3.0 &    3.0 &  844.3 & 100.00 & EXACT-CMAX& 1.00 &    3.0 &    3.0 & 1337.7 & 100.00 & EXACT-CMAX\\
			&   15& 1.08 &    4.3 &    4.0 & 1133.7 & 100.00 & EXACT-CMAX& 1.00 &    4.0 &    4.0 & 3217.5 & 100.00 & EXACT-CMAX& 1.00 &    4.0 &    4.0 & 4394.2 & 100.00 & EXACT-CMAX\\
			&   20& 1.16 &    5.8 &    5.0 & 1961.9 & 100.00 & EXACT-CMAX& 1.00 &    5.0 &    5.0 & 7771.8 & 100.00 & EXACT-CMAX& 1.00 &    5.0 &    5.0 & 21953.3 & 100.00 & EXACT-CMAX\\
			&   25& 1.00 &    7.0 &    7.0 & 2887.6 & 100.00 & EXACT-CMAX& 1.00 &    7.0 &    7.0 & 26968.2 & 100.00 & EXACT-CMAX& 1.00 &    7.0 &    7.0 & 69819.4 & 100.00 & EXACT-CMAX\\
			&   30& 1.01 &    8.1 &    8.0 & 6190.7 & 100.00 & EXACT-CMAX& 1.00 &    8.0 &    8.0 & 55994.3 & 100.00 & EXACT-CMAX& 1.00 &    8.0 &    8.0 & 139210.6 & 100.00 & EXACT-CMAX\\
			&   35& 1.02 &    9.2 &    9.0 & 15541.6 & 100.00 & EXACT-CMAX& 1.00 &    9.0 &    9.0 & 116169.9 & 100.00 & EXACT-CMAX& 1.00 &    9.0 &    9.0 & 350774.3 & 100.00 & EXACT-CMAX\\
			&   40& 1.06 &   10.6 &   10.0 & 10777.2 & 100.00 & EXACT-CMAX& 1.00 &   10.0 &   10.0 & 170780.5 & 100.00 & EXACT-CMAX& 1.00 &   10.0 &   10.0 & 716785.5 & 100.00 & EXACT-CMAX\\
			&   45& 1.00 &   12.0 &   12.0 & 21562.8 & 100.00 & EXACT-CMAX& 1.00 &   12.0 &   12.0 & 555952.9 & 100.00 & EXACT-CMAX& 1.00 &   12.0 &   12.0 & 1652327.9 & 100.00 & EXACT-CMAX\\
			&   50& 1.00 &   13.0 &   13.0 & 39325.7 & 100.00 & EXACT-CMAX& 1.00 &   13.0 &   13.0 & 939086.0 & 100.00 & EXACT-CMAX& 1.00 &   13.0 &   13.0 & 3682377.4 & 100.00 & EXACT-CMAX\\
			\hline 
			6 &    5& 1.00 &    1.0 &    1.0 &  321.3 & 100.00 & EXACT-CMAX& 1.00 &    1.0 &    1.0 &  261.3 & 100.00 & EXACT-CMAX& 1.00 &    1.0 &    1.0 &  115.4 & 100.00 & EXACT-CMAX\\
			&   10& 1.00 &    2.0 &    2.0 & 1296.1 & 100.00 & EXACT-CMAX& 1.00 &    2.0 &    2.0 & 1262.9 & 100.00 & EXACT-CMAX& 1.00 &    2.0 &    2.0 & 1647.2 & 100.00 & EXACT-CMAX\\
			&   15& 1.00 &    3.0 &    3.0 & 3158.7 & 100.00 & EXACT-CMAX& 1.00 &    3.0 &    3.0 & 21931.2 & 100.00 & EXACT-CMAX& 1.00 &    3.0 &    3.0 & 13463.4 & 100.00 & EXACT-CMAX\\
			&   20& 1.00 &    4.0 &    4.0 & 13237.8 & 100.00 & EXACT-CMAX& 1.00 &    4.0 &    4.0 & 72318.1 & 100.00 & EXACT-CMAX& 1.00 &    4.0 &    4.0 & 104876.6 & 100.00 & EXACT-CMAX\\
			&   25& 1.00 &    5.0 &    5.0 & 16034.5 & 100.00 & EXACT-CMAX& 1.00 &    5.0 &    5.0 & 222829.8 & 100.00 & EXACT-CMAX& 1.00 &    5.0 &    5.0 & 444028.8 & 100.00 & EXACT-CMAX\\
			&   30& 1.10 &    5.5 &    5.0 & 29541.5 & 100.00 & EXACT-CMAX& 1.00 &    5.0 &    5.0 & 1110095.0 & 100.00 & EXACT-CMAX& 1.00 &    5.0 &    5.0 & 1114089.7 & 100.00 & EXACT-CMAX\\
			&   35& 1.10 &    6.6 &    6.0 & 37783.8 & 100.00 & EXACT-CMAX& 1.00 &    6.0 &    6.0 & 3533159.9 & 100.00 & EXACT-CMAX& 1.00 &    6.0 &    6.0 & 2009978.0 & 100.00 & EXACT-CMAX\\
			&   40& 1.00 &    7.0 &    7.0 & 52415.4 & 100.00 & EXACT-CMAX& 1.00 &    7.0 &    7.0 & 11885259.6 & 100.00 & EXACT-CMAX& 1.00 &    7.0 &    7.0 & 7210562.3 & 100.00 & EXACT-CMAX\\
			&   45& 1.00 &    8.0 &    8.0 & 74296.4 & 100.00 & EXACT-CMAX& 1.00 &    8.0 &    8.0 & 9733264.6 & 100.00 & EXACT-CMAX& 1.00 &    8.0 &    8.0 & 20473965.4 & 100.00 & EXACT-CMAX\\
			&   50& 1.00 &    9.0 &    9.0 & 120130.2 & 100.00 & EXACT-CMAX& 1.00 &    9.0 &    9.0 & 49846133.6 & 100.00 & EXACT-CMAX& 1.00 &    9.0 &    9.0 & 80588028.0 & 100.00 & EXACT-CMAX\\
			\hline
		\end{tabular}
		\caption{\EXACTCMAX applied for a small number of machines, using greedy-upper-bound heuristic to obtain upper bound on $\cmaxcost$}
		\label{tab:tree_vs_exactcmax2}
	\end{minipage}%
	
\end{table}

We have also shown the memory exhaustion and timeouts in \Cref{tab:tree_time_landscape}.
Sadly, it seems that for $m \ge 8$ machines, the question is rather if the algorithm will crash due to excessive memory consumption, or the user will eventually terminate it due to excessive running time.
Similarly to the \EXACTCMAX case, for $b_{\min}$ in quite few cases the processing time decreases for the next neighbour.
And a similar possible explanation is that this is due to lower number of possible assignments.
If blocks have sizes close to $m$, as it should be the case for $b_{\min}$, the number of distinct colorings produced is small.
For example, assume a coloring where classes have sizes $(l_1, \ldots, l_i^*, \ldots, l_m)$, $m_i$ was assigned already a vertex from some block (some cut vertex) and the block has size $m$. 
In the end valid colorings resulting from this coloring have classes with sizes $(l_1+1, \ldots, l_i^*, \ldots, l_m+1)$.

Also, the results at the first glance might seem inconsistent, due to the fact that crashes due to memory exhaustion are intertwined with timeouts.
This is simply due to the fact that we are showing the first error that appeared. 
If there is a memory crash for any instance at all, then the test is classified as 'M' and we skip other instances.
If there is a timeout for any instance, then the tests is classified as 'T' and we skip other instances.

\begin{table}[H]
	\newcolumntype{H}{>{\setbox0=\hbox\bgroup}c<{\egroup}@{}}
	\newcolumntype{F}{HHHrHH}
	\footnotesize
	\centering
	\begin{tabular}{|c|c|c|F|F|F|}
					\hline
		\multicolumn{20}{|c|}{\TIMEOFCOMPUTATIONLABEL} \\
		\hline
		\hline
		$m$    & $n$  & $b_{\min}$ & \multicolumn{6}{c|}{$b_{\min}$}&\multicolumn{6}{c|}{$b_{\avg}$}&\multicolumn{6}{c|}{$b_{\max}$} \\		
		\hline 
		8 &  5 & 1 & 1.00 &   1.00 &   1.00 & 506.24 & 100.00  &   TREE& 1.00 &   1.00 &   1.00 & 2343.44 & 100.00  &   TREE& 1.00 &   1.00 &   1.00 & 3993.24 & 100.00  &   TREE\\
		&  10& 2 &1.00 &   2.00 &   2.00 & 12848.76 & 100.00  &   TREE& 1.00 &   2.00 &   2.00 & 189314.12 & 100.00  &   TREE& 1.00 &   2.00 &   2.00 & 351584.56 & 100.00  &   TREE\\
		&  15& 2 & 1.00 &   2.00 &   2.00 & 1401.36 & 100.00  &   TREE& 1.00 &   2.00 &   2.00 & 4039424.84 & 100.00  &   TREE& 1.00 &   2.00 &   2.00 & 9041523.04 & 100.00  &   TREE\\
		&  20& 3 &1.00 &   3.00 &   3.00 & 20674.32 & 100.00  &   TREE&   ?   &      T   &      0   &      T   &  80.00 &   TREE& 1.00 &   3.00 &   3.00 & 173041589.08 & 100.00  &   TREE\\
		&  25& 4 & 1.00 &   4.00 &   4.00 & 218526.00 & 100.00  &   TREE&   M   &      M   &      M   &      M   &      M &       &   ?   &      T   &      0   &      T   &   4.00 &   TREE\\
		&  30& 5 &1.00 &   4.00 &   4.00 & 1147264.80 & 100.00  &   TREE&   M   &      M   &      M   &      M   &      M &       &   M   &      M   &      M   &      M   &      M &       \\
		&  35& 5 &1.00 &   5.00 &   5.00 & 108673.48 & 100.00  &   TREE&   ?   &      T   &      0   &      T   &      0 &   TREE&   ?   &      T   &      0   &      T   &      0 &   TREE\\
		&  40& 6 &1.00 &   5.80 &   5.80 & 572527.00 & 100.00  &   TREE&   M   &      M   &      M   &      M   &      M &       &   ?   &      T   &      0   &      T   &      0 &   TREE\\
		&  45& 7 &1.00 &   6.00 &   6.00 & 5720324.32 & 100.00  &   TREE&   M   &      M   &      M   &      M   &      M &       &   M   &      M   &      M   &      M   &      M &       \\
		&  50& 7 &1.00 &   7.00 &   7.00 & 151566.84 & 100.00  &   TREE&   ?   &      T   &      0   &      T   &      0 &   TREE&   M   &      M   &      M   &      M   &      M &       \\
		\hline
	\end{tabular}
	\caption{Processing times of \TREE when applied to $P|G=\blockgraph, p_j=1|\cmaxcost$ with $\varepsilon = 0$ (exact algorithm).}
	\label{tab:tree_time_landscape}
\end{table}

As a final note, we observe in \Cref{tab:treewidth_running_time_vs_epsilon} that the running times are exponential in $1/\varepsilon$, as expected.
\begin{table}[H]
	\newcolumntype{H}{>{\setbox0=\hbox\bgroup}c<{\egroup}@{}}
	\newcolumntype{F}{HHHrHH}
	\footnotesize
	\centering
	\begin{tabular}{|c|c|c|F|F|F|}
		\hline
		\multicolumn{20}{|c|}{\TIMEOFCOMPUTATIONLABEL} \\
		\hline
		\hline
		$\varepsilon$ &  $m$ & $n$                     & \multicolumn{6}{c|}{$b_{\min}$}&\multicolumn{6}{c|}{$b_{\avg}$}&\multicolumn{6}{c|}{$b_{\max}$} \\
		\hline 
		0.2 & 4 &   25& 0.30 &  37.04 & 123.76 & 2315540.16 & 100.00  &   TREE&   ?   &      T   &      0   &      T   &  96.00 &   TREE&   M   &      M   &      M   &      M   &      M &       \\
		\hline 
		0.4 & 4 &   25& 0.62 &  37.32 &  60.28 & 1796568.52 & 100.00  &   TREE& 0.59 &  35.72 &  60.16 & 71583312.68 & 100.00  &   TREE&   ?   &      T   &      0   &      T   &  92.00 &   TREE\\
		\hline 
		0.6 & 4 &   25& 0.95 &  37.36 &  39.32 & 935911.48 & 100.00  &   TREE& 0.91 &  36.00 &  39.64 & 27690947.56 & 100.00  &   TREE&   ?   &      T   &      0   &      T   &  96.00 &   TREE\\
		\hline 
		0.8 & 4 &   25& 1.33 &  38.00 &  28.56 & 599572.00 & 100.00  &   TREE& 1.28 &  36.16 &  28.36 & 5458285.48 & 100.00  &   TREE& 1.33 &  37.36 &  28.00 & 22892148.92 & 100.00  &   TREE\\
		\hline 
		1.0 & 4 &   25& 1.71 &  38.16 &  22.28 & 312753.60 & 100.00  &   TREE& 1.63 &  36.40 &  22.36 & 2346230.44 & 100.00  &   TREE& 1.67 &  37.52 &  22.44 & 8741508.80 & 100.00  &   TREE\\
		\hline
	\end{tabular}
\caption{The running times for various $\varepsilon$.}
\label{tab:treewidth_running_time_vs_epsilon}
\end{table}

\subsection{Empirical results on PTAS for identical machines and unit tasks}

As indicated in the proof of \Cref{thm:ptas_identical_unit_block}, PTAS for identical machines and unit tasks consists of $3$ subprocedures:
\begin{itemize}
    \item the first one used is \EXACTCMAX, consisting of \Cref{algorithm:composing-subcolorings-for-subblockgraphs,algorithm:merger-of-earlier-cliques-and-this-clique,algorithm:colorings-of-block-without-parent-to-colorings-with-parent,algorithm:main-loop}, used to compute cases with small $\cmaxcost$; 
    more precisely, to either prove that there is no schedule with $\cmaxcost \le \frac{2}{\varepsilon}$ or to find one;
    \item when the number of machines is relatively small ($m \le \frac{2}{\varepsilon} + 1$) the \TREE, a simplified version of \Cref{alg:arbitrary_tw}, is used;
    \item and in the other cases the already discussed \Cref{alg:2apx_identical_block} \GREEDY is used -- when \EXACTCMAX returns \NO and $m > \frac{\varepsilon}{2} + 1$.
\end{itemize}
Note that, although a proper choice of designated areas for the algorithms guarantees both the approximation ratio and running times required for PTAS, all three algorithms have widely different running times.
In particular, \GREEDY does not refer to the approximation ratio at all and finishes in linear time, whereas the two others have an exponential dependency on $\varepsilon$, thus making them impractical for medium-size $\cmaxcost$ (in case of \EXACTCMAX) or number of machines $m$ (in case of \TREE).
This means that it is a good strategy to run \GREEDY and check if the returned schedule achieves the desired approximation ratio.

\begin{table}[H]
	\newcolumntype{H}{>{\setbox0=\hbox\bgroup}c<{\egroup}@{}}
	\newcolumntype{F}{HHHHrHr}
	\footnotesize
	\centering
	\begin{tabular}{|c|c|c|F|F|F|}
				\hline
		\multicolumn{24}{|c|}{\TIMEOFCOMPUTATIONLABEL, Participation of algorithms} \\
		\hline
		\hline
		$\varepsilon$  & $m$ & $n$  & \multicolumn{7}{c|}{$b_{\min}$}&\multicolumn{7}{c|}{$b_{\avg}$}&\multicolumn{7}{c|}{$b_{\max}$} \\		
		\hline 
		0.25 & 4 &   10&   3 &  1.00 &   3.00 &   3.00 & 751.72 & 100.00  & C: 100  &   6 &  1.00 &   3.00 &   3.00 & 2583.84 & 100.00  & C: 100  &   9 &  1.00 &   3.00 &   3.00 & 11035.24 & 100.00  & C: 100  \\
		&   &   20&   7 &  1.15 &   5.76 &   5.00 & 2928.64 & 100.00  & C: 100  &  13 &  1.01 &   5.04 &   5.00 & 25312.72 & 100.00  & C: 100  &  19 &  1.00 &   5.00 &   5.00 & 88463.20 & 100.00  & C: 100  \\
		&   &   30&  10 &  1.00 &   8.08 &   8.08 & 5015.52 & 100.00  & C: 92  C+T: \quad8  &  19 &  1.00 &   8.00 &   8.00 & 64434.80 & 100.00  & C: 100  &  29 &  1.00 &   8.00 &   8.00 & 153181.96 & 100.00  & C: 100  \\
		&   &   40&  13 &  1.00 &  10.40 &  10.40 & 6324.96 & 100.00  & C+T: 100  &  26 &  1.00 &  10.04 &  10.04 & 299933.88 & 100.00  & C+T: 100  &  39 &  1.00 &  10.00 &  10.00 & 734093.92 & 100.00  & C+T: 100  \\
		&   &   50&  17 &  1.00 &  13.08 &  13.08 & 11605.04 & 100.00  & C+T: 100  &  33 &  1.00 &  13.00 &  13.00 & 709405.36 & 100.00  & C+T: 100  &  49 &  1.00 &  13.00 &  13.00 & 2109583.76 & 100.00  & C+T: 100  \\
		\hline 
		& 6 &   10&   2 &  1.00 &   2.00 &   2.00 & 2263.32 & 100.00  & C: 100  &   5 &  1.00 &   2.00 &   2.00 & 8977.00 & 100.00  & C: 100  &   9 &  1.00 &   2.00 &   2.00 & 21894.80 & 100.00  & C: 100  \\
		&   &   20&   4 &  1.00 &   4.00 &   4.00 & 11377.40 & 100.00  & C: 100  &  11 &  1.00 &   4.00 &   4.00 & 657704.04 & 100.00  & C: 100  &  19 &  1.00 &   4.00 &   4.00 & 1867408.52 & 100.00  & C: 100  \\
		&   &   30&   6 &  1.11 &   5.56 &   5.00 & 26017.20 & 100.00  & C: 100  &  17 &  1.00 &   5.00 &   5.00 & 8150121.76 & 100.00  & C: 100  &  29 &  1.00 &   5.00 &   5.00 & 12994871.92 & 100.00  & C: 100  \\
		&   &   40&   8 &  1.01 &   7.08 &   7.00 & 46895.52 & 100.00  & C: 100  &  23 &  1.00 &   7.00 &   7.00 & 29623956.08 & 100.00  & C: 100  &  39 &  1.00 &   7.00 &   7.00 & 22302443.92 & 100.00  & C: 100  \\
		&   &   50&  10 &  1.00 &   9.00 &   9.00 & 122690.36 & 100.00  & C+T: 100  &  29   &   M   &      M   &      M   &      M   &      M &       &  49   &   M   &      M   &      M   &      M   &      M &       \\
		\hline 
		& 8 &   10&   2 &  1.00 &   2.00 &   2.00 & 9498.00 & 100.00  & C: 100  &   5 &  1.00 &   2.00 &   2.00 & 28062.76 & 100.00  & C: 100  &   9 &  1.00 &   2.00 &   2.00 & 40711.28 & 100.00  & C: 100  \\
		&   &   20&   3 &  1.00 &   3.00 &   3.00 & 23902.68 & 100.00  & C: 100  &  11 &  1.00 &   3.00 &   3.00 & 3520367.32 & 100.00  & C: 100  &  19 &  1.00 &   3.00 &   3.00 & 8147424.84 & 100.00  & C: 100  \\
		&   &   30&   5 &  1.00 &   4.00 &   4.00 & 400850.40 & 100.00  & C: 100  &  17   &   M   &      M   &      M   &      M   &      M &       &  29   &   M   &      M   &      M   &      M   &      M &       \\
		&   &   40&   6 &  1.16 &   5.80 &   5.00 & 245312.24 & 100.00  & C: 100  &  22   &   M   &      M   &      M   &      M   &      M &       &  39   &   M   &      M   &      M   &      M   &      M &       \\
		&   &   50&   7 &  1.00 &   7.00 &   7.00 & 106114.36 & 100.00  & C: 100  &  28   &   M   &      M   &      M   &      M   &      M &       &  49   &   M   &      M   &      M   &      M   &      M &       \\
		\hline 
		0.50 & 4 &   10&   3 &  1.00 &   3.00 &   3.00 & 578.56 & 100.00  & C: 100  &   6 &  1.00 &   3.00 &   3.00 & 1750.96 & 100.00  & C: 100  &   9 &  1.00 &   3.00 &   3.00 & 2846.32 & 100.00  & C: 100  \\
		&   &   20&   7 &  1.00 &   5.76 &   5.76 & 3433.68 & 100.00  & C+T: 100  &  13 &  1.00 &   5.04 &   5.04 & 8304.20 & 100.00  & C+T: 100  &  19 &  1.00 &   5.00 &   5.00 & 29508.32 & 100.00  & C+T: 100  \\
		&   &   30&  10 &  1.00 &   8.08 &   8.08 & 3849.00 & 100.00  & C+T: 100  &  19 &  1.00 &   8.00 &   8.00 & 39502.52 & 100.00  & C+T: 100  &  29 &  1.00 &   8.00 &   8.00 & 156506.68 & 100.00  & C+T: 100  \\
		&   &   40&  13 &  1.00 &  10.40 &  10.40 & 4994.00 & 100.00  & C+T: 100  &  26 &  1.00 &  10.04 &  10.04 & 246833.64 & 100.00  & C+T: 100  &  39 &  1.00 &  10.00 &  10.00 & 568246.96 & 100.00  & C+T: 100  \\
		&   &   50&  17 &  1.00 &  13.08 &  13.08 & 9212.88 & 100.00  & C+T: 100  &  33 &  1.00 &  13.00 &  13.00 & 639152.08 & 100.00  & C+T: 100  &  49 &  1.00 &  13.00 &  13.00 & 1931143.76 & 100.00  & C+T: 100  \\
		\hline 
		& 6 &   10&   2 &  1.00 &   2.00 &   2.00 & 1918.76 & 100.00  & C: 100  &   5 &  1.00 &   2.00 &   2.00 & 8114.76 & 100.00  & C: 100  &   9 &  1.00 &   2.00 &   2.00 & 20179.72 & 100.00  & C: 100  \\
		&   &   20&   4 &  1.00 &   4.00 &   4.00 & 9270.08 & 100.00  & C: 100  &  11 &  1.00 &   4.00 &   4.00 & 77613.24 & 100.00  & C: 100  &  19 &  1.00 &   4.00 &   4.00 & 94206.76 & 100.00  & C: 100  \\
		&   &   30&   6 &  1.18 &   5.92 &   5.00 & 7486.96 & 100.00  & C+G: 100  &  17 &  1.10 &   5.52 &   5.00 & 81331.88 & 100.00  & C+G: 100  &  29 &  1.08 &   5.40 &   5.00 & 81895.68 & 100.00  & C+G: 100  \\
		&   &   40&   8 &  1.11 &   7.80 &   7.00 & 10821.96 & 100.00  & C+G: 100  &  23 &  1.00 &   7.00 &   7.00 & 125276.20 & 100.00  & C+G: 100  &  39 &  1.00 &   7.00 &   7.00 & 106815.44 & 100.00  & C+G: 100  \\
		&   &   50&  10 &  1.06 &   9.52 &   9.00 & 9051.60 & 100.00  & C+G: 100  &  29 &  1.00 &   9.00 &   9.00 & 157708.68 & 100.00  & C+G: 100  &  49 &  1.00 &   9.00 &   9.00 & 128766.84 & 100.00  & C+G: 100  \\
		\hline 
		& 8 &   10&   2 &  1.00 &   2.00 &   2.00 & 10118.88 & 100.00  & C: 100  &   5 &  1.00 &   2.00 &   2.00 & 22013.08 & 100.00  & C: 100  &   9 &  1.00 &   2.00 &   2.00 & 24548.44 & 100.00  & C: 100  \\
		&   &   20&   3 &  1.00 &   3.00 &   3.00 & 21817.52 & 100.00  & C: 100  &  11 &  1.00 &   3.00 &   3.00 & 802401.04 & 100.00  & C: 100  &  19 &  1.00 &   3.00 &   3.00 & 650324.44 & 100.00  & C: 100  \\
		&   &   30&   5 &  1.00 &   4.00 &   4.00 & 120187.88 & 100.00  & C: 100  &  17 &  1.00 &   4.00 &   4.00 & 884847.56 & 100.00  & C: 100  &  29 &  1.00 &   4.00 &   4.00 & 908802.56 & 100.00  & C: 100  \\
		&   &   40&   6 &  1.19 &   5.96 &   5.00 & 35854.08 & 100.00  & C+G: 100  &  22 &  1.07 &   5.36 &   5.00 & 2560255.04 & 100.00  & C+G: 100  &  39 &  1.06 &   5.28 &   5.00 & 1183361.00 & 100.00  & C+G: 100  \\
		&   &   50&   7 &  1.00 &   7.00 &   7.00 & 27263.28 & 100.00  & C+G: 100  &  28 &  1.00 &   7.00 &   7.00 & 1347918.72 & 100.00  & C+G: 100  &  49 &  1.00 &   7.00 &   7.00 & 1230175.36 & 100.00  & C+G: 100  \\
		\hline 
		0.75 & 4 &   10&   3 &  1.00 &   3.00 &   3.00 & 363.20 & 100.00  & C+G: 100  &   6 &  1.00 &   3.00 &   3.00 & 470.36 & 100.00  & C+G: 100  &   9 &  1.00 &   3.00 &   3.00 & 391.64 & 100.00  & C+G: 100  \\
		&   &   20&   7 &  1.18 &   5.92 &   5.00 & 692.20 & 100.00  & C+G: 100  &  13 &  1.10 &   5.52 &   5.00 & 740.40 & 100.00  & C+G: 100  &  19 &  1.08 &   5.40 &   5.00 & 653.84 & 100.00  & C+G: 100  \\
		&   &   30&  10 &  1.09 &   8.68 &   8.00 & 1012.00 & 100.00  & C+G: 100  &  19 &  1.01 &   8.08 &   8.00 & 1131.32 & 100.00  & C+G: 100  &  29 &  1.00 &   8.00 &   8.00 & 833.44 & 100.00  & C+G: 100  \\
		&   &   40&  13 &  1.17 &  11.68 &  10.00 & 1293.32 & 100.00  & C+G: 100  &  26 &  1.05 &  10.52 &  10.00 & 1382.92 & 100.00  & C+G: 100  &  39 &  1.04 &  10.40 &  10.00 & 1114.68 & 100.00  & C+G: 100  \\
		&   &   50&  17 &  1.08 &  14.04 &  13.00 & 1686.68 & 100.00  & C+G: 100  &  33 &  1.01 &  13.12 &  13.00 & 1716.36 & 100.00  & C+G: 100  &  49 &  1.00 &  13.04 &  13.00 & 1384.52 & 100.00  & C+G: 100  \\
		\hline 
		& 6 &   10&   2 &  1.00 &   2.00 &   2.00 & 1566.84 & 100.00  & C: 100  &   5 &  1.00 &   2.00 &   2.00 & 2579.20 & 100.00  & C: 100  &   9 &  1.00 &   2.00 &   2.00 & 1807.96 & 100.00  & C: 100  \\
		&   &   20&   4 &  1.00 &   4.00 &   4.00 & 2457.76 & 100.00  & C+G: 100  &  11 &  1.00 &   4.00 &   4.00 & 4664.52 & 100.00  & C+G: 100  &  19 &  1.00 &   4.00 &   4.00 & 2234.68 & 100.00  & C+G: 100  \\
		&   &   30&   6 &  1.18 &   5.92 &   5.00 & 4502.00 & 100.00  & C+G: 100  &  17 &  1.10 &   5.52 &   5.00 & 5777.28 & 100.00  & C+G: 100  &  29 &  1.08 &   5.40 &   5.00 & 4360.28 & 100.00  & C+G: 100  \\
		&   &   40&   8 &  1.11 &   7.80 &   7.00 & 3600.72 & 100.00  & C+G: 100  &  23 &  1.00 &   7.00 &   7.00 & 7919.16 & 100.00  & C+G: 100  &  39 &  1.00 &   7.00 &   7.00 & 3450.28 & 100.00  & C+G: 100  \\
		&   &   50&  10 &  1.06 &   9.52 &   9.00 & 4555.96 & 100.00  & C+G: 100  &  29 &  1.00 &   9.00 &   9.00 & 9303.60 & 100.00  & C+G: 100  &  49 &  1.00 &   9.00 &   9.00 & 4331.08 & 100.00  & C+G: 100  \\
		\hline 
		& 8 &   10&   2 &  1.00 &   2.00 &   2.00 & 6936.44 & 100.00  & C: 100  &   5 &  1.00 &   2.00 &   2.00 & 7883.68 & 100.00  & C: 100  &   9 &  1.00 &   2.00 &   2.00 & 3855.36 & 100.00  & C: 100  \\
		&   &   20&   3 &  1.00 &   3.00 &   3.00 & 12221.92 & 100.00  & C+G: 100  &  11 &  1.00 &   3.00 &   3.00 & 21752.96 & 100.00  & C+G: 100  &  19 &  1.00 &   3.00 &   3.00 & 9362.08 & 100.00  & C+G: 100  \\
		&   &   30&   5 &  1.05 &   4.20 &   4.00 & 13313.28 & 100.00  & C+G: 100  &  17 &  1.00 &   4.00 &   4.00 & 18278.64 & 100.00  & C+G: 100  &  29 &  1.00 &   4.00 &   4.00 & 11103.20 & 100.00  & C+G: 100  \\
		&   &   40&   6 &  1.19 &   5.96 &   5.00 & 7080.28 & 100.00  & C+G: 100  &  22 &  1.07 &   5.36 &   5.00 & 30312.92 & 100.00  & C+G: 100  &  39 &  1.06 &   5.28 &   5.00 & 8666.56 & 100.00  & C+G: 100  \\
		&   &   50&   7 &  1.00 &   7.00 &   7.00 & 14541.88 & 100.00  & C+G: 100  &  28 &  1.00 &   7.00 &   7.00 & 34259.92 & 100.00  & C+G: 100  &  49 &  1.00 &   7.00 &   7.00 & 17344.96 & 100.00  & C+G: 100  \\
		\hline 
		1.00 & 4 &   10&   3 &  1.00 &   3.00 &   3.00 & 362.52 & 100.00  & C+G: 100  &   6 &  1.00 &   3.00 &   3.00 & 477.24 & 100.00  & C+G: 100  &   9 &  1.00 &   3.00 &   3.00 & 387.68 & 100.00  & C+G: 100  \\
		&   &   20&   7 &  1.18 &   5.92 &   5.00 & 697.04 & 100.00  & C+G: 100  &  13 &  1.10 &   5.52 &   5.00 & 752.00 & 100.00  & C+G: 100  &  19 &  1.08 &   5.40 &   5.00 & 642.04 & 100.00  & C+G: 100  \\
		&   &   30&  10 &  1.09 &   8.68 &   8.00 & 1014.60 & 100.00  & C+G: 100  &  19 &  1.01 &   8.08 &   8.00 & 1170.04 & 100.00  & C+G: 100  &  29 &  1.00 &   8.00 &   8.00 & 807.64 & 100.00  & C+G: 100  \\
		&   &   40&  13 &  1.17 &  11.68 &  10.00 & 1285.68 & 100.00  & C+G: 100  &  26 &  1.05 &  10.52 &  10.00 & 1371.24 & 100.00  & C+G: 100  &  39 &  1.04 &  10.40 &  10.00 & 1123.52 & 100.00  & C+G: 100  \\
		&   &   50&  17 &  1.08 &  14.04 &  13.00 & 1679.68 & 100.00  & C+G: 100  &  33 &  1.01 &  13.12 &  13.00 & 1721.44 & 100.00  & C+G: 100  &  49 &  1.00 &  13.04 &  13.00 & 1380.00 & 100.00  & C+G: 100  \\
		\hline 
		& 6 &   10&   2 &  1.00 &   2.00 &   2.00 & 1560.36 & 100.00  & C: 100  &   5 &  1.00 &   2.00 &   2.00 & 2590.32 & 100.00  & C: 100  &   9 &  1.00 &   2.00 &   2.00 & 2098.36 & 100.00  & C: 100  \\
		&   &   20&   4 &  1.00 &   4.00 &   4.00 & 2618.64 & 100.00  & C+G: 100  &  11 &  1.00 &   4.00 &   4.00 & 5645.76 & 100.00  & C+G: 100  &  19 &  1.00 &   4.00 &   4.00 & 2861.28 & 100.00  & C+G: 100  \\
		&   &   30&   6 &  1.18 &   5.92 &   5.00 & 4673.00 & 100.00  & C+G: 100  &  17 &  1.10 &   5.52 &   5.00 & 6701.32 & 100.00  & C+G: 100  &  29 &  1.08 &   5.40 &   5.00 & 4025.40 & 100.00  & C+G: 100  \\
		&   &   40&   8 &  1.11 &   7.80 &   7.00 & 5763.88 & 100.00  & C+G: 100  &  23 &  1.00 &   7.00 &   7.00 & 7341.36 & 100.00  & C+G: 100  &  39 &  1.00 &   7.00 &   7.00 & 4389.72 & 100.00  & C+G: 100  \\
		&   &   50&  10 &  1.06 &   9.52 &   9.00 & 4778.28 & 100.00  & C+G: 100  &  29 &  1.00 &   9.00 &   9.00 & 9292.96 & 100.00  & C+G: 100  &  49 &  1.00 &   9.00 &   9.00 & 4324.64 & 100.00  & C+G: 100  \\
		\hline 
		& 8 &   10&   2 &  1.00 &   2.00 &   2.00 & 8520.16 & 100.00  & C: 100  &   5 &  1.00 &   2.00 &   2.00 & 9690.32 & 100.00  & C: 100  &   9 &  1.00 &   2.00 &   2.00 & 5182.36 & 100.00  & C: 100  \\
		&   &   20&   3 &  1.00 &   3.00 &   3.00 & 7363.08 & 100.00  & C+G: 100  &  11 &  1.00 &   3.00 &   3.00 & 20959.00 & 100.00  & C+G: 100  &  19 &  1.00 &   3.00 &   3.00 & 6295.80 & 100.00  & C+G: 100  \\
		&   &   30&   5 &  1.05 &   4.20 &   4.00 & 13229.32 & 100.00  & C+G: 100  &  17 &  1.00 &   4.00 &   4.00 & 17973.52 & 100.00  & C+G: 100  &  29 &  1.00 &   4.00 &   4.00 & 7362.28 & 100.00  & C+G: 100  \\
		&   &   40&   6 &  1.19 &   5.96 &   5.00 & 10187.92 & 100.00  & C+G: 100  &  22 &  1.07 &   5.36 &   5.00 & 26987.60 & 100.00  & C+G: 100  &  39 &  1.06 &   5.28 &   5.00 & 8656.16 & 100.00  & C+G: 100  \\
		&   &   50&   7 &  1.00 &   7.00 &   7.00 & 14291.24 & 100.00  & C+G: 100  &  28 &  1.00 &   7.00 &   7.00 & 33998.72 & 100.00  & C+G: 100  &  49 &  1.00 &   7.00 &   7.00 & 17385.04 & 100.00  & C+G: 100  \\
		\hline
	\end{tabular}

	\caption{
		Results on \PTAS. 
		For a given $\varepsilon, m, n$ and $b$ function the first value is the computation time.
		The second value is \% of instances solved by particular combination of algorithms.
		A lone letter $C$ means \EXACTCMAX, $C+T$ means by \EXACTCMAX and \TREE (applied due to the \NO answer from \EXACTCMAX); $C+G$ means \EXACTCMAX and \GREEDY. 
	}
	\label{tab:ptas}
\end{table}

Let us notice a few facts about the results presented in \Cref{tab:ptas}.
There are significant differences in running times, due to the mentioned peculiarities of the algorithms, but also due to invocation of distinct subalgorithms due to distinct combinations of $\varepsilon$ and $m$.
Our tests show that another optimization would be reasonable.
We should allow to return an early \NO answer from \EXACTCMAX: if for some part of block graph there is no schedule with $\cmaxcost$ smaller than the bound, there cannot be any such a schedule for the whole graph.
Hence, in fact, for a fixed $m$ and $\varepsilon$, the 'proper' behavior of running time function should be to be limited by some big constant value. 

Overall, since \GREEDY is very fast and has quite good approximation ratio, a decent heuristic in practical setting would be to run it along others and return its result if the other computations exceed predefined limits.
Or, even run it and compare the $\cmaxcost$ of the constructed schedule with the natural lower bound.
It seems that even for small $\varepsilon$, and big enough data, there is a good chance that the solution will meet the requirements.
 
\subsection{Empirical results on \Cref{alg:uniform_flow_network} -- \FLOW}

In this subsection we present empirical results of scheduling achieved by \Cref{alg:uniform_flow_network} -- discussed in \Cref{thm:uniform_block_unit_cut}, the exact algorithm for $Q|G = \blockgraph, \cutvertices(G) \le k, p_j = 1|\cmaxcost$ problem.
Our implementation (called \FLOW) uses Boost graph library to solve instances of maximum flow, in particular the Boykov-Kolmogorov algorithm from this library.
The algorithm \FLOW was executed for $m \in \{4,6,8\}$ uniform machines for $n \in \{5, \ldots, 50\}$ jobs.
\begin{table}[H]
	\newcolumntype{H}{>{\setbox0=\hbox\bgroup}c<{\egroup}@{}}
	\newcolumntype{F}{HHHrHH}
	\footnotesize
	\centering
	\begin{minipage}{.5\linewidth}
		\begin{tabular}{|c|c|F|F|F|}
					\hline
			\multicolumn{20}{|c|}{\TIMEOFCOMPUTATIONLABEL} \\
			\hline
			\hline
			$m$ & $n$  & \multicolumn{6}{c|}{$b_{\min}$}&\multicolumn{6}{c|}{$b_{\avg}$}&\multicolumn{6}{c|}{$b_{\max}$} \\
			\hline 
			4 &    5& 2.12 &    0.7 &    0.3 &  220.9 & 100.00 &   FLOW& 1.28 &    0.4 &    0.3 &  236.2 & 100.00 &   FLOW& 1.28 &    0.4 &    0.3 &  267.3 & 100.00 &   FLOW\\
			&   10& 3.14 &    2.0 &    0.6 & 1661.5 & 100.00 &   FLOW& 1.42 &    0.9 &    0.6 & 1018.8 & 100.00 &   FLOW& 1.28 &    0.8 &    0.6 & 1595.0 & 100.00 &   FLOW\\
			&   15& 2.18 &    2.0 &    0.9 & 8472.5 & 100.00 &   FLOW& 1.54 &    1.4 &    0.9 & 9345.1 & 100.00 &   FLOW& 1.07 &    1.0 &    0.9 & 15198.4 & 100.00 &   FLOW\\
			&   20& 2.40 &    3.0 &    1.2 & 30493.7 & 100.00 &   FLOW& 1.45 &    1.8 &    1.2 & 74451.4 & 100.00 &   FLOW& 1.12 &    1.4 &    1.2 & 497615.8 & 100.00 &   FLOW\\
			&   25& 2.69 &    4.2 &    1.6 & 105894.0 & 100.00 &   FLOW& 1.44 &    2.2 &    1.6 & 1253187.3 & 100.00 &   FLOW& 1.02 &    1.6 &    1.6 & 1780654.2 & 100.00 &   FLOW\\
			&   30& 2.39 &    4.5 &    1.9 & 537458.5 & 100.00 &   FLOW& 1.37 &    2.6 &    1.9 & 3270030.4 & 100.00 &   FLOW& 1.07 &    2.0 &    1.9 & 20978525.3 & 100.00 &   FLOW\\
			&   35& 2.41 &    5.3 &    2.2 & 2668535.2 & 100.00 &   FLOW& 1.24 &    2.7 &    2.2 & 28763959.4 & 100.00 &   FLOW&   T & $\infty$ &      0 &      T & 36.00 &   FLOW\\
			&   40& 2.61 &    6.5 &    2.5 & 5677564.4 & 100.00 &   FLOW&   T & $\infty$ &      0 &      T & 20.00 &   FLOW&   T & $\infty$ &      0 &      T & 8.00 &   FLOW\\
			&   45& 2.57 &    7.2 &    2.8 & 29771233.5 & 100.00 &   FLOW&   T & $\infty$ &      0 &      T & 20.00 &   FLOW&   T & $\infty$ &      0 &      T & 0 &   FLOW\\
			&   50& 2.61 &    8.2 &    3.1 & 129191637.0 & 100.00 &   FLOW&   T & $\infty$ &      0 &      T & 8.00 &   FLOW&   T & $\infty$ &      0 &      T & 0 &   FLOW\\
			\hline 
			6 &    5& 4.40 &    1.0 &    0.2 &  240.8 & 100.00 &   FLOW& 1.76 &    0.4 &    0.2 &  370.9 & 100.00 &   FLOW& 1.76 &    0.4 &    0.2 &  749.4 & 100.00 &   FLOW\\
			&   10& 2.20 &    1.0 &    0.5 & 1288.7 & 100.00 &   FLOW& 1.85 &    0.8 &    0.5 & 4275.3 & 100.00 &   FLOW& 1.32 &    0.6 &    0.5 & 11160.1 & 100.00 &   FLOW\\
			&   15& 2.93 &    2.0 &    0.7 & 6818.3 & 100.00 &   FLOW& 1.35 &    0.9 &    0.7 & 19892.1 & 100.00 &   FLOW& 1.17 &    0.8 &    0.7 & 423167.0 & 100.00 &   FLOW\\
			&   20& 2.68 &    2.4 &    0.9 & 86251.3 & 100.00 &   FLOW& 1.19 &    1.1 &    0.9 & 1703836.1 & 100.00 &   FLOW& 1.10 &    1.0 &    0.9 & 8450726.2 & 100.00 &   FLOW\\
			&   25& 2.78 &    3.2 &    1.1 & 452027.3 & 100.00 &   FLOW& 1.28 &    1.5 &    1.1 & 12484370.0 & 100.00 &   FLOW&   T & $\infty$ &      0 &      T & 60.00 &   FLOW\\
			&   30& 2.82 &    3.8 &    1.4 & 1160503.8 & 100.00 &   FLOW&   T & $\infty$ &      0 &      T & 60.00 &   FLOW&   T & $\infty$ &      0 &      T & 12.00 &   FLOW\\
			&   35& 2.87 &    4.6 &    1.6 & 4404040.2 & 100.00 &   FLOW&   T & $\infty$ &      0 &      T & 12.00 &   FLOW&   T & $\infty$ &      0 &      T & 0 &   FLOW\\
			&   40& 2.82 &    5.1 &    1.8 & 38533308.8 & 100.00 &   FLOW&   T & $\infty$ &      0 &      T & 0 &   FLOW&   T & $\infty$ &      0 &      T & 0 &   FLOW\\
			&   45&   T & $\infty$ &      0 &      T & 68.00 &   FLOW&   T & $\infty$ &      0 &      T & 0 &   FLOW&   T & $\infty$ &      0 &      T & 0 &   FLOW\\
			&   50&   T & $\infty$ &      0 &      T & 44.00 &   FLOW&   T & $\infty$ &      0 &      T & 0 &   FLOW&   T & $\infty$ &      0 &      T & 0 &   FLOW\\
			\hline 
			8 &    5& 1.16 &    0.2 &    0.2 &  175.7 & 100.00 &   FLOW& 1.16 &    0.2 &    0.2 &  205.2 & 100.00 &   FLOW& 1.16 &    0.2 &    0.2 &  311.7 & 100.00 &   FLOW\\
			&   10& 2.49 &    0.9 &    0.3 & 4514.8 & 100.00 &   FLOW& 1.28 &    0.4 &    0.3 & 11128.4 & 100.00 &   FLOW& 1.16 &    0.4 &    0.3 & 18260.3 & 100.00 &   FLOW\\
			&   15& 3.87 &    2.0 &    0.5 & 12912.4 & 100.00 &   FLOW& 1.35 &    0.7 &    0.5 & 210154.3 & 100.00 &   FLOW& 1.16 &    0.6 &    0.5 & 7056851.2 & 100.00 &   FLOW\\
			&   20& 2.87 &    2.0 &    0.7 & 78260.5 & 100.00 &   FLOW& 1.24 &    0.9 &    0.7 & 11081604.3 & 100.00 &   FLOW&   T & $\infty$ &      0 &      T & 36.00 &   FLOW\\
			&   25& 2.53 &    2.2 &    0.9 & 592010.4 & 100.00 &   FLOW& 1.16 &    1.0 &    0.9 & 45627315.2 & 100.00 &   FLOW&   T & $\infty$ &      0 &      T & 12.00 &   FLOW\\
			&   30& 2.40 &    2.5 &    1.0 & 2059662.1 & 100.00 &   FLOW&   T & $\infty$ &      0 &      T & 0 &   FLOW&   T & $\infty$ &      0 &      T & 0 &   FLOW\\
			&   35& 2.59 &    3.1 &    1.2 & 4460162.3 & 100.00 &   FLOW&   T & $\infty$ &      0 &      T & 0 &   FLOW&   T & $\infty$ &      0 &      T & 0 &   FLOW\\
			&   40& 2.52 &    3.5 &    1.4 & 14074302.0 & 100.00 &   FLOW&   T & $\infty$ &      0 &      T & 0 &   FLOW&   T & $\infty$ &      0 &      T & 0 &   FLOW\\
			&   45& 2.37 &    3.7 &    1.6 & 113356947.0 & 100.00 &   FLOW&   T & $\infty$ &      0 &      T & 0 &   FLOW&   T & $\infty$ &      0 &      T & 0 &   FLOW\\
			&   50& 2.83 &    4.9 &    1.7 & 220321546.7 & 100.00 &   FLOW&   T & $\infty$ &      0 &      T & 0 &   FLOW&   T & $\infty$ &      0 &      T & 0 &   FLOW\\
			\hline
		\end{tabular}
		\caption{\FLOW applied using Boykov-Kolmogorov algorithm.}
		\label{tab:bk5}
	\end{minipage}%
	\begin{minipage}{.5\linewidth}
		\begin{tabular}{|c|c|F|F|F|}
			\hline
			\multicolumn{20}{|c|}{\TIMEOFCOMPUTATIONLABEL} \\
			\hline
			\hline
			$m$ & $n$  & \multicolumn{6}{c|}{$b_{\min}$}&\multicolumn{6}{c|}{$b_{\avg}$}&\multicolumn{6}{c|}{$b_{\max}$} \\
			\hline 
			4 &    5& 2.25 &    0.2 &    0.1 &  661.3 & 100.00 &   FLOW& 1.31 &    0.1 &    0.1 &  856.8 & 100.00 &   FLOW& 1.31 &    0.1 &    0.1 & 1014.4 & 100.00 &   FLOW\\
			&   10& 3.38 &    0.5 &    0.1 & 4730.9 & 100.00 &   FLOW& 1.44 &    0.2 &    0.1 & 4711.9 & 100.00 &   FLOW& 1.20 &    0.2 &    0.1 & 11629.0 & 100.00 &   FLOW\\
			&   15& 2.35 &    0.5 &    0.2 & 16859.8 & 100.00 &   FLOW& 1.65 &    0.4 &    0.2 & 21350.0 & 100.00 &   FLOW& 1.10 &    0.2 &    0.2 & 32946.3 & 100.00 &   FLOW\\
			&   20& 2.59 &    0.8 &    0.3 & 59660.1 & 100.00 &   FLOW& 1.50 &    0.4 &    0.3 & 268096.4 & 100.00 &   FLOW& 1.05 &    0.3 &    0.3 & 2535209.9 & 100.00 &   FLOW\\
			&   25& 2.90 &    1.0 &    0.4 & 305569.8 & 100.00 &   FLOW& 1.53 &    0.6 &    0.4 & 3738761.0 & 100.00 &   FLOW& 1.05 &    0.4 &    0.4 & 12465548.0 & 100.00 &   FLOW\\
			&   30& 2.58 &    1.1 &    0.4 & 1117201.6 & 100.00 &   FLOW& 1.46 &    0.6 &    0.4 & 8011979.8 & 100.00 &   FLOW& 1.10 &    0.5 &    0.4 & 41983519.2 & 100.00 &   FLOW\\
			&   35& 2.60 &    1.3 &    0.5 & 5728300.3 & 100.00 &   FLOW& 1.31 &    0.7 &    0.5 & 77620135.1 & 100.00 &   FLOW&   T & $\infty$ &      0 &      T & 36.00 &   FLOW\\
			&   40& 2.81 &    1.6 &    0.6 & 11869819.1 & 100.00 &   FLOW&   T & $\infty$ &      0 &      T & 20.00 &   FLOW&   T & $\infty$ &      0 &      T & 0 &   FLOW\\
			&   45& 2.78 &    1.8 &    0.7 & 59703938.2 & 100.00 &   FLOW&   T & $\infty$ &      0 &      T & 8.00 &   FLOW&   T & $\infty$ &      0 &      T & 0 &   FLOW\\
			&   50&   T & $\infty$ &      0 &      T & 12.00 &   FLOW&   T & $\infty$ &      0 &      T & 0 &   FLOW&   T & $\infty$ &      0 &      T & 0 &   FLOW\\
			\hline 
			6 &    5& 4.90 &    0.2 &    0.1 &  852.3 & 100.00 &   FLOW& 1.57 &    0.1 &    0.1 & 1419.2 & 100.00 &   FLOW& 1.57 &    0.1 &    0.1 & 2865.9 & 100.00 &   FLOW\\
			&   10& 2.45 &    0.2 &    0.1 & 6893.9 & 100.00 &   FLOW& 1.99 &    0.2 &    0.1 & 17346.9 & 100.00 &   FLOW& 1.28 &    0.1 &    0.1 & 58074.2 & 100.00 &   FLOW\\
			&   15& 3.27 &    0.5 &    0.2 & 34862.8 & 100.00 &   FLOW& 1.48 &    0.2 &    0.2 & 63056.8 & 100.00 &   FLOW& 1.24 &    0.2 &    0.2 & 1087509.7 & 100.00 &   FLOW\\
			&   20& 2.99 &    0.6 &    0.2 & 287774.8 & 100.00 &   FLOW& 1.29 &    0.3 &    0.2 & 5932357.8 & 100.00 &   FLOW& 1.17 &    0.2 &    0.2 & 29078224.9 & 100.00 &   FLOW\\
			&   25& 3.10 &    0.8 &    0.3 & 1219885.9 & 100.00 &   FLOW& 1.37 &    0.4 &    0.3 & 58728861.5 & 100.00 &   FLOW&   T & $\infty$ &      0 &      T & 12.00 &   FLOW\\
			&   30& 3.14 &    1.0 &    0.3 & 3030384.4 & 100.00 &   FLOW&   T & $\infty$ &      0 &      T & 4.00 &   FLOW&   T & $\infty$ &      0 &      T & 0 &   FLOW\\
			&   35& 3.19 &    1.1 &    0.4 & 8554763.3 & 100.00 &   FLOW&   T & $\infty$ &      0 &      T & 0 &   FLOW&   T & $\infty$ &      0 &      T & 0 &   FLOW\\
			&   40& 3.14 &    1.3 &    0.4 & 123338982.2 & 100.00 &   FLOW&   T & $\infty$ &      0 &      T & 0 &   FLOW&   T & $\infty$ &      0 &      T & 0 &   FLOW\\
			&   45&   T & $\infty$ &      0 &      T & 60.00 &   FLOW&   T & $\infty$ &      0 &      T & 0 &   FLOW&   T & $\infty$ &      0 &      T & 0 &   FLOW\\
			&   50&   T & $\infty$ &      0 &      T & 4.00 &   FLOW&   T & $\infty$ &      0 &      T & 0 &   FLOW&   T & $\infty$ &      0 &      T & 0 &   FLOW\\
			\hline 
			8 &    5& 1.21 &    0.0 &    0.0 &  778.1 & 100.00 &   FLOW& 1.21 &    0.0 &    0.0 & 1156.6 & 100.00 &   FLOW& 1.21 &    0.0 &    0.0 & 2314.1 & 100.00 &   FLOW\\
			&   10& 2.88 &    0.2 &    0.1 & 16121.8 & 100.00 &   FLOW& 1.52 &    0.1 &    0.1 & 33627.0 & 100.00 &   FLOW& 1.21 &    0.1 &    0.1 & 118551.4 & 100.00 &   FLOW\\
			&   15& 4.23 &    0.5 &    0.1 & 33455.3 & 100.00 &   FLOW& 1.46 &    0.2 &    0.1 & 417418.0 & 100.00 &   FLOW& 1.21 &    0.1 &    0.1 & 12229761.4 & 100.00 &   FLOW\\
			&   20& 3.18 &    0.5 &    0.2 & 184903.6 & 100.00 &   FLOW& 1.32 &    0.2 &    0.2 & 34403618.2 & 100.00 &   FLOW&   T & $\infty$ &      0 &      T & 12.00 &   FLOW\\
			&   25& 2.84 &    0.6 &    0.2 & 1842825.8 & 100.00 &   FLOW&   T & $\infty$ &      0 &      T & 40.00 &   FLOW&   T & $\infty$ &      0 &      T & 0 &   FLOW\\
			&   30& 2.68 &    0.6 &    0.2 & 6570201.9 & 100.00 &   FLOW&   T & $\infty$ &      0 &      T & 0 &   FLOW&   T & $\infty$ &      0 &      T & 0 &   FLOW\\
			&   35& 2.84 &    0.8 &    0.3 & 14304703.9 & 100.00 &   FLOW&   T & $\infty$ &      0 &      T & 0 &   FLOW&   T & $\infty$ &      0 &      T & 0 &   FLOW\\
			&   40& 2.80 &    0.9 &    0.3 & 36371260.0 & 100.00 &   FLOW&   T & $\infty$ &      0 &      T & 0 &   FLOW&   T & $\infty$ &      0 &      T & 0 &   FLOW\\
			&   45&   T & $\infty$ &      0 &      T & 40.00 &   FLOW&   T & $\infty$ &      0 &      T & 0 &   FLOW&   T & $\infty$ &      0 &      T & 0 &   FLOW\\
			&   50&   T & $\infty$ &      0 &      T & 24.00 &   FLOW&   T & $\infty$ &      0 &      T & 0 &   FLOW&   T & $\infty$ &      0 &      T & 0 &   FLOW\\
			\hline
		\end{tabular}
		\caption{\FLOW applied using Boykov-Kolmogorov algorithm.}
		\label{tab:bk25}
	\end{minipage}%
\end{table}

Since the computations turned out time-consuming, we gave up examining graphs created for $n > 50$ vertices.  
Moreover, also due to time restrictions, we applied the algorithm for only a few sequences of speeds.
Namely, the speeds of machines generated for $s \in [1, 5]$, are $5,5,5,1$ for $4$ machines; $5,5,5,5,5,1,1$ for $6$ machines; and $5,5,5,5,5,2,1,1$ for $8$ machines.
And the speeds of the machines generated for $s \in [1, 25]$, are $23, 21, 21, 4$ for $f$ machines; $25, 23, 21, 21, 4, 4$ for $6$ machines; $25, 23, 23, 21, 21, 6, 4, 4$ for $8$ machines.

Due to the report in \cite{GoldbergHKTW11} that Boykov-Kolmogorov behaves badly for some instances of general flow problems (the algorithm is dedicated for applications in computer vision), we have checked also another flow algorithm presented in \cite{goldberg85}, which is also implemented in Boost.
As a side note, there exists an algorithm which has better asymptotic worst-case time complexity ($\Osymbol(mn)$) -- Orlin's Algorithm.
However, the library does not contain an implementation of it; moreover to the best of our knowledge the Orlin's Algorithm has very high constants time-constant, which implies its impracticality.

\begin{table}[H]
	\newcolumntype{H}{>{\setbox0=\hbox\bgroup}c<{\egroup}@{}}
	\newcolumntype{F}{HHHrHH}
	\footnotesize
	\centering
	\begin{tabular}{|c|c|F|F|F|}
		\hline
		\multicolumn{20}{|c|}{\TIMEOFCOMPUTATIONLABEL} \\
		\hline
		\hline
		
	    $m$	& $n$  & \multicolumn{6}{c|}{$b_{\min}$}&\multicolumn{6}{c|}{$b_{\avg}$}&\multicolumn{6}{c|}{$b_{\max}$} \\
		\hline 
		4 &    5& 2.25 &    0.2 &    0.1 &  807.6 & 100.00 &   FLOW& 1.31 &    0.1 &    0.1 & 1062.4 & 100.00 &   FLOW& 1.31 &    0.1 &    0.1 & 1261.4 & 100.00 &   FLOW\\
		&   10& 3.38 &    0.5 &    0.1 & 4688.3 & 100.00 &   FLOW& 1.44 &    0.2 &    0.1 & 4394.1 & 100.00 &   FLOW& 1.20 &    0.2 &    0.1 & 9023.2 & 100.00 &   FLOW\\
		&   15& 2.35 &    0.5 &    0.2 & 20882.7 & 100.00 &   FLOW& 1.65 &    0.4 &    0.2 & 26550.2 & 100.00 &   FLOW& 1.10 &    0.2 &    0.2 & 40292.6 & 100.00 &   FLOW\\
		&   20& 2.59 &    0.8 &    0.3 & 76475.4 & 100.00 &   FLOW& 1.50 &    0.4 &    0.3 & 327202.4 & 100.00 &   FLOW& 1.05 &    0.3 &    0.3 & 3138793.6 & 100.00 &   FLOW\\
		&   25& 2.90 &    1.0 &    0.4 & 386726.8 & 100.00 &   FLOW& 1.53 &    0.6 &    0.4 & 4433535.8 & 100.00 &   FLOW& 1.05 &    0.4 &    0.4 & 15197871.9 & 100.00 &   FLOW\\
		&   30& 2.58 &    1.1 &    0.4 & 1383803.2 & 100.00 &   FLOW& 1.46 &    0.6 &    0.4 & 9431244.6 & 100.00 &   FLOW& 1.10 &    0.5 &    0.4 & 45093214.0 & 100.00 &   FLOW\\
		&   35& 2.60 &    1.3 &    0.5 & 6986467.0 & 100.00 &   FLOW& 1.31 &    0.7 &    0.5 & 89669706.8 & 100.00 &   FLOW&   ? & $\infty$ &      0 &      T & 16.00 &   FLOW\\
		&   40& 2.81 &    1.6 &    0.6 & 14539747.6 & 100.00 &   FLOW&   ? & $\infty$ &      0 &      T & 20.00 &   FLOW&   ? & $\infty$ &      0 &      T & 0 &   FLOW\\
		&   45& 2.78 &    1.8 &    0.7 & 72666190.0 & 100.00 &   FLOW&   ? & $\infty$ &      0 &      T & 8.00 &   FLOW&   ? & $\infty$ &      0 &      T & 0 &   FLOW\\
		&   50&   ? & $\infty$ &      0 &      T & 12.00 &   FLOW&   ? & $\infty$ &      0 &      T & 0 &   FLOW&   ? & $\infty$ &      0 &     T & 0 &   FLOW\\
		\hline 
		6 &    5& 4.90 &    0.2 &    0.1 & 1001.3 & 100.00 &   FLOW& 1.57 &    0.1 &    0.1 & 1953.4 & 100.00 &   FLOW& 1.57 &    0.1 &    0.1 & 6692.3 & 100.00 &   FLOW\\
		&   10& 2.45 &    0.2 &    0.1 & 8131.9 & 100.00 &   FLOW& 1.99 &    0.2 &    0.1 & 25443.9 & 100.00 &   FLOW& 1.28 &    0.1 &    0.1 & 73486.8 & 100.00 &   FLOW\\
		&   15& 3.27 &    0.5 &    0.2 & 38638.9 & 100.00 &   FLOW& 1.48 &    0.2 &    0.2 & 77664.9 & 100.00 &   FLOW& 1.24 &    0.2 &    0.2 & 1345379.4 & 100.00 &   FLOW\\
		&   20& 2.99 &    0.6 &    0.2 & 338600.4 & 100.00 &   FLOW& 1.29 &    0.3 &    0.2 & 7253648.3 & 100.00 &   FLOW& 1.17 &    0.2 &    0.2 & 32377854.2 & 100.00 &   FLOW\\
		&   25& 3.10 &    0.8 &    0.3 & 1434680.4 & 100.00 &   FLOW& 1.37 &    0.4 &    0.3 & 70249216.6 & 100.00 &   FLOW&   ? & $\infty$ &      0 &      T & 12.00 &   FLOW\\
		&   30& 3.14 &    1.0 &    0.3 & 3544487.5 & 100.00 &   FLOW&   ? & $\infty$ &      0 &      T & 4.00 &   FLOW&   ? & $\infty$ &      0 &      T & 0 &   FLOW\\
		&   35& 3.19 &    1.1 &    0.4 & 9976071.1 & 100.00 &   FLOW&   ? & $\infty$ &      0 &      T & 0 &   FLOW&   ? & $\infty$ &      0 &      T & 0 &   FLOW\\
		&   40&   ? & $\infty$ &      0 &      T & 76.00 &   FLOW&   ? & $\infty$ &      0 &      T & 0 &   FLOW&   ? & $\infty$ &      0 &      T & 0 &   FLOW\\
		&   45&   ? & $\infty$ &      0 &      T & 60.00 &   FLOW&   ? & $\infty$ &      0 &      T & 0 &   FLOW&   ? & $\infty$ &      0 &      T & 0 &   FLOW\\
		&   50&   ? & $\infty$ &      0 &      T & 4.00 &   FLOW&   ? & $\infty$ &      0 &      T & 0 &   FLOW&   ? & $\infty$ &      0 &      T & 0 &   FLOW\\
		\hline 
		8 &    5& 1.21 &    0.0 &    0.0 &  890.9 & 100.00 &   FLOW& 1.21 &    0.0 &    0.0 & 1409.0 & 100.00 &   FLOW& 1.21 &    0.0 &    0.0 & 3079.2 & 100.00 &   FLOW\\
		&   10& 2.88 &    0.2 &    0.1 & 15368.8 & 100.00 &   FLOW& 1.52 &    0.1 &    0.1 & 40684.4 & 100.00 &   FLOW& 1.21 &    0.1 &    0.1 & 154451.8 & 100.00 &   FLOW\\
		&   15& 4.23 &    0.5 &    0.1 & 22989.2 & 100.00 &   FLOW& 1.46 &    0.2 &    0.1 & 502285.0 & 100.00 &   FLOW& 1.21 &    0.1 &    0.1 & 14915309.3 & 100.00 &   FLOW\\
		&   20& 3.18 &    0.5 &    0.2 & 202640.6 & 100.00 &   FLOW& 1.32 &    0.2 &    0.2 & 41045127.0 & 100.00 &   FLOW&   ? & $\infty$ &      0 &      T & 12.00 &   FLOW\\
		&   25& 2.84 &    0.6 &    0.2 & 2081356.1 & 100.00 &   FLOW&   ? & $\infty$ &      0 &      T & 40.00 &   FLOW&   ? & $\infty$ &      0 &      T & 0 &   FLOW\\
		&   30& 2.68 &    0.6 &    0.2 & 7415713.6 & 100.00 &   FLOW&   ? & $\infty$ &      0 &      T & 0 &   FLOW&   ? & $\infty$ &      0 &      T & 0 &   FLOW\\
		&   35& 2.84 &    0.8 &    0.3 & 16083488.5 & 100.00 &   FLOW&   ? & $\infty$ &      0 &      T & 0 &   FLOW&   ? & $\infty$ &      0 &      T & 0 &   FLOW\\
		&   40& 2.80 &    0.9 &    0.3 & 40060636.0 & 100.00 &   FLOW&   ? & $\infty$ &      0 &      T & 0 &   FLOW&   ? & $\infty$ &      0 &      T & 0 &   FLOW\\
		&   45&   ? & $\infty$ &      0 &      T & 40.00 &   FLOW&   ? & $\infty$ &      0 &      T & 0 &   FLOW&   ? & $\infty$ &      0 &      T & 0 &   FLOW\\
		&   50&   ? & $\infty$ &      0 &      T & 24.00 &   FLOW&   ? & $\infty$ &      0 &      T & 0 &   FLOW&   ? & $\infty$ &      0 &      T & 0 &   FLOW\\
		\hline
	\end{tabular}
	\caption{An application of \FLOW for the generated from $[1,25]$ speeds and Goldberg push-relabel algorithm}
	\label{tab:goldberg}
\end{table}
The first observation is that, the computations times for instances with $b_{\max}$ are greater than with $b_{\avg}$, which are greater than $b_{\min}$.
This is expected, more blocks are corresponding to  bigger numbers of cut-vertices, on average, hence, also numbers of assignments are greater, on average. 

The next observation is that there is a significant difference between running times for instances with speeds in $[1,5]$ and with speeds in $[1,25]$.
To analyze this, we have checked what are the average number of $\cmaxcost$ guesses.
We have analyzed this on an example with $b=b_{\min}=13$, $n=40$, $m=4$. 
For $s \in [1,5]$, the average number of guesses was $7.6$.
For $s \in [1,25]$, the average number of guesses was $16.48$.
After rescaling the processing times for $s \in [1,5]$ by $\frac{16.48}{7.6}$, the average difference between the rescaled time and time for $s \in [1,25]$ was less than $10\%$.
The different number of guesses can be explained by the fact that the maximum size of pool of guesses has size bounded between $n$ (all the machines have equal speeds) and $nm$ (all the machines have different speeds). 
This means that, to make the algorithm more practical, a more approximation oriented approach shall be considered.
For example, if the ratio between $LB$ (lower bound on optimal $\cmaxcost$) and $UB$ (upper bound on optimal $\cmaxcost$) is less than some, given by an user, $1+\varepsilon$, then perhaps the calculations can be ended a bit earlier.

\begin{table}[H]
	\newcolumntype{H}{>{\setbox0=\hbox\bgroup}c<{\egroup}@{}}
	\newcolumntype{F}{HHHrHH}
	\footnotesize
	\centering
	\begin{tabular}{|c|c|r|r|r|r|r|r|}
		\hline
		$m$   & speeds & $b_{\min}$ & \TIMEOFCOMPUTATIONLABEL & $\#Cut\textrm{-}Vertices$ &  $\#Assignments$ & $\#Flows$ tested                  & $\frac{\#\YES}{\#\NO}$ \\
		\hline
		$4$ & $5,5,5,1$      & $13$        &  $6795740.1$    &   $8.4$               & $2605936.8$                   &   $ 39373.6$              & $1.33$ \\
		$6$ & $5,5,5,5,1,1$      & $8$         &  $42390182.0$    &  $6.3$               & $1764089.8$                    &  $238880.8$          & $0.7$\\
		$8$ & $5,5,5,5,5,2,1,1$            &$6$         &  $26534304.7$    &  $4.9$               & $ 320237.0$                    &  $116314.0$ & $1.59$ \\
		\hline 
	\end{tabular}

	\caption{
		A more detailed statistics for a sample data $n=40$, $b_{\min}$. 
		The statistics were calculated for $100$ instances.
	}
	\label{tab:details_for_flows}
\end{table}
For $b_{\avg}$ and $b_{\max}$, when we increase number of machines, the processing times also increase; cf. \Cref{tab:bk5} and \Cref{tab:bk25}.
However, for $b_{\min}$, it is not the case, see \Cref{tab:details_for_flows}.
Observe that for $b_{\min}$ increasing the number of machines, decreases the number of blocks that has to be formed.
Hence the algorithm tries up to $4^{8.4}$, $6^{6.3}$, $8^{4.9}$ distinct assignments, respectively; $4^{8.4} >  6^{6.3} > 8^{4.9}$.
Another fact is that when we increase the number of machines, with constant bound on speeds and fixed number of vertices, the size of the set of $\cmaxcost$ candidates becomes stable; however, the \YES answer becomes more likely.
By \YES answer, we mean that for a given guess of $\cmaxcost$, there exists a schedule.
There is fundamental difference between \YES and \NO answer with respect to computations times.
To answer \YES, it is enough to find one valid assignment of cut-vertices and a schedule consistent with the assignment; to answer \NO, we have to check all the assignments. 
Also due to this, for $b_{\min}$, the number of assignments checked should decrease when the number of machines increase.

However, the assignments of cut vertices are verified with respect to guessed $\cmaxcost$ and incompatibility relation and only valid ones are used.
Hence, by increasing the number of machines, the number of \emph{valid} assignments should increase. 
On the other hand, due to more more prevalent \YES answers, the number of valid assignments checked could become lower.
Interestingly, the maximum average number of valid assignments checked was for $m=6$, which also corresponds to maximum average procesing time, and to minimum ratio of \YES to \NO answers.
We think that a deeper theoretical analysis and more measurements are needed to characterize the behaviour of the algorithm.
In particular an analysis how much more costly in time terms is \NO answer comparing to \YES answer.
If a difference is huge, then perhaps instead of bisection it would be better to from some point iterate over candidates for $\cmaxcost$ starting from the biggest one.

In our example, for $m=4$, $m=6$, $m=8$, on average $1.5\%$, $13.5\%$, $36.3\%$ of assignments were valid, respectively.
Hence, a proposed heuristics is to use greedy assignment / branch-and-bound algorithm to filter out more assignment of cut-vertices.
A assignment of even some cut-vertices gives a lower bound on the length of any schedule consistent with this assignment.
Hence, if it not better than the best $\cmaxcost$ found so far, there is no point in even considering this assignment any further.
Moreover, the assignment has to be consistent with the incompatibility relation.
There is not point to continue generating the assignments if a part of the assignment is not compatible with the relation. 

Finally, for the measured instances, Boykov-Kolmogorov was consistently faster, by between 10\% and 20\%, cf. \Cref{tab:bk25} and \Cref{tab:goldberg}

\section{Open problems}

Although we have explored the status of scheduling problems with conflict block graphs to some extent, there are still some related problems remaining to be solved.
For example, we are interested in decreasing an approximation factor of an algorithm for $P|G = \blockgraph|\cmaxcost$ or proving some inapproximability results.
It is also interesting to know whether there exists a PTAS for $Q|G = \blockgraph, p_j = 1|\cmaxcost$ or any constant factor approximation algorithm for $Q|G = \blockgraph|\cmaxcost$.
Other results in the area that is the subject of this paper would be also very welcomed.

\bibliographystyle{apalike}
\bibliography{blocks}

\end{document}